\documentclass[%
 reprint,superscriptaddress,
 amsmath,amssymb,
 aps,
]{revtex4-2}
\usepackage{simplewick}
\usepackage{algorithmic}
\usepackage{graphicx}
\usepackage{dcolumn}
\usepackage{bbold}
\usepackage{bm}
\usepackage{multirow}
\usepackage[table, xcdraw]{xcolor}
\usepackage{amssymb}
\usepackage{tabularx}   
\usepackage{mathtools,halloweenmath}
\usepackage{physics}
\usepackage{graphicx}
\usepackage{amsmath} 
\usepackage{amsthm} 
\usepackage[version=4]{mhchem}
\usepackage{siunitx}
\usepackage{longtable,tabularx}
\usepackage{braket}
\usepackage{wick}
\usepackage{float}
\usepackage{ulem}
\usepackage{comment}
\usepackage{hyperref} 
\usepackage{longtable}
\usepackage{enumitem}

\usepackage[ruled,lined]{algorithm2e} 

\usepackage{mathtools}

\newtheorem{theorem}{Theorem} 
\theoremstyle{remark} 
\newtheorem{remark}{Remark} 

\begin{document}
\preprint{APS/123-QED}
\title[]{Quantum linear solvers for quantum chemistry: prospects of exponential quantum advantage} 
\author{Peniel Bertrand Tsemo}
\email{peniel.tsemo.72@tcgcrest.org} 
\affiliation{Centre for Quantum Engineering, Research and Education, TCG CREST, Sector V, Salt Lake, Kolkata 700091, India}
\affiliation{Department of Physics, IIT Tirupati, Chindepalle, Andhra Pradesh 517619, India}
\author{Kenji Sugisaki}
\affiliation{Centre for Quantum Engineering, Research and Education, TCG CREST, Sector V, Salt Lake, Kolkata 700091, India}
\affiliation{Deloitte Tohmatsu LLC, 3-2-3 Marunouchi, Chiyoda-ku, Tokyo 100-8363, Japan}
\author{Ishita Bhattacharjee}
\affiliation{Indian Association for the Cultivation of Science, Kolkata 700032, India}
\affiliation{Donostia international physics centre, Manuel Lardizabal 4
E20018 Donostia / San Sebastián, Spain}
\author{V. S. Prasannaa} 
\email{srinivasaprasannaa@gmail.com}
\affiliation{Centre for Quantum Engineering, Research and Education, TCG CREST, Sector V, Salt Lake, Kolkata 700091, India}
\affiliation{Academy of Scientific and Innovative Research (AcSIR), Ghaziabad- 201002, India}

\begin{abstract} 
Quantum linear solvers (QLSs) can offer the potential for exponential quantum advantage in solving quantum chemical problems, but its assessment hinges on determining the condition number ($\kappa$) scaling, which itself is computationally challenging. While a recent work applied the Harrow-Hassidim-Lloyd (HHL) algorithm to single-reference linearized coupled cluster equations (SRLCC), the validity of the HHL-SRLCC framework is restricted to weakly correlated regimes. A general treatment requires a formulation that can access strongly correlated regions. We thus begin by extending the QLS-SRLCC framework to its multi-reference form, which is based on the internally contracted multi-reference LCC method (QLS-icMRLCC). We then analyze $\kappa$ scaling using three complementary diagnostics that range from explicit computations to use of indirect structural indicators: (i) direct calculations of $\kappa$, (ii) scaling of the ratio of maximum to minimum diagonal entries of an $A$ matrix, and (iii) structural analyses of the $A$ matrices based on a recently proposed conjecture, which we adapt to the QLS-LCC problem. The three approaches yield consistent predictions, indicating a polylogarithmic $\kappa$ scaling in system size. This finding, when combined with our arguments on sub-linear scaling of sparsity, supports the prospects of exponential advantage using QLSs for the LCC problem. Finally, numerical calculations on potential energy curves of model systems containing up to four atoms recover the ground state energies with errors relative to benchmark classical methods not exceeding $0.009\%$. 
\end{abstract} 

\maketitle 

\tableofcontents 

\section{Introduction}\label{sec:introduction} 

Quantum chemistry on quantum computers has emerged as one of the most actively pursued research areas in recent times in the field of quantum information processing, with significant progress driven by approaches such as the variational quantum eigensolver (VQE)~\cite{Peruzzo2014, TILLY20221, Harville2026} and the quantum phase estimation (QPE) algorithms \cite{kitaevqpe1995, QPE_Sethlloyd, Abrams1997, Cleve_1998}. While VQE was the method of choice until recently, QPE has made a strong and well-deserved comeback due to recent quantum hardware advancements and the scalable nature of the algorithm \cite{yamamoto2025,yamamoto_bayesian_qpe}. Such developments prompt a need for exploring new and efficient quantum algorithms that can broaden the scope of achieving quantum advantage in chemistry applications. 

In this context, the recent proposal of using the Harrow-Hassidim-Lloyd (HHL) algorithm \cite{hhl_harrow} to quantum chemistry \cite{Nishanth2023} provides a promising framework for solving the single-reference linearized coupled cluster (SRLCC) equations. HHL can, in principle, offer an exponential advantage in runtime over its classical counterparts. Since the SRLCC equations can be recast as a system of linear equations, $A\vec{x}=\vec{b}$, where $A$ and $\vec{b}$ are built out of Hamiltonian matrix elements between Slater determinants, and are supplied as inputs to HHL, while the solution vector, $\vec{x}$, contains the cluster amplitudes, they are naturally amenable to be solved using quantum linear solvers (QLSs). Furthermore, the need to read the full solution vector is avoided by appending a quantum circuit at the end of the HHL circuit to extract correlation energies. By working with the normal ordered Hamiltonian, the approach outputs correlation energies directly rather than total energies, making it attractive for quantum chemistry applications, where the correlation energy is a small fraction of the total energy, and is thus challenging to resolve on noisy quantum devices. 

Despite the work on connecting HHL and SRLCC being a first step in the direction of using QLSs in quantum chemistry, two challenges remain. While the SRLCC equations can capture weak correlation effects (dynamical correlations), one requires a multi-reference treatment (an MRLCC formalism) to capture \textit{static correlation} effects in order to yield sufficiently accurate results in situations such as making and breaking of covalent chemical bonds. The first step to use QLSs for chemistry would require proposing a QLS-MRLCC framework alongside QLS-SRLCC. The second challenge pertains to the central question of whether or not one needs a QLS at all for the LCC problem (be it SR or MR), that is, the question of whether QLS-LCC is expected to offer a quantum advantage. Addressing this question is as hard as it is important; any attempt to answer the question requires computing the condition number ($\kappa$) scaling of $A$ with system size, which in itself is not a classically easy problem since one needs to learn the eigenvalues of $A$ to find $\kappa$. The current work focuses on answering both of these non-trivial questions. 

With regard to the first question, although there has been significant developments in the field of multi-reference coupled cluster theories \cite{Mukherjee01041977, JezioMonkhorst1668, MUKHERJEE1986207, MUKHOPADHYAY1992236, Kirtman1981} and therefore a wide variety of theories to choose from for our purpose, we pick the internally contracted MRLCC (icMRLCC) approach \cite{Ajit1982, Evangelista2011, Andreas2011} as the method of choice for addressing strong correlation effects, as its governing equations naturally admit a linear system formulation. 

As far as the second question is concerned, existing studies that assess quantum advantage with HHL typically do so by numerically studying $\kappa$ scaling for representative problem instances. In fact, with the notable exception of HHL-SRLCC, where the authors find a polylogarithmic $\kappa$ scaling with system size \cite{Peniel2025}, the rest of the applications report at least polynomial in system size growths of $\kappa$ \cite{Jin_2022, Golden2022, Gopal2024, Yalovetzky2024, pareek_2026}. A recent work that involved a large scale numerical survey on graph families also reaches a similar conclusion, suggesting that favourable $\kappa$ scaling and thus quantum advantage with QLSs is not common \cite{shetty2025}. 

While the encouraging results on slow $\kappa$ growth from Ref. \cite{Peniel2025} motivate a deeper analysis of prospects of advantage in QLS-LCC, the difficulty of generalizing beyond limited problem instances led us to employ three different methods ranging from direct (yet computationally costly) to indirect (but computationally cheap) determination of $\kappa$ scaling over some representative molecular systems and compare results from them: (i) explicit computations of condition number scaling with system size, (ii) numerically checking the result of our derivation of an upper bound for $\kappa$ of real symmetric matrices (the ones we encounter in the LCC problem), and (iii) use of the recently proposed \cite{shetty2025} edge spawning conjecture to assess if $\kappa$ growth is slow (favourable for advantage) or fast (detrimental to advantage). Our analysis shows that these three independent approaches yield consistent predictions, indicating a polylogarithmic growth of $\kappa$ thus supporting prospects of an exponential advantage when combined with our results for sparsity scaling. We note that our analysis of advantage focuses solely on the QLS and the feature extraction steps of an end-to-end workflow; for completeness, we also state the classical pre-processing costs involved in constructing $A$ and $\vec{b}$ as an identification of steps to quantize in the future for realizing an end-to-end advantage. 

To complement the scaling analysis, we also perform proof-of-concept numerical HHL-SRLCC and HHL-icMRLCC simulations to determine ground state energies of model molecular systems containing up to four atoms. Our simulations capture ground state energies to within an error $0.009\%$ of benchmark classical methods, demonstrating the accuracy of the approaches. 

The subsequent sections are ordered as follows: Section \ref{sec:theory} presents the relevant theoretical framework by discussing QLS-SRLCC and QLS-icMRLCC. Section \ref{sec:scaling} discusses the scaling for both QLS-SRLCC and QLS-icMRLCC, with major focus on $\kappa$ scaling using the three methods mentioned above, while also deriving sparsity scaling, system size growth, the two-qubit gate counts and $T$ counts, as well as pre- and post-processing costs followed by their impact on the end-to-end QLS-LCC workflow. Section \ref{sec:calc} presents proof-of-concept numerical simulations. We conclude in Section \ref{sec:conclusion}. Fig. \ref{fig1} presents the overall structure of the work. Since the work involves several abbreviations, we have listed them in alphabetical order in Table \ref{tab:acronyms} of the Appendix. 

\begin{figure*}[t] 
\centering 
    \includegraphics[scale=0.27]{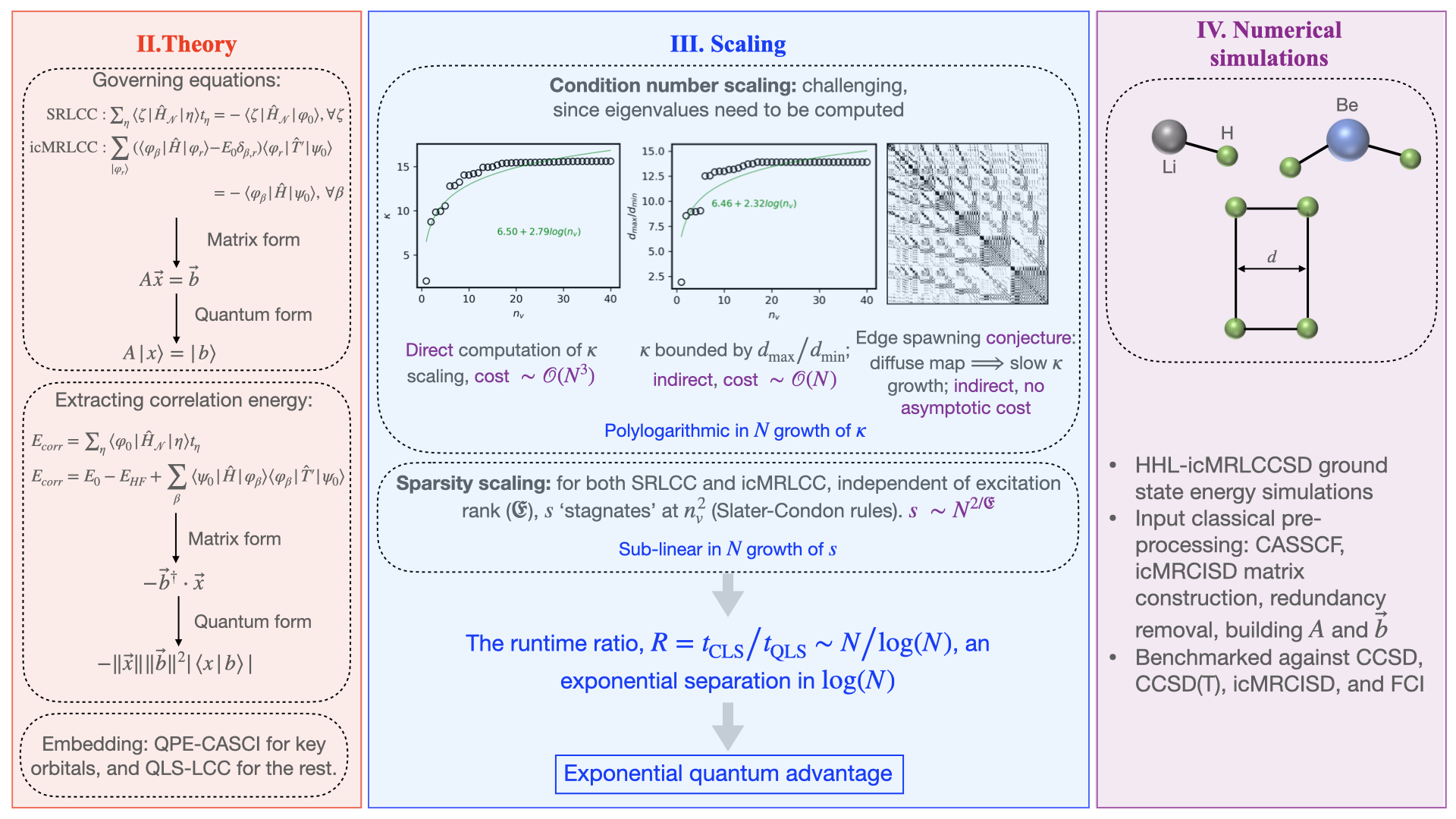} 
\caption{Overview of the present work and organization (Section \ref{sec:theory} onwards). Section \ref{sec:theory} derives the working equations of single- and internally contracted multi-reference linearized coupled cluster theories, abbreviated as SRLCC and icMRLCC respectively, reformulates them for quantum linear solvers (QLSs), and proposes embedding quantum phase estimation-complete active space configuration interaction (QPE-CASCI) in QLS-LCC for large molecular calculations. Section \ref{sec:scaling} introduces three diagnostics for determining the condition number, $\kappa$, of the $A$ matrix, which we find to scale as a polylogarithmic function of system size, $N$. We also show the expression for the sparsity, $s$, to be sub-linear in $N$, and combine our findings on $\kappa$ and $s$ scaling to discuss the resulting prospects of exponential advantage relative to conjugate gradient (CG). Section \ref{sec:calc} gives our numerical HHL simulation results, and benchmark them with CASSCF (complete active space self-consistent field), icMRCI (internally contracted multi-reference configuration interaction), CCSD (coupled cluster in the singles and doubles approximation), and CCSD(T) (CCSD with inclusion of perturbative triples). $\hat{H}_{\mathcal{N}}$ is the normal ordered Hamiltonian, $\ket{\eta}$ and $\ket{\zeta}$ are excited Slater determinants due to particle-hole excitations that arise from the reference state, $\ket{\varphi_0}$, which is the mean field state for SRLCC (or an entangled state, $\ket{\psi_0}$, for icMRLCC (not shown explicitly in the figure for convenience)). } \label{fig1} 
\end{figure*} 

\section{Theory}\label{sec:theory} 

\subsection{Quantum linear solvers}\label{subsec:qls} 

QLSs are scalable quantum algorithms that `solve' for a system of linear equations under certain assumptions, by preparing a quantum state proportional to the solution vector. The problem of solving $A\vec{x}=\vec{b}$ (with $A$ being a known $(N \times N)$ matrix, $\vec{x}$ a $N \times 1$ unknown vector that we want to find, and $\vec{b}$ an $N \times 1$ known vector) is changed to `solving' $A \ket{x}=\ket{b}$, where the matrix $A$ and the quantum state $\ket{b} \in \mathcal{H}^{n_b}$ are inputs, and a QLS protocol prepares an approximation to $\ket{x}$. Here, the $n_b$-qubit state $\ket{b}$ contains information on $\vec{b}$, for example, one could amplitude encode the latter and normalize the resulting state to obtain the former quantity. Often, it is assumed that the matrix elements of $A$ can be loaded efficiently, and an efficient quantum circuit prepares $\ket{b}$ from $\ket{0^{\otimes n_b}}$. Throughout, we too will work under these two standard assumptions. 

In this work, since our primary interest lies in the runtime complexities of the QLSs for chemistry, we consider two quantum linear solvers: HHL, which is the prototypical scalable QLS whose runtime scales as 

\begin{eqnarray}
t_{HHL} \sim \mathcal{O}\textbf{\bigg(}\mathrm{log}(N)s(N)^2 \kappa(N)^3\frac{1}{\epsilon(N)}\textbf{\bigg)}, \label{eq:thhl} 
\end{eqnarray}

and the near-optimal in $\kappa(N)$ and $1/\epsilon(N)$ Childs-Kothari-Somma (CKS) algorithm \cite{Childs_2017}, whose runtime scales as 

\begin{eqnarray}
t_{CKS} \sim \mathcal{O}\textbf{\bigg(}\mathrm{log}(N)s (N) \kappa(N) \nonumber \\ \mathrm{polylog}\bigg(s(N) \kappa(N) \frac{1}{\epsilon(N)}\bigg)\textbf{\bigg)}. \label{eq:tcks} 
\end{eqnarray}

In the above equations, $
\kappa(A) = \left|\frac{\lambda_{\max}(A)}{\lambda_{\min}(A)}\right |
$ is the condition number of $A$, where $\lambda_{\max}(A)$ and $\lambda_{\min}(A)$ are the maximum and minimum eigenvalues of $A$. $s$ is the sparsity of $A$, and is defined as the maximum number of non-zero elements per row or column of $A$. $\epsilon$ is the additive error resulting from inadequately representing the eigenvalues of $A$ in their binary form. The best known classical approach is the conjugate gradient (CG) method, which scales as 

\begin{eqnarray} 
t_{CG} \sim \mathcal{O}\textbf{\bigg(}Ns(N) \sqrt{\kappa(N)}\mathrm{log}\bigg(\frac{1}{\epsilon(N)}\bigg)\textbf{\bigg)}. \label{eq:tcg} 
\end{eqnarray}

The CG method suffices for our purposes, since the LCC $A$ matrices are always Hermitian. We note that all of the runtimes we consider are asymptotic, and they exclude prefactors. Analyzing the effects of prefactors in practical computations is outside the scope of the current study. 

Finally, we define the runtime ratio, 

\begin{eqnarray} 
R(N; QLS) := t_{CG} \big/ t_{QLS}, 
\end{eqnarray} 

where the QLS can either be HHL or CKS. Henceforth, we will drop mentioning the explicit dependence of $\kappa$, $s$, and $\epsilon$ on $N$ for brevity, but the above expressions show that an understanding of the scaling of these quantities with system size, $N$, is necessary to understand if we will realize an advantage with the quantum solvers over the classical one. We will revisit this central point in Section \ref{sec:scaling} in the context of applying QLSs for chemistry. 

Since we shall be using the HHL algorithm for our numerical simulations later on, we briefly explain the HHL algorithm in Section \ref{app:hhl} of the Appendix. We direct the interested readers to Ref. \cite{morales_qlss2025} for a detailed review on QLSs, including HHL. 

\subsection{Single-reference linearized coupled cluster method with QLS}\label{subsec:qls-srlcc} 

\begin{center}
    \textit{a. SRLCCSD amplitude equations} 
\end{center} 

We now derive the SRLCCSD equations (SRLCC in the singles and doubles approximation) for illustration; the same framework extends in a straightforward manner to higher excitation ranks. We begin with the Schrödinger equation: 

\begin{equation}
    \hat{H} \ket{\psi} = E \ket{\psi}\label{eq:SE}. 
\end{equation} 

$\hat{H}$ is the electronic structure Hamiltonian in our case. We choose the exponential coupled cluster ansatz for the wave function, $\ket{\psi} = e^{\hat{T}} \ket{\varphi_0}$, where $\ket{\varphi_0}$ is the Hartree-Fock (HF) state and $\hat{T}=\hat{T}_1 + \hat{T}_2$ for the CCSD approximation. $\hat{T}_1$ and $\hat{T}_2$ are cluster operators that respectively create singly and doubly excited functions upon acting on $\ket{\varphi_0}$. We could include higher rank excitations, such as triples and quadruples (CCSDT and CCSDTQ) too, but usually CCSD suffices to predict molecular properties reasonably well. By substituting the form of the CCSD ansatz in Eq. (\ref{eq:SE}), multiplying both sides of the resulting equation by $e^{-\hat{T}}$,  
using the Baker-Campbell-Haussdorf (BCH) commutator formula to only first order in $\hat{T}$, and expressing $\hat{H}$ in its normal-ordered form, $\hat{H}_{\mathcal{N}}$, where $\hat{H} =\hat{H}_{\mathcal{N}} +  E_{\text{HF}}$ with $E_{\text{HF}} = \bra{\varphi_0} \hat{H} \ket{\varphi_0}$, we find: 

\begin{align}
(\hat{H}_{\mathcal{N}} + E_{HF} + (\hat{H}_{\mathcal{N}} \hat{T})_c) \ket{\varphi_0}= E\ket{\varphi_0}, \label{eq:intermed1}
\end{align} 

which, after projecting both sides to the various one-body and two-body excited functions, $\bra{\chi_{l}}$, yields the so-called SRLCCSD amplitude equations: 

\begin{equation}
\boxed{
\begin{aligned}
\sum_{1n} \bra{\chi_{l}} \hat{H}_{\mathcal{N}} \ket{\chi_{1n}} t_{1n}
&+ \sum_{2m} \bra{\chi_{l}} \hat{H}_{\mathcal{N}} \ket{\chi_{2m}} t_{2m} \\
&= -\bra{\chi_{l}} \hat{H}_{\mathcal{N}}\ket{\varphi_0}, \quad \forall l.
\end{aligned}
}
\label{eq:srlccsdampeqs}
\end{equation} 

The subscripts `1' and `2' indicate singly and doubly excited functions arising out of $\ket{\varphi_0}$. Furthermore, $\ket{\chi_{1n}}$ and $\ket{\chi_{2m}} \in \{\ket{\chi_{l}}\}$. 

\begin{center}
    \textit{b. SRLCCSD correlation energy} 
\end{center} 

The correlation energy, $E^{SR}_{\text{corr}} = E^{\text{SR}} - E_{\text{HF}}$, is calculated by projecting both sides of Eq. (\ref{eq:intermed1}) onto $\bra{\varphi_0}$ and considering $\bra{\varphi_0} \hat{H}_{\mathcal{N}} \ket{\varphi_0}=0$ as: 

\begin{equation}
\boxed{
\begin{aligned}
E^{SR}_{\text{corr}}
&= \bra{\varphi_0} (\hat{H}_{\mathcal{N}}\hat{T})_{\text{c}} \ket{\varphi_0} \\
&= \bra{\varphi_0} \left(
\sum_{1n}  \hat{H}_{\mathcal{N}} \ket{\chi_{1n}} t_{1n}
+ \sum_{2m} \hat{H}_{\mathcal{N}} \ket{\chi_{2m}} t_{2m}
\right).
\end{aligned}
}
\label{eq:SRLCCEcorr}
\end{equation} 

Furthermore, the expression for the total energy is 

\begin{equation}
\boxed{
\begin{aligned}
E^{SR}
&= \bra{\varphi_0} \left(
\sum_{1n}  \hat{H}_{\mathcal{N}} \ket{\chi_{1n}} t_{1n}
+ \sum_{2m} \hat{H}_{\mathcal{N}} \ket{\chi_{2m}} t_{2m} 
\right) + E_{HF}.
\end{aligned}
}
\label{eq:SRLCCEcorr}
\end{equation} 
\begin{center}
    \textit{c. Connection between QLS and SRLCC} 
\end{center} 

The SRLCCSD amplitude equations given in Eq. (\ref{eq:srlccsdampeqs}) can be compactly written as: 

\begin{equation}
    \sum_{\eta} \bra{\zeta} \hat{H}_{\mathcal{N}} \ket{\eta} t_\eta = -\bra{\zeta} \hat{H}_{\mathcal{N}} \ket{\varphi_0}, \forall \zeta. \label{eq:srlccsdlin}
\end{equation} 

Since the number of indices in $\eta$ and $\zeta$ match, the left hand side of Eq. (\ref{eq:srlccsdlin}) represents the action of a square matrix ($A$) on a column vector ($\vec{x}$ composed of amplitudes $ t_\eta$s, which we solve for), resulting in another column vector ($\vec{b}$), that is, a system of linear equations. 

Analogously, the SRLCCSD correlation energy expression in Eq. (\ref{eq:SRLCCEcorr}) can also be written in the following compact manner: 

\begin{equation}
 E^{SR}_{\text{corr}} =\sum_{\eta}  \bra{\varphi_0} \hat{H}_{\mathcal{N}} \ket{\eta} t_\eta. \label{eq8}
\end{equation} 

The amplitudes $t_\eta$ are obtained from $\vec{x}$ by $A^{-1} \vec{b}$ and the terms $\bra{\varphi_0} \hat{H}_{\mathcal{N}} \ket{\eta}$ are elements of $-\vec{b}^{\dagger}$. As a result, the expression of $E^{SR}_{\text{corr}}$ is: 

\begin{equation}
    E^{SR}_{\text{corr}} = -\vec{b}^\dagger A^{-1} \vec{b} = -\vec{b}^\dagger \cdot \vec{x}. 
\end{equation} 

This form for the correlation energy, expressed as an overlap between the input and output vectors, makes it straightforward to extract the quantity. For our numerical calculations, we use the Hong-Ou-Mandel (HOM) circuit \cite{swaptestHOM}, an ancilla-free version of the CSWAP test \cite{Buhrman2001}, but which involves measuring the state registers, to evaluate the absolute value of the overlap between $\ket{b}$ and $\ket{x}$. The HOM circuit computes $E^{SR}_{\text{corr}}$ as 

\begin{equation}
E^{SR}_{\text{corr}}= - \|\vec{x}\| \|\vec{b}\|^2 |\langle x | b \rangle|, \label{eq21}
\end{equation} 
where $|\langle x | b \rangle|$ is calculated from $
   | \langle x | b \rangle|^2 = \sum_{\alpha, \beta \in \{0,1\}^{n_b}} (-1)^{\alpha \cdot \beta} P(\alpha \cdot \beta)$. $\|\vec{x}\| \|\vec{b}\|^2$ appears because of normalization factors coming from $\ket{x}$ and $\ket{b}$. $\alpha \cdot \beta$ and $P(\alpha \cdot \beta)$ are respectively the bit-wise AND operation between the two $n_b$-length bitstrings obtained after measuring the qubits involved in the HOM circuit and the probability of obtaining $\alpha \cdot \beta$. 

\subsection{Internally contracted multi-reference linearized coupled cluster method with QLS}\label{subsec:qls-mrlcc} 

Just as in the single-reference case, we will only derive the working equations for icMRLCCSD \cite{Shamasundat2011, Evangelista2011, Andreas2011}, and the underlying ideas discussed can be carried over to situations where higher rank excitations are included. 

\begin{center}
    \textit{a. The icMRLCCSD ansatz} 
\end{center} 

The ansatz for the icMRCCSD case is similar in form to the single-reference counterpart, and is given as: 

\begin{equation}
\ket{\Psi} = e^{\hat{T'}} \ket{\psi_{0}}\label{eq1yoyo}. 
\end{equation} 

We immediately see the striking similarity in form with the single-reference case (which is also what motivated us to pick this variant of MRLCC equations to solve using QLS), except that the reference state is now no longer the HF state, $\ket{\varphi_0}$, but $\ket{\psi_{0}}$, which is a linear combination of many determinants. We shall expound on this aspect as well as on the operator $\hat{T'}$ before delving into the amplitude equations and the expression for correlation energy. 

\begin{center}
    \textit{b. The reference state} 
\end{center} 

We expand on the details of the reference state: 

\begin{enumerate} 
\item We start by partitioning the orbital space in the following way: 
\begin{itemize}
\item core orbitals, which are always doubly occupied in all reference determinants (indexed by $i, j, \cdots$), 
\item active orbitals, which have partial occupancies (indexed by $r,s, \cdots $), and 
\item virtual orbitals, which are always empty in all reference determinants ($a,b, \cdots $). 
\end{itemize} 
We will also refer to core and virtual orbitals as non-active orbitals. 
\item We enumerate all possible configurations in the space of active orbitals. They form the set of reference determinants $\{\ket{\varphi_\alpha}\}$. We define the two idempotent, self-adjoint, and mutually exclusive projection operators that we shall use subsequently, $\hat{U}$ and $\hat{V}$, 
where $\hat{U} = \sum_\alpha U_\alpha= \sum_\alpha \ket{\varphi_\alpha} \bra{\varphi_\alpha}$ and $\hat{V} = \mathbb{1} - \hat{U} = \sum_\beta \ket{\varphi_\beta} \bra{\varphi_\beta}.$ $\{\ket{\varphi_\alpha}\}$ and $\{\ket{\varphi_\beta}\}$ are orthonormal configurations spanning the spaces $\mathcal{U}$ and $\mathcal{V}$ of the reference determinants and excited functions respectively. Furthermore, we note that an arbitrary state in $\mathcal{U}$ can be expressed as $\ket{\psi_\gamma^0} = \sum_{\alpha} C_{\alpha\gamma}\ket{\varphi_\alpha}$. 
\item We perform a complete active space self-consistent field (CASSCF) calculation to identify the reference determinants in $\mathcal{U}$. 
\item Our reference function, $\ket{\psi_{\text{0}}}$, is taken to be $\ket{\psi_1^0} = \sum_\alpha C_{\alpha 1} \ket{\varphi_\alpha} = \frac{1}{N_a!} \sum_\alpha C_{\alpha}  \ket{\varphi_\alpha}$, which is the state corresponding to the lowest eigenvalue obtained from the CASSCF calculation. $N_a$ here represents the number of active electrons. 
\end{enumerate} 

\begin{center}
    \textit{c. The excitation operator} 
\end{center} 

Having defined the reference state, we now return to Eq. (\ref{eq1yoyo}) to describe the excitation operator, $\hat{T'}$. In the icMRCCSD approach, it is given as $\hat{T'} = \hat{T'_1} + \hat{T'_2} = \sum_{IA} t_I^A \hat{a}_A^\dagger \hat{a}_I + \frac{1}{4}\sum_{IJAB} t_{IJ}^{AB} \hat{a}_A^\dagger \hat{a}_B^\dagger \hat{a}_I \hat{a}_J$, where $I$ and $J$ run over the union of core and active spin-orbitals, while $A$ and $B$ run over the union of virtual and active spin-orbitals. We denote the number of active spin-orbitals by $n_a$ and the number of virtual spin-orbitals by $n_v$. We omit active-to-active excitations \cite{Andreas2011} by construction and keep the coefficients in the CASSCF wavefunction expansion fixed. 

\begin{center}
    \textit{d. icMRLCCSD amplitude equations} 
\end{center} 

By substituting the ansatz from Eq. (\ref{eq1yoyo}) in Eq. (\ref{eq:SE}), we arrive at: 
\begin{equation}
\hat{H} e^{\hat{T'}}\ket{\psi_{0}} = E e^{\hat{T'}} \ket{\psi_{0}}. \label{eq10}
\end{equation} 

We pre-multiply both sides of Eq. (\ref{eq10}) with $e^{-\hat{T'}}$, followed by projection of both sides of the equation onto each function contained in the space $\mathcal{V}$, to obtain the following equations: 

\begin{equation}
    \bra{\varphi_\beta} e^{-\hat{T'}} \hat{H} e^{\hat{T'}} \ket{\psi_{0}} = 0 ,\forall \beta. \label{eq11}
\end{equation} 

We immediately see that some excitation operators in $\hat{T'}$ could lead to the same excited function when applied to different determinants that constitute $\ket{\psi_{0}}$. For illustrating this redundancy problem, we consider the example of $\ket{\psi_{0}}$ being a linear combination of only two reference determinants $\ket{\mathrm{(core)}r}$ and $\ket{\mathrm{(core)}s}$, $\ket{\psi_{0}} = c_r \ket{\mathrm{(core)}r} + c_s \ket{\mathrm{(core)}s}$, where $\mathrm{(core)}$ signifies a set of core orbitals which is common to all reference determinants. $r$ and $s$ are singly occupied active orbitals. The action of excitation operators $\hat{a}_t^\dagger \hat{a}_r$ and $\hat{a}_t^\dagger \hat{a}_s$ on $\ket{\psi_{0}}$ lead to the same excitation function $\mathrm{\ket{(core)t}}$. That is, both $\hat{a}_t^\dagger \hat{a}_s$ and $\hat{a}_t^\dagger \hat{a}_r$ are redundant operators, meaning they generate linearly dependent excitation functions ($(c_s \hat{a}_t^\dagger \hat{a}_r - c_r\hat{a}_t^\dagger \hat{a}_s) \ket{\psi_0} = 0$). We project Eq. \ref{eq11} only on linearly independent excitation functions, $\{\ket{\varphi_\beta}\}$. This requires all redundant excited functions to be removed. There is more than one way to remove redundancy, and we discuss the approach that we follow in Section \ref{subsec:preproc-scaling}. 

Upon expanding $e^{-\hat{T'}} \hat{H} e^{\hat{T'}}$ using the BCH commutator formula and retaining only the terms that are linear in $\hat{T'}$, we obtain $e^{-\hat{T'}} \hat{H} e^{\hat{T'}} = \hat{H} + [\hat{H}, \hat{T'}]$. Upon substituting the expression in Eq. \ref{eq11}, we get: 

\begin{equation}
    \bra{\varphi_\beta} \hat{H} + [\hat{H}, \hat{T'}] \ket{\psi_{0}} = 0, \forall \beta. 
\end{equation} 

Next, we insert the resolution of identity $\mathbb{1} = \sum_{\ket{\psi^0_\gamma} \in \mathcal{U}} \ket{\psi^0_\gamma}\bra{\psi^0_\gamma} +  \sum_{\ket{\varphi_r} \in \mathcal{V}} \ket{\varphi_r}\bra{\varphi_r}$ in the previous equation and get the following equation: 

\begin{eqnarray}
\bra{\varphi_\beta} \hat{H} \ket{\psi_{0}} &+& \sum_{\ket{\psi^0_\gamma} \in \mathcal{U}}  \bra{\varphi_\beta} \hat{H} \ket{\psi^0_\gamma}\bra{\psi^0_\gamma} \hat{T'} \ket{\psi_{0}} \nonumber\\
&+& \sum_{\ket{\varphi_r} \in \mathcal{V}} \bra{\varphi_\beta} \hat{H} \ket{\varphi_r} \bra{\varphi_r} \hat{T'} \ket{\psi_{0}}  \nonumber \\
&-&  \sum_{\ket{\psi^0_\gamma} \in \mathcal{U}}  \bra{\varphi_\beta} \hat{T'} \ket{\psi^0_\gamma}\bra{\psi^0_\gamma} \hat{H}\ket{\psi_{0}} \label{eq13} \\
&-& \sum_{\ket{\varphi_r} \in \mathcal{V}} \bra{\varphi_\beta} \hat{T'} \ket{\varphi_r} \bra{\varphi_r} \hat{H} \ket{\psi_{0}} =0, \forall \beta. \nonumber 
\end{eqnarray} 

Since we do not consider excitations from active to active orbitals, the action of $\hat{T'}$ on $\ket{\psi_{0}}$ can never produce another function in the space $\mathcal{U}$, therefore, the second term of Eq. (\ref{eq13}) vanishes. On the other hand, the action of $\hat{H}$ on $\ket{\psi_{0}}$ can produce functions of the type $\ket{\psi^0_\gamma}$, therefore, $\bra{\psi^0_\gamma} \hat{H} \ket{\psi_{0}} = \delta_{\gamma, 0} E_{0}$ where  $E_{0} = \bra{\psi_{0}} \hat{H} \ket{\psi_{0}}$ is the CASSCF ground-state energy. Taking these last remarks into consideration, Eq. (\ref{eq13}) reduces to: 

\begin{eqnarray}
\bra{\varphi_\beta} \hat{H} \ket{\psi_{0}} &+& \sum_{\ket{\varphi_r} \in \mathcal{V}} \bra{\varphi_\beta} \hat{H} \ket{\varphi_r} \bra{\varphi_r} \hat{T'} \ket{\psi_{0}}  \nonumber \\
&-&   \bra{\varphi_\beta} \hat{T'} \ket{\psi_{0}} E_{0} \label{eq14} \\ &-& \sum_{\ket{\varphi_r} \in \mathcal{V}} \bra{\varphi_\beta} \hat{T'} \ket{\varphi_r} \bra{\varphi_r} \hat{H} \ket{\psi_{0}} =0, \forall \beta.  \nonumber 
\end{eqnarray} 

Next, we make the approximation of neglecting the fourth term at the left hand side of Eq. (\ref{eq14}), $\sum_{\ket{\varphi_r} \in \mathcal{V}} \bra{\varphi_\beta} \hat{T'} \ket{\varphi_r} \bra{\varphi_r} \hat{H} \ket{\psi_{0}}$. 
Eq. (\ref{eq14}), therefore becomes: 

\begin{eqnarray}
 \sum_{\ket{\varphi_r} \in \mathcal{V}} \bra{\varphi_\beta} \hat{H} \ket{\varphi_r} \bra{\varphi_r} \hat{T'} \ket{\psi_{0}} &-& \bra{\varphi_\beta} \hat{T'} \ket{\psi_{0}} E_{0} \nonumber \\ &=& - \bra{\varphi_\beta} \hat{H} \ket{\psi_{0}}, \forall \beta.  \nonumber \label{eq15}
\end{eqnarray} 

Thus, the icMRLCCSD amplitude equations can also be expressed as: 

\begin{equation}
\boxed{
\begin{split}
\sum_{\ket{\varphi_r} \in \mathcal{V}}
(\bra{\varphi_\beta} \hat{H} \ket{\varphi_r}
 - E_{0}\delta_{\beta, r})
\bra{\varphi_r} \hat{T'} \ket{\psi_{0}}
\\
= - \bra{\varphi_\beta} \hat{H} \ket{\psi_{0}},
\quad \forall \beta .
\end{split}
}
\label{eq16}
\end{equation}

This constitutes a linear system of equations, which has as unknowns the excitation amplitudes, $\bra{\varphi_r} \hat{T'} \ket{\psi_{0}}$, while the matrix elements of the $A$ matrix are $\bra{\varphi_\beta} \hat{H} \ket{\varphi_r} - E_{0}\delta_{\beta, r}$. The vector, $\vec{b}$, is made of elements $ - \bra{\varphi_\beta} \hat{H} \ket{\psi_{0}}, \forall \beta$. We make an important clarification at this point: strictly speaking, the governing equations are those for the MRCEPA(0) method (acronym for multi-reference unshifted coupled electron pair approximation), which is very similar yet slightly different from icMRLCCSD method (see Ref. \cite{JoshuaB2019}), however, for the purposes of this work, we do not make that distinction. Furthermore, in the single-reference case, the equivalence LCCSD=CEPA(0) is well-established. 

\begin{center}
    \textit{e. icMRLCCSD correlation energy} 
\end{center} 

Having obtained the amplitudes by solving the amplitude equations, we now move to the equations that we use to calculate the icMRLCCSD energy. If we take Eq. (\ref{eq10}), multiply both sides by $e^{-\hat{T'}}$, and then project them onto $\bra{\psi_{0}}$, we obtain the following equation for the total energy: 

\begin{eqnarray}
 E^{\text{icMR}} = \bra{\psi_{0}} e^{-\hat{T'}} \hat{H} e^{\hat{T'}} \ket{\psi_0}.  \nonumber\label{ecorr:mr}
\end{eqnarray}  

After using the BCH commutator formula to expand $e^{-\hat{T'}} \hat{H} e^{\hat{T'}}$ up to linear powers of $\hat{T'}$, then expanding further the commutator $[\hat{H},\hat{T'}]$, we get: 
\begin{equation}
 E^{\text{icMR}} = \bra{\psi_{0}} \hat{H} \ket{\psi_{0}}+ \bra{\psi_{0}} \hat{H}\hat{T'} \ket{\psi_{0}} - \bra{\psi_{0}} \hat{T'}\hat{H} \ket{\psi_{0}}. \label{e:mr}
\end{equation} 
where we have substituted $E_0$ by $\bra{\psi_{0}} \hat{H} \ket{\psi_{0}}$. The term $\bra{\psi_{0}} \hat{T'}\hat{H} \ket{\psi_{0}}  = 0 $ because $\bra{\psi_{0}} \hat{T'} = \left(\hat{T'}^\dagger \ket{\psi_0}\right)^\dagger = 0 $. Now, introducing the resolution of identity $ \mathbb{1} = \sum_{\alpha} \ket{\varphi_\alpha} \bra{\varphi_\alpha} + \sum_{\beta} \ket{\varphi_\beta} \bra{\varphi_\beta}$ in Eq. (\ref{e:mr}), we obtain: 

\begin{eqnarray}
E^{\text{icMR}} = E_{0} &+&\sum_\alpha \bra{\psi_{0}} \hat{H} \ket{\varphi_\alpha} \bra{\varphi_\alpha} \hat{T'} \ket{\psi_{0}} \nonumber \\ 
&+& \sum_\beta \bra{\psi_{0}} \hat{H} \ket{\varphi_\beta} \bra{\varphi_\beta} \hat{T'} \ket{\psi_{0}}. \nonumber
\end{eqnarray} 

In this expression, the terms of type $\bra{\varphi_\alpha} \hat{T'} \ket{\psi_{0}}$ are always 0 because the action of $\hat{T'}$ on $\ket{\psi_{0}}$ produces functions contained in $\mathcal{V}$. Thus, the final expression of the total energy in the icMRLCCSD framework is: 

\begin{equation}
\boxed{
E^{\text{icMR}} = E_{0} + \sum_\beta \bra{\psi_{0}} \hat{H} \ket{\varphi_\beta} \bra{\varphi_\beta} \hat{T'} \ket{\psi_{0}}.} \label{e19}
\end{equation} 

The correlation energy expression is obtained by subtracting both sides of Eq. (\ref{e19}) by $E_{HF}$, thus obtaining the following expression: 

\begin{equation}
\boxed{
E^{\text{icMR}}_{\text{corr}}
= E_0 - E_{HF} + \sum_\beta \bra{\psi_{0}} \hat{H} \ket{\varphi_\beta}
\bra{\varphi_\beta} \hat{T'} \ket{\psi_{0}}.
}
\label{eq20}
\end{equation} 

\begin{center}
    \textit{f. Connection between QLS and icMRLCC} 
\end{center} 

Similar to the single-reference case, Eq. (\ref{eq16}) can be written in the compact form: 

\begin{equation}
\sum_{\eta}  (\bra{\zeta}\hat{H}\ket{\eta} - E_{0}\delta_{\eta,\zeta} ) t_\eta = -\bra{\zeta} \hat{H}\ket{\psi_{0}}, \forall \zeta, \label{eq19}
\end{equation} 

where $\bra{\zeta}\hat{H}\ket{\eta} - E_{0}\delta_{\eta,\zeta} $ represents a matrix element (between $\ket{\eta}$ and $\ket{\zeta}$) of the Hamiltonian with $E_{0}$ subtracted from its diagonal. $\ket{\eta}$ and $\ket{\zeta}$ represent $\ket{\varphi_r}$ and $\ket{\varphi_\beta}$ respectively, while $t_\eta$s are the excitation amplitudes, $\bra{\varphi_r} \hat{T'} \ket{\psi_{0}}$. Since we have as many $\ket{\varphi_\beta}$ than $\ket{\varphi_r}$ functions, we also have as many $\ket{\eta}$ than we have $\ket{\zeta}$ functions, and thus the left hand side of Eq. (\ref{eq19}) represents $A\vec{x}$ and its right hand side, $-\bra{\zeta} \hat{H} \ket{\psi_{0}}$, are the elements of the $\vec{b}$ vector. In short, Eq. (\ref{eq19}) represents the equation $A\vec{x} = \vec{b}$ to be solved by a QLS upon suitable encoding. Analogously to how the correlation energy is calculated for the single-reference case, Eq. (\ref{eq20}) suggests $E_{{\text{corr}}}^{\text{icMR}} = E_0 -E_{HF} + \sum_\beta \bra{\psi_{0}} \hat{H} \ket{\varphi_\beta} \bra{\varphi_\beta} \hat{T'} \ket{\psi_{0}}$ , where $\ket{\varphi_\beta} \bra{\varphi_\beta} \hat{T'} \ket{\psi_{0}}$ is a dot product between $-\vec{b}^\dagger$ and $\vec{x}$ vector, and since $\vec{x}$ is nothing but $A^{-1} \vec{b}$, the correlation energy expression in terms of $\vec{x}$ and $\vec{b} $ is therefore: 

\begin{equation}
    E_{{\text{corr}}}^{\text{icMR}} = E_0 -E_{HF} -\vec{b}^\dagger A^{-1} \vec{b}. 
\end{equation} 

This energy can also be calculated using the HOM circuit as we saw in Eq. (\ref{eq21}). 

\subsection{The role of QLS-LCC in quantum chemistry on quantum computers}\label{subsec:lit} 

The QPE algorithm (as well as its variants) is widely regarded as one of the most promising scalable fault tolerant era quantum algorithms for quantum chemistry applications \cite{kitaevqpe1995, QPE_Sethlloyd, Abrams1997, Cleve_1998, Azpuru2005, Lanyon2010-pa, Miroslav2007, SmithJoseph2022, yamamoto_bayesian_qpe, Hardware_QPE, reductive_qpe}. In particular, QPE is typically used to solve the complete active space configuration interaction (CASCI) equations. Although VQE ruled the algorithms landscape for chemistry in the past decade, there has been renewed interest in QPE due to recent advances on quantum hardware, including proof-of-concept QPE-CASCI calculations on quantum hardware in a partial fault-tolerant manner, underscoring the algorithm's importance for quantum chemistry in the years ahead \cite{yamamoto_bayesian_qpe, yamamoto2025}. However, despite all this progress and the algorithm's promise of exponential speed-up over the best known classical methods \cite{Lee2023}, we note that 
\begin{itemize}
\item QPE typically requires a number of state register qubits that is equal to the number of spin-orbitals included in the active space chosen. Consequently, it is naturally suited to treating only the most important (chemically relevant) orbitals in a calculation. 
\item CASCI itself, while being the FCI solution within an active space, does not capture dynamical correlation arising from the rest of the orbital space. 
\end{itemize} 

These observations suggest a possible role for QLS-LCC; while LCC within the same orbital space as CASCI is an approximation to FCI and hence less accurate relative to CASCI, it can be executed on the entire (or a large part of the) orbital space and thus can capture dynamical correlation that CASCI cannot. Furthermore, QLS-LCC admits state register sizes that are logarithmic in the number of rows of $A$ due to amplitude encoding, thus allowing one to work with larger orbital spaces. A natural approach could be to propose a combination of the high-accuracy QPE-CASCI solver within a reduced orbital subspace with the QLS-LCC within a bigger orbital space. In particular, one could treat only the most important small set of orbitals using QPE on a CASCI Hamiltonian, while describing the rest of the remaining external subspace with QLS-LCC. Additionally, the LCC treatment could be SR or icMR depending on the situation at hand. We leave detailed investigations of such embedding techniques for future studies. 

\section{QLS-LCC scaling}\label{sec:scaling} 

This section mainly discusses the scaling of QLS-LCC to check for quantum advantage, with major focus on $\kappa$ scaling. We first set the notion of system size in chemistry and how that relates to $N$, the number of rows in $A$. We then discuss $\kappa$ and $s$ scaling, followed by how they tie to the runtime ratio, $R$, and thus prospects of advantage. Towards the end of the section, we briefly touch upon scaling costs (quantum and classical) for an end-to-end workflow. 

\subsection{System size growth}\label{subsec:syssize-scaling} 

We begin by clarifying the notion of system size used throughout this work. Although one specifies system size in terms of the dimension of the linear system, $N$, for the purposes of formal complexity analyses, numerical scaling studies in quantum chemistry require a physically meaningful parameter(s) through which $N$ is varied. A reasonable parameter for this purpose is the number of spin-orbitals, which in practice is the number of single particle basis set elements. For large molecular systems, growing the system size would involve increasing both occupied and virtual spin-orbitals at each increment. While determining $N$ scaling in terms of number of spin-orbitals as well as expressing $s$ in terms of the same quantity is straightforward as we could analytically arrive at such expressions. However, since $\kappa$ scaling necessitates numerical computations which can become costly and also since we only use small model systems with limited electrons where furthermore the number of virtuals far exceed the number of occupied spin-orbitals for a reasonable single particle basis, we fix the number of occupied spin-orbitals and vary only the number of virtuals for the model systems for each molecule. Our motivation for fixing a molecule and varying only spin-orbitals stems from our intended applications of QLS-LCC. As seen in the earlier section, we propose that QLS-LCC may serve as complementary treatment to high accuracy QPE-CASCI. In such cases, one might be interested in predicting properties of a target molecule with increasing accuracy, where improving the chemical description of the problem in question would involve enlarging the occupied and virtual space. In such situations, it is natural to assess $\kappa$ and $s$ scaling using the number of spin-orbitals. We add that $N$ could also be increased by adding atoms, and we resort to that approach when we study chains. 

\begin{center}
    \textit{a. QLS-SRLCC} 
\end{center} 

The system size growth in the  SRLCC case is approximately equal to the size of the CI matrix. In the case of CISD, for example, the highest excitation rank being 2, the size of the CISD matrix is approximately the number of possible excited determinants of excitation rank 2, which is estimated as 

\begin{eqnarray}
N \sim \binom{n_{occ}}{2} \binom{n_{vir}}{2} \sim n_{occ}^2 n_{vir}^2,  
\end{eqnarray} 

where $n_{occ}$ and $n_{vir}$ are respectively the number of occupied and virtual spin-orbitals. The additive factor $1+n_{occ}n_{vir}$ is ignored assuming we work with large spaces, where the doubles sector dominates. Since we fix the number of occupied orbitals and vary the number of virtuals for the purposes of our numerical calculations, we conclude that the system size $N$ in the single-reference case grows as $n_{vir}^2$ in such a situation. 

In general, the system size growth is polynomial in the number of spin-orbitals, and is expected to scale as 

\begin{eqnarray}
N \sim \sum_{m=1}^{\mathfrak{E}}\binom{n_{occ}}{m} \binom{n_{vir}}{m}, 
\end{eqnarray}

where $\mathfrak{E}$ is the excitation rank. When $n_{occ}$ is fixed, $N \sim n_{vir}^\mathfrak{E}$. 

\begin{center}
    \textit{b. QLS-icMRLCC} 
\end{center} 

The system size before redundancy removal in the case of icMRCISD is (assuming for simplicity $N_a \approx n_a$) $N \sim n_{c}n_a + n_an_v +n_{c} n_v + n^2_{c}n^2_a + n^2_an^2_v +n^2_{c} n^2_v  $ where the first three terms account for the single excitations and the rest for double excitations. The redundant excitations incurred are of two main types: 1. those due to the repetition of pure single excitations as double excitations involving one active to active excitation, and 2. the ones due to the contracted nature of the reference function $\ket{\psi_0}$. To account for the first redundancy type, we simply remove the term $n_{c}n_a + n_an_v +n_{c} n_v$ in $N$. This leads us to the new expression of $N \sim n^2_{c}n^2_a + n^2_an^2_v +n^2_{c} n^2_v$. For the second type, we diagonalize the overlap matrix, $\mathcal{S} \in \mathbb{R}^{n\times n}$ built out of excitation functions, such that its matrix elements are given by the expression $\mathcal{S}_{pq}= \bra{\psi_{0}} \hat{\tau}_p^\dagger \hat{\tau}_q \ket{\psi_{0}}$, where $\hat{\tau}_p$ and $\hat{\tau}_q$ are excitation operators that give rise to various excited functions upon acting on $\ket{\psi_{0}}$.  Following this, we get rid of the very small eigenvalues via thresholding. Excitation operators that share the same number of created holes  ($h$) and particles ($p$) form blocks of the overlap matrix, and a pair `($h,p$)' is called an excitation class. It can be showed that $\mathcal{S}$ is decomposed as $\mathcal{S} = \oplus_{(h,p)} \mathcal{S}^{(h,p)}_{\text{a}} \otimes \mathbb{1}^{(h,p)}_{\text{na}}$ where   $\mathcal{S}^{(h,p)}_{\text{a}}$ is the active part of a given excitation class ($h,p$) involving only active orbital indices, whereas  $\mathbb{1}^{(h,p)}_{\text{na}}$ (`na' denotes that only non-active orbital indices are involved) is the antisymmetric unit tensor of $\mathcal{S}$ (for details, see Section II of Ref. \cite{Andreas2011}). In practice, redundancies are removed by diagonalizing the active part of $\mathcal{S}$, after which repeated excitations operators are discarded, therefore the number of redundancies removed can be estimated from the size of the largest active blocks of $\mathcal{S}$, namely $\mathcal{S}^{(0,1)}_{\text{a}}$ and $\mathcal{S}^{(1,0)}_{\text{a}}$, which is $\binom{n_a}{2} \binom{n_a}{1} \sim n_a^3$. The expression of $N$ in icMRCISD becomes 
$N \sim  n^2_{c}n^2_a + n^2_an^2_v +n^2_{c} n^2_v - n_a^3 
\sim  (n^2_a +n^2_{c} ) n^2_v + n^2_{c}n^2_a - n_a^3$. Since $n_v \gg n_{c}, n_a$, the final expression for system size is, 

\begin{eqnarray}
N \sim (n^2_a +n^2_{c}) n^2_v. 
\end{eqnarray} 
When we fix the number of occupied orbitals, $n_c$ is fixed. It is also reasonable in practice to assume that $n_a$ be fixed for a particular molecule, and thus $N \sim n_v^2$. That is, asymptotically, both SRLCCSD and icMRLCCSD exhibit quadratic growth of $N$ with the number of virtuals. 

In general, for an excitation rank $\mathfrak{E}$, one can show using similar arguments that one needs to remove $\binom{n_a}{\mathfrak{E}} \binom{n_a}{\mathfrak{E}-1}$ to account for redundancies. Since $n_a$ is assumed fixed for a molecule, one expects $N \sim n_v^\mathfrak{E}$ when the number of occupied spin-orbitals are fixed, consistent with the SRLCC scaling estimate. 

\subsection{Scaling of condition number}\label{subsec:kappa-scaling} 

As a precursor to our discussion on condition number growth, we note that favourable $\kappa$ scaling appears to be relatively uncommon in applications involving QLSs. The numerous attempts in literature typically involve investigating specific problem instances within an application rather than an entire application, reflecting the inherent difficulty in making such general assessments of $\kappa$ scaling. Furthermore, across problem instances in diverse applications, including the DC power flow problem \cite{pareek_2026}, computational Earth sciences (fracture types in hydrological systems) \cite{Golden2022}, portfolio optimization problem (data from the arXiv version (v4) of Ref. \cite{Yalovetzky2024}), and computational fluid dynamics (Hele-Shaw problem) \cite{Gopal2024}, $\kappa$ is found to grow at least polynomially (sometimes even exponentially) with system size. To the best of our knowledge, the only notable exception was recently reported by some of the present authors (Ref. \cite{Peniel2025}), where it was found that the quantity grows as a polylog function of the number of spin-orbitals for KH, RbH, and CsH in the Sapporo basis sets. It is worth noting that they choose to increase the virtual orbital space not arbitrarily but via orbital character analysis. They also argue on grounds of chemical intuition that the slow growth of $\kappa$ is driven by the saturation in the decrease of the smallest eigenvalue of $A$. These observations provide us with additional motivation for a careful examination of $\kappa$ scaling in the broader problem of QLS-LCC. 

\subsubsection{Explicitly computing condition number scaling} 

\begin{figure}[t]
    \centering
    \includegraphics[scale=0.52]{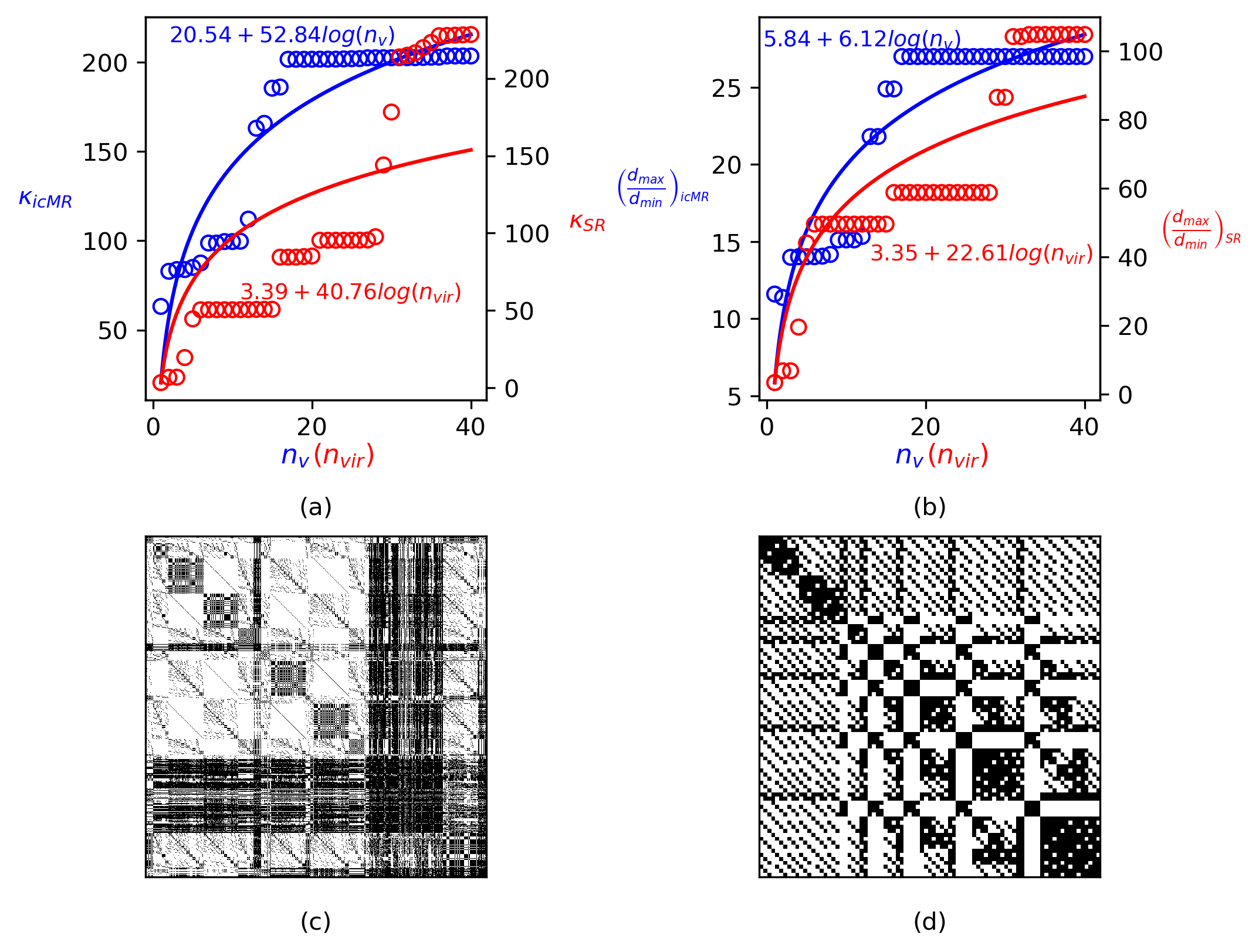}
    \caption{Sub-figure (a) shows the scaling of condition number, $\kappa$, with the number of virtuals, $n_v(n_{vir})$, while sub-figure (b) presents the ratio of maximum and minimum diagonal entries of $A$, $d_{\mathrm{max}}\big/d_{\mathrm{min}}$ scaling versus $n_v(n_{vir})$. `SR' and `icMR' refer to the quantities having been obtained using single-reference and internally contracted multi-reference linearized coupled cluster methods (in the singles and doubles approximation), respectively. $d_{\mathrm{max}}$ and $d_{\mathrm{min}}$ refer to maximum and minimum diagonal entries of $A$ respectively. Sub-figures (c) and (d) depict the $A$ matrix heat maps (non-zero matrix elements in black and zero values in white) for the single-reference and internally contracted multi-reference linearized coupled cluster methods, respectively. The SR and the icMR $A$ matrices shown in the figure are of size $85 \times 85$ and $798 \times 798$ respectively. All the data is presented for the LiH molecule in the cc-pVTZ basis. }
    \label{fig:LiHkappa} 
\end{figure} 

\begin{figure}[t]
    \centering
    \includegraphics[scale=0.52]{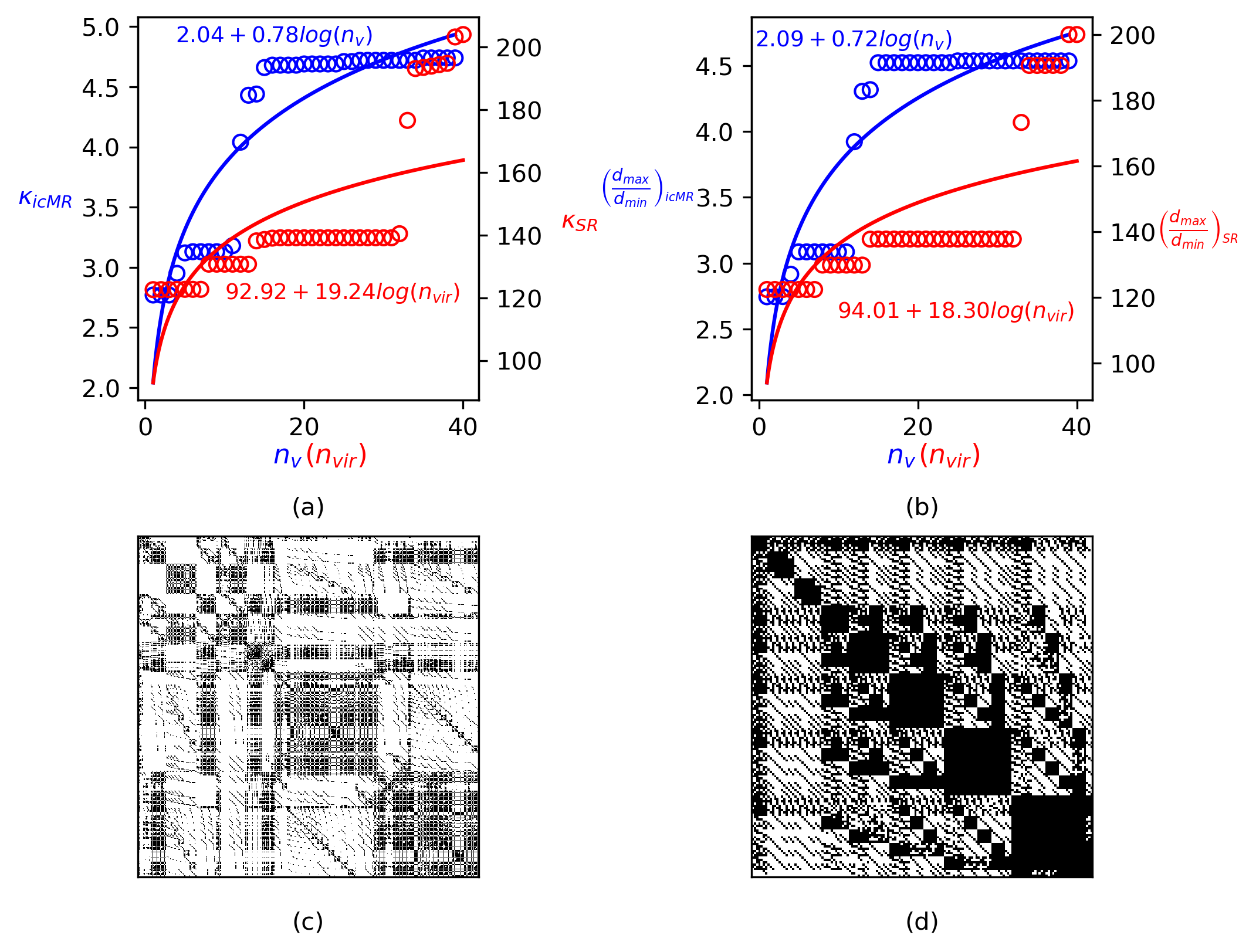}
    \caption{Sub-figure (a) illustrates the scaling of $\kappa$ with $n_v(n_{vir})$, sub-figure (b) reports $d_{\mathrm{max}}\big/d_{\mathrm{min}}$ scaling versus $n_v(n_{vir})$, whereas sub-figures (c) and (d) plot the SRLCCSD and icMRLCCSD $A$ matrices respectively as heat maps. The SR and the icMR $A$ matrices shown in the figure are of size $151 \times 151$ and $403 \times 403$ respectively. All the data is presented for the BeH$_2^+$ molecular ion in the cc-pVTZ basis. All definitions are the same as in Fig. \ref{fig:LiHkappa}}
    \label{fig:BeH2pkappa}
\end{figure} 

\begin{figure}[t]
    \centering
    \includegraphics[scale=0.52]{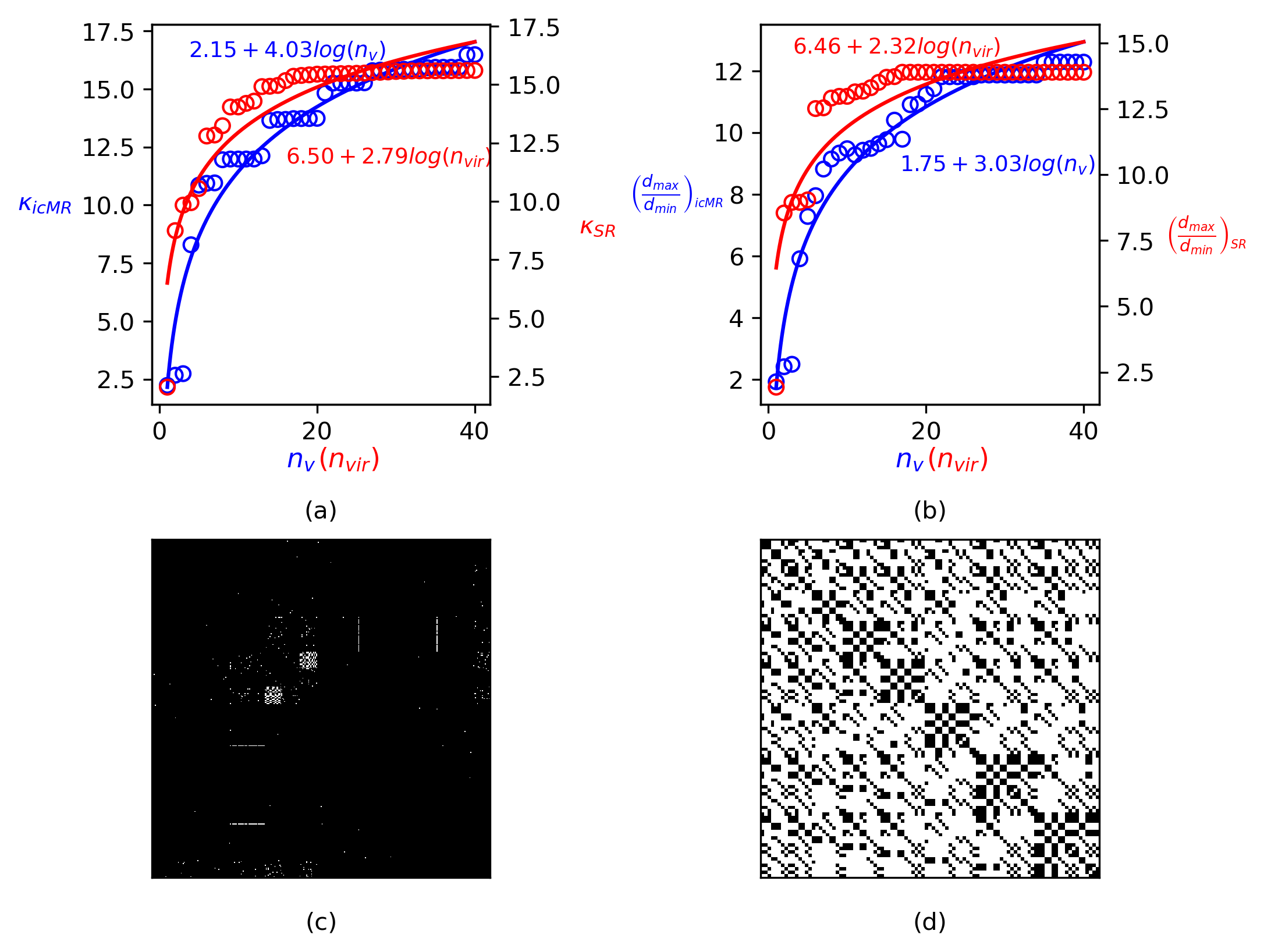}
    \caption{(a): Scaling of $\kappa$ with $n_v(n_{vir})$, (b): $d_{\mathrm{max}}\big/d_{\mathrm{min}}$ scaling versus $n_v(n_{vir})$, (c) and (d): SRLCCSD and icMRLCCSD $A$ matrices' heat maps respectively, all of them for the H$_4$ molecule in the cc-pVTZ basis. The SR and the icMR $A$ matrices shown in the figure are of size $99 \times 99$ and $351 \times 351$ respectively. All definitions are the same as in Fig. \ref{fig:LiHkappa}} 
    \label{fig:H4kappa}
\end{figure} 

\begin{figure}[t]
    \centering
    \includegraphics[scale=0.52]{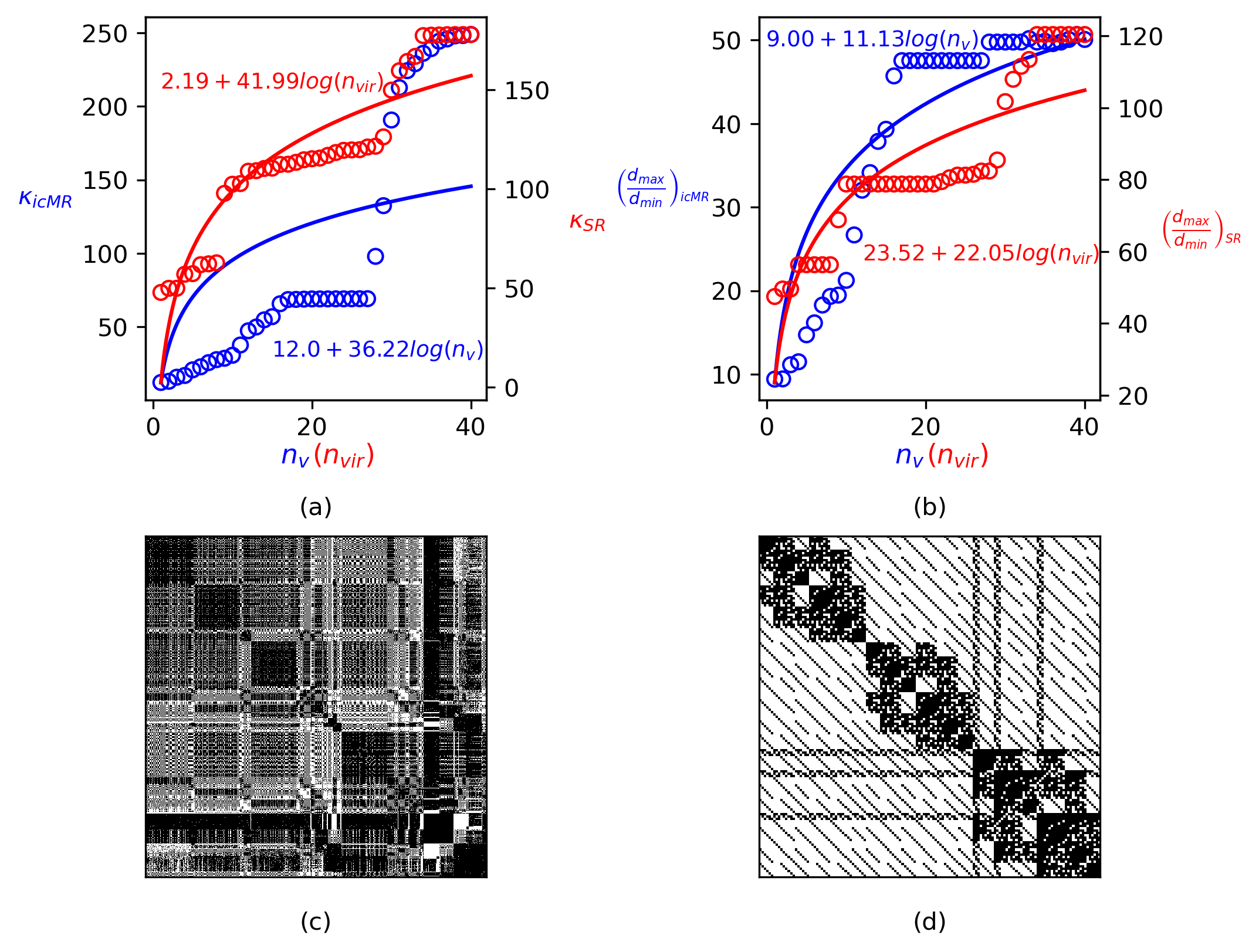}
    \caption{(a) and (b): $\kappa$ and $d_{\mathrm{max}}\big/d_{\mathrm{min}}$ scaling as a function of $n_v(n_{vir})$ respectively, (c) and (d): heatmaps of SRLCCSD and icMRLCCSD $A$ matrices respectively, for the BeH$_2$ molecule in the cc-pVTZ basis. The SR and the icMR $A$ matrices shown in the figure are of size $144 \times 144$ and $1756 \times 1756$ respectively. All definitions are the same as in Fig. \ref{fig:LiHkappa}} 
    \label{fig:BeH2kappa}
\end{figure} 

In this sub-section, we discuss the direct route to computing $\kappa$ scaling. For each molecule, we extract the $A$ matrix, the details of which are provided in Section \ref{subsec:preproc-scaling}(a), at each system size. We go from one system size to the next by adding one virtual spin-orbital at a time. We note that the virtuals are arranged according to the irreducible representation (irrep) that they belong to. At each system size, we diagonalize the matrix to then compute the condition number. This is followed by finding a suitable fit function to the data points. We find $\kappa$ scaling using this method for both SRLCCSD and icMRLCCSD cases, for the following model systems with the following input parameters: 

\begin{itemize}
\item LiH in the correlation-consistent polarized valence triple-zeta (cc-pVTZ) basis set with a bond length of 7.558 Bohrs. $n_v$ as well as $n_{vir}$ are varied from 1 to 40 in both SR and icMR cases. 
\item H$_4$ in the cc-pVTZ basis. Geometry: $d=4.610$ Bohrs, and the distance between the Hydrogen atoms in the two H$_2$ sub-systems is 4.650 Bohrs. $n_v$ as well as $n_{vir}$ are varied from 1 to 40 in both SR and icMR cases. 
\item BeH$^{2+}$ in the cc-pVTZ basis with a bond length of 15.000 Bohrs. We vary $n_{vir}$ from 1 to 40 in SR case and $n_v$ from 1 to 39 in the icMR cases.   
\item BeH$_2$ in the cc-pVTZ basis in the configuration E (See Table \ref{tabbeh2}). We vary $n_v$ as well as $n_{vir}$  from 1 to 40 in both SR and icMR cases. 
\end{itemize}

Each computation reveals a slow growth of the condition number with the number of virtuals. As shown in Figs. \ref{fig:LiHkappa}(a), \ref{fig:BeH2pkappa}(a), \ref{fig:H4kappa}(a), and \ref{fig:BeH2kappa}(a) (with computational details being discussed in Section \ref{sec:calc}), the observed staircase-like behaviour is consistent with an underlying polylogarithmic trend in $\kappa$ over the investigated system sizes. Examining the symmetry labels of the virtuals indicates that transitions between successive `plateaus' coincide with the point where we run out of virtuals in an irrep and begin adding virtuals from the next. The staircase-like behaviour may therefore be understood in terms of symmetry composition in the virtual space. As virtuals belonging to an irrep are added, the extremal eigenvalues of $A$ initially `react' to the inclusion of virtuals by adjusting the maximum and minimum eigenvalues. However, beyond a certain point, additional virtuals from the same symmetry do not appear to contribute to the eigenvalues of interest, thus causing $\kappa$ to stagnate. When we run out of virtuals of an irrep and begin adding ones from a new irrep, the spectrum undergoes a rather modest reorganization, eventually leading to the same kind of a plateau behaviour after a certain point, and so on. We also carry out analogous analyses on $H$, $He$, $Li$, and $Be$ chains with SRLCCSD, with $n_v(n_{vir})$ replaced by the number of atoms ($n_{\mathcal{A}}$) (see Figs. \ref{overallchain:h}(a) and \ref{overallchain:he}(a), as well as Figs. \ref{overallchain:Li}(a), and \ref{overallchain:Be}(a) in the Appendix).

\subsubsection{Upper bounding condition number for LCC matrices} 

As encouraging as our findings are with the explicit computations providing the most direct evidence for a polylogarithmic $\kappa$ growth within the scope of our analysis with limited molecules, it would be ideal to substitute $\kappa$ with some other suitable proxy that is relatively easier to compute. We find that for the specific matrices that we are interested in the context of LCC (SR as well as icMR), $d_{\mathrm{max}}\big/ d_{\mathrm{min}}$ is one such quantity, as shown in Figs. \ref{fig:LiHkappa}(b), \ref{fig:BeH2pkappa}(b), \ref{fig:H4kappa}(b), and \ref{fig:BeH2kappa}(b), as well as SRLCC results for chains shown in Figs. \ref{overallchain:h}(b), \ref{overallchain:he}(b), \ref{overallchain:Li}(b), and \ref{overallchain:Be}(b) of the Appendix. Here, $d_{\mathrm{max}}$ and $d_{\mathrm{min}}$ are the maximum and minimum diagonal entries of $A$ respectively. Motivated by this observation, we derive an upper bound on $\kappa$ in terms of $d_{\mathrm{max}}\big/ d_{\mathrm{min}}$.  We put forth two theorems, one that assumes a diagonally dominant matrix, and the other which relaxes the assumption. 

\begin{theorem}
Let $A\in\mathbb{R}^{N\times N}$ be a real symmetric matrix with positive diagonal entries, that is, $a_{ii} = d_i > 0$. Let
\[
\varepsilon:=\max_i \frac{R_i}{d_i}<1, 
\]
where $R_i := \sum_{j\neq i} |a_{ij}|$, and the condition number of $A$, $
\kappa(A)
=
\frac{\lambda_{\max}(A)}{\lambda_{\min}(A)}.
$ Then, 
\[
\kappa(A)\le
\frac{d_{\max}(1+\varepsilon)}
     {d_{\min}(1-\varepsilon)}.
\]
\end{theorem}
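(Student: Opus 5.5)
The plan is to localize the spectrum of $A$ with the Gershgorin circle theorem and then bound the extreme eigenvalues separately. Since $A$ is real symmetric, all its eigenvalues are real. By Gershgorin, every eigenvalue $\lambda$ of $A$ lies in some interval $[d_i - R_i,\, d_i + R_i]$. The hypothesis $R_i \le \varepsilon d_i$ for every $i$ then gives
\[
d_i(1-\varepsilon) \le \lambda \le d_i(1+\varepsilon)
\]
for the particular index $i$ whose disc contains $\lambda$.

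\textbf{Upper bound on $\lambda_{\max}$.} Take $\lambda = \lambda_{\max}(A)$. For the index $i$ whose disc contains it, $\lambda_{\max}(A) \le d_i(1+\varepsilon) \le d_{\max}(1+\varepsilon)$.

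\textbf{Lower bound on $\lambda_{\min}$.} Take $\lambda = \lambda_{\min}(A)$. For some index $j$, $\lambda_{\min}(A) \ge d_j(1-\varepsilon) \ge d_{\min}(1-\varepsilon)$. Because $\varepsilon<1$ and $d_{\min}>0$, this lower bound is strictly positive. Hence $A$ is positive definite, and in particular nonsingular.

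\textbf{Conclusion.} Positive definiteness means every eigenvalue is positive, so $\kappa(A) = |\lambda_{\max}/\lambda_{\min}| = \lambda_{\max}/\lambda_{\min}$ with no sign issues. Dividing the upper bound on $\lambda_{\max}$ by the positive lower bound on $\lambda_{\min}$ yields
\[
\kappa(A) \le \frac{d_{\max}(1+\varepsilon)}{d_{\min}(1-\varepsilon)}.
\]

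The only point that needs care is that the Gershgorin index $i$ for $\lambda_{\max}$ and the index $j$ for $\lambda_{\min}$ are generally different. This is why the final bound has to use $d_{\max}$ in the numerator and $d_{\min}$ in the denominator, rather than a single $d_i$. Everything else is routine.
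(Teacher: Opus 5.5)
Your proposal is correct and follows the paper's proof: both use Gershgorin together with $R_i \le \varepsilon d_i$ to get $\lambda_{\max}(A)\le d_{\max}(1+\varepsilon)$ and $\lambda_{\min}(A)\ge d_{\min}(1-\varepsilon)$, and then divide. You also note explicitly that the lower bound makes $A$ positive definite, so the division is legitimate and the absolute value in $\kappa$ causes no sign issues; the paper leaves this point implicit.
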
 

\begin{proof} 
Since $\varepsilon:=
\max_i \frac{R_i}{d_i} < 1$, we have $R_i \leq \varepsilon d_i \ \forall i$. Invoking the Gershgorin circle theorem, we see that 

\begin{eqnarray}
\lambda_{\mathrm{max}} &\leq& d_k + R_k \leq d_k + \varepsilon d_k \nonumber \\ 
&=& d_k(1+\varepsilon) \leq d_{\mathrm{max}}(1+\varepsilon)
\end{eqnarray} 
and 
\begin{eqnarray}
\lambda_{\mathrm{min}} &\geq& d_m - R_m \geq d_m - \varepsilon d_m \nonumber \\ 
&=&d_m(1-\varepsilon) \geq d_{\mathrm{min}}(1-\varepsilon). 
\end{eqnarray}

Thus, $\kappa \leq \frac{d_{\mathrm{max}}(1+\varepsilon)}{d_{\mathrm{min}}(1-\varepsilon)}$. 
\end{proof} 

\begin{remark}
The condition on positive diagonal entries is an empirical observation for the LCC matrices from our data. 
\end{remark} 

\begin{remark}
For diagonal matrices, the bound is exact since $\kappa=d_{\max}\big/d_{\min}$.
\end{remark} 

\begin{remark}
For strongly diagonally dominant matrices, where $\epsilon \ll 1$, we have $\frac{1+\varepsilon}{1-\varepsilon}
=
1+2\varepsilon+\mathcal{O}(\varepsilon^2)$. Thus, in such a regime, $\kappa(A)
\leq 
\frac{d_{\max}}{d_{\min}}
\left(1+2\varepsilon\right) < 3\frac{d_{\max}}{d_{\min}}$.  
\end{remark} 

The assumption of diagonal dominance, in the absence of which the theorem breaks down, has to be assessed numerically with data. For all of the four molecules that we consider (both in SRLCC and icMRLCC situations, and for $\mathfrak{E}=2,3,4$), we observe that the condition $a_{ii} > R_i \ \forall i$ is not satisfied for large $n_v(n_{vir})$ cases, as Fig. \ref{Aii-Ri_srmr:lihbeh2} of the Appendix shows, where we plot $\mathrm{min}_i\ a_{ii}-R_i$. To that end, we discuss in the theorem presented below the validity of $d_{\max}\big/d_{\min}$ as a proxy to $\kappa$ in the absence of diagonal dominance. 

Before proceeding further, we make one observation. We could force diagonal dominance by introducing a sufficiently large shift to the diagonal entries of $A$. However, determining the magnitude of such a shift necessitates evaluating $R_i$ for all $N$ rows. We therefore pursue a different route and propose the following theorem: 

\begin{theorem}
Let $A=D+V$, where $A\in\mathbb{R}^{N\times N}$ is symmetric, $D=\mathrm{diag}(d_1,\ldots,d_N)$, and $V = A-D$. Define $d_{\min}:=\min_i d_i$ and $d_{\max}:=\max_i d_i$. Then, $d_{\min}-\|V\|
\le
\lambda_{\min}(A)
\le
d_{\min}+\|V\|$ and $d_{\max}-\|V\|
\le
\lambda_{\max}(A)
\le
d_{\max}+\|V\|$. Consequently, if $\|V\| \ll d_{\min}$, then $\kappa(A)$ can be well-approximated by $d_{\max}\big/d_{\min}$. 
\end{theorem}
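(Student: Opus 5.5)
The plan is to apply Weyl's perturbation inequality for Hermitian matrices to the splitting $A=D+V$. Here $D$ is diagonal, so its eigenvalues are the diagonal entries $d_i$. Since $A$ and $V=A-D$ are both real symmetric, and $\|V\|$ is the spectral norm, Weyl's inequality gives, for the sorted eigenvalues, $|\lambda_k(A)-\lambda_k(D)|\le\|V\|$ for every $k$. Equivalently, I would argue directly through the Rayleigh quotient. For any unit vector $x$ we have $x^{T}Ax = x^{T}Dx + x^{T}Vx$ and $|x^{T}Vx|\le\|V\|$, and $\lambda_{\min}(A)$ and $\lambda_{\max}(A)$ are the minimum and maximum of $x^{T}Ax$ over unit vectors.

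For the lower bound on $\lambda_{\min}(A)$, the key step is that $x^{T}Ax\ge x^{T}Dx-\|V\|\ge d_{\min}-\|V\|$ for every unit $x$. For the upper bound, I would take $x=e_m$, the standard basis vector at an index attaining $d_{\min}$. This gives $\lambda_{\min}(A)\le e_m^{T}Ae_m=d_{\min}$, which is even sharper than $d_{\min}+\|V\|$; alternatively one can quote Weyl's inequality with $k=1$. The two bounds for $\lambda_{\max}$ follow symmetrically: $x^{T}Ax\le d_{\max}+\|V\|$ for all unit $x$, and testing on $e_M$ with $d_M=d_{\max}$ gives $\lambda_{\max}(A)\ge d_{\max}\ge d_{\max}-\|V\|$.

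For the consequence, I would assume $\|V\|<d_{\min}$. This makes $d_{\min}>0$ and $\lambda_{\min}(A)>0$, and since $d_{\max}\ge d_{\min}$ it also makes $\lambda_{\max}(A)>0$, so $\kappa=\lambda_{\max}/\lambda_{\min}$ with no absolute values needed. Dividing the bounds then gives
\[
\frac{d_{\max}-\|V\|}{d_{\min}+\|V\|}\le\kappa(A)\le\frac{d_{\max}+\|V\|}{d_{\min}-\|V\|}.
\]
Writing $\delta:=\|V\|/d_{\min}$ and using $\|V\|/d_{\max}\le\delta$, both sides equal $\frac{d_{\max}}{d_{\min}}\bigl(1+\mathcal{O}(\delta)\bigr)$. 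Hence $\kappa(A)$ is approximated by $d_{\max}/d_{\min}$ up to a relative error $\mathcal{O}(\delta)$, which is small when $\|V\|\ll d_{\min}$.

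The main obstacle is not technical but interpretive. The phrase ``well-approximated'' has to be made precise, and I would do this through the two-sided ratio bound above. One must also check positivity, since the definition of $\kappa$ uses absolute values and without $\|V\|<d_{\min}$ the eigenvalue $\lambda_{\min}(A)$ could be near zero or negative. Unlike the first theorem, this argument needs no diagonal dominance assumption.
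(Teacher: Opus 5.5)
Your proposal is correct and follows the paper's proof, which likewise applies Weyl's inequality $|\lambda_k(A)-\lambda_k(D)|\le\|V\|$ at the extremal indices and then reads off $\kappa(A)\approx d_{\max}/d_{\min}$. Your Rayleigh-quotient refinements and your explicit two-sided ratio bound go slightly beyond the paper. The refinements are $\lambda_{\min}(A)\le d_{\min}$ and $\lambda_{\max}(A)\ge d_{\max}$; they rely on $d_i=a_{ii}$, so that $V$ has zero diagonal. The ratio bound has relative error $\mathcal{O}(\|V\|/d_{\min})$ under $\|V\|<d_{\min}$, and it makes precise both the paper's informal ``$\approx$'' and its implicit use of $d_{\max}\ge d_{\min}$.
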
 

\begin{proof}
Since $A=D+V$, the spectral stability corollary of Weyl's inequality implies that $|\lambda_i(A)-\lambda_i(D)|
\le
\|V\| \ \forall i$. Therefore, picking $i=1$ and $i=N$ for the extremal eigenvalues gives 
\begin{eqnarray}
|\lambda_{\min}(A)-\lambda_{\min}(D)|
\le
\|V\|
\end{eqnarray}
and 
\begin{eqnarray}
|\lambda_{\max}(A)-\lambda_{\max}(D)|
\le
\|V\|.
\end{eqnarray}
 
Using the fact that $|x| \leq y\ \iff\ -y \leq x \leq y$, and also that $\lambda_{\min}(D)=d_{\min}$ and $\lambda_{\max}(D)=d_{\max}$, we obtain 
\begin{eqnarray}
d_{\min}-\|V\|\leq \lambda_{\min}(A) \leq d_{\min}+\|V\|
\end{eqnarray}
and
\begin{eqnarray}
d_{\max}-\|V\|\leq \lambda_{\max}(A) \leq d_{\max}+\|V\|. 
\end{eqnarray}

Therefore, if $\|V\| \ll d_{\min}$, then we get $\lambda_{\min}(A)\approx d_{\min}$ and $\lambda_{\max}(A)\approx d_{\max}$. Therefore, $\kappa(A)
=
\frac{\lambda_{\max}(A)}
     {\lambda_{\min}(A)}
\approx
\frac{d_{\max}}
     {d_{\min}}$. 
\end{proof}

Although we do not explicitly check if the norm of $V$ is much less than $d_{\min}$, the close agreement between $\kappa$ and $d_{\max}\big/d_{\min}$ indicates that the effect of $V$ on the extremal eigenvalues is relatively small for the systems that we consider. 

A useful way to understand the observed agreement is to utilize the benefit of hindsight and factorize $\kappa = \lambda_{\max}\big/\lambda_{\min} = (d_{\min}\big/\lambda_{\min})(\lambda_{\max}\big/d_{\max})(d_{\max}\big/d_{\min})$ to see that the function that connects $\kappa$ and $d_{\max}\big/d_{\min}$ is very weakly growing with $n_v(n_{vir})$ as compared to $d_{\max}\big/d_{\min}$ for the 4 molecules that we consider, as Figs. \ref{lambdamaxbydmax} and \ref{dminbylambdamin} show. Thus, it seems reasonable to consider $d_{\max}\big/d_{\min}$ as a proxy to $\kappa$ but which is cheaper to compute ($\sim \mathcal{O}(N)$). 

\subsubsection{Invoking the edge spawning conjecture} 

Finally, we employ an indirect method that is computationally much simpler, to predict $\kappa$ scaling with system size. In particular, we invoke the edge spawning conjecture that was proposed in Ref. \cite{shetty2025} for unweighted graph adjacency matrices. In particular, in our adaptation of the conjecture, we map matrix elements of the $A$ matrix to vertices of a graph, and the non-zero matrix elements between two Slater determinants to the edge weight between those two vertices of the graph. However, since the conjecture only holds for unweighted graphs, we replace all the edge weights by 1. 

The data-driven edge spawning conjecture was proposed by looking for one common pattern across 30 different graph families' unweighted graph Laplacians, and in particular, the pattern formed by the non-zero elements of the matrix serve as a proxy for $\kappa$ scaling; if the matrix elements are scattered across the matrix in the so-called diffuse pattern, then $\kappa$ grows polylogarithmically, while `sharp' patterns indicate fast $\kappa$ scaling. The conjecture operates under the implicit assumption that a diffuse pattern remains diffuse across all system sizes and that sharp remains sharp across all system sizes. When seen from a graph theoretic viewpoint, a sharp pattern means that no new edges spawn between old vertices when we increment to subsequent system sizes. 

This approach to assessing $\kappa$ scaling only requires one to look for a diffuse or sharp pattern for a few small system sizes, and infer if $\kappa$ grows sufficiently slowly. Due to this factor, there is negligible scaling cost associated with this method when compared to explicit $\kappa$ computations or finding $d_{max}\big/d_{\min}$, and therefore, one can check for patterns even up to LCCSDTQ level of theory. Sub-figures (c) and (d) of Figs. \ref{fig:LiHkappa}, \ref{fig:BeH2pkappa}, \ref{fig:H4kappa}, and \ref{fig:BeH2kappa} display a diffuse pattern, which according to our extended edge spawning conjecture, indicates a slow $\kappa$ growth, for both SRLCCSD and icMRLCCSD cases. We report similar findings in chains using SRLCCSD, as shown in Figs. \ref{overallchain:h}(c), \ref{overallchain:he}(c), \ref{overallchain:Li}(c), and \ref{overallchain:Be}(c). Furthermore, Figs. \ref{heatmapsr:lih} through \ref{heatmapsr:beh2} show the occurrence of diffuse patterns using SRLCCSD, SRLCCSDT, and SRLCCSDTQ, with 3 system sizes chosen at random for each of the excitation ranks, for LiH, BeH$^{2+}$, H$_4$, and BeH$_2$. We report icMRLCCSD heatmaps for the 4 molecules in Fig. \ref{heatmapmr:allmolecules}, followed by SRLCCSD results for chains, with an even number of atoms ranging from 2 to 14, in Figs. \ref{heatmap:h} through \ref{heatmap:be}. All of them show a diffuse pattern. While we note that the SR and icMR $A$ matrices have different diffuse patterns, we only seek to invoke and adapt the conjecture for our purposes, and we do not delve into the reasons behind the specifics of the patterns. 

The agreement between explicit $\kappa$ computations, the $d_{\max}\big/d_{\min}$ proxy, and the predictions obtained via the extended edge spawning conjecture provides mutually reinforcing evidence for polylogarithmic $\kappa$ scaling in the systems considered. 

\subsubsection{Relation between $\kappa$ and spectral gap} 

One may wonder if the spectral gap of the normal ordered Hamiltonian could serve as a proxy to $\kappa$. We recall that the normal ordered Hamiltonian is a shifted Hamiltonian (in the SRLCC case, which we shall restrict ourselves to for simplicity, this is $H_{\mathcal{N}}$, as discussed in the earlier sections) and its principal sub-matrix is our $A$ matrix. Thus, one may be led to think (and reasonably so) that $\kappa(A)$, the condition number of $A$, could be related to the condition number of the shifted Hamiltonian, $\kappa(H_{\mathcal{N}})$, and therefore $\kappa$ might be connected to the spectral gap of the shifted Hamiltonian. We address this issue in the next paragraph. 

\begin{theorem}
Let $H_\mathcal{N}$ denote the shifted Hamiltonian, given by $H_\mathcal{N} = H - E_{\mathrm{HF}} I$, with eigenvalues $\Lambda_1 < \Lambda_2 < \cdots < \Lambda_{N+1}$. Let $A$ be the principal sub-matrix obtained from $H_\mathcal{N}$, by removing the Hartree-Fock reference row and column. Let the eigenvalues of $A$ be $\lambda_1 < \lambda_2 < \cdots < \lambda_N$. Assume that $A$ is positive definite, that is, $\lambda_1>0$. Then, $0 < \lambda_1 \le \Lambda_1 + \Delta(H_\mathcal{N})$, where $\Delta(H_\mathcal{N})=\Lambda_2-\Lambda_1$. As a consequence, $\kappa(A)=\frac{\lambda_N}{\lambda_1} \geq
\frac{\lambda_N}{\Lambda_1+\Delta(H_\mathcal{N})} \geq \frac{\Lambda_N}{\Lambda_1+\Delta(H_\mathcal{N})}$. 
\end{theorem}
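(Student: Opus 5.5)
The whole statement should follow from the Cauchy interlacing theorem for Hermitian matrices. Order the basis so that the Hartree--Fock reference is the first basis vector. Then $A$ is the $N\times N$ principal sub-matrix of the real symmetric $(N+1)\times(N+1)$ matrix $H_\mathcal{N}$ obtained by deleting one row and the corresponding column. Interlacing then gives, for every $k=1,\dots,N$,
\begin{equation*}
\Lambda_k \le \lambda_k \le \Lambda_{k+1}.
\end{equation*}
Note that $A$ here is the restriction of $H_\mathcal{N}=H-E_{\mathrm{HF}}I$ to the excited determinants, i.e.\ the matrix $\bra{\zeta}\hat{H}_{\mathcal{N}}\ket{\eta}$ of Eq.~(\ref{eq:srlccsdlin}). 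Its symmetry is inherited from the Hermiticity of $\hat{H}$ in a real orbital basis.

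The argument then has three steps.

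\textbf{Step 1 (upper bound on $\lambda_1$).} Take $k=1$ in the right-hand interlacing inequality:
\begin{equation*}
\lambda_1 \le \Lambda_2 = \Lambda_1 + (\Lambda_2-\Lambda_1) = \Lambda_1+\Delta(H_\mathcal{N}).
\end{equation*}
Combined with the hypothesis $\lambda_1>0$, this gives $0<\lambda_1\le \Lambda_1+\Delta(H_\mathcal{N})$. In particular $\Lambda_1+\Delta(H_\mathcal{N})=\Lambda_2>0$, which is what makes the later divisions legitimate.

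\textbf{Step 2 (first inequality for $\kappa$).} Since $A$ is positive definite, $\lambda_N>0$ and $\kappa(A)=\lambda_N/\lambda_1$. The map $t\mapsto \lambda_N/t$ is decreasing on $t>0$, so replacing $\lambda_1$ by the larger positive number $\Lambda_1+\Delta(H_\mathcal{N})$ yields
\begin{equation*}
\kappa(A)\ge \frac{\lambda_N}{\Lambda_1+\Delta(H_\mathcal{N})}.
\end{equation*}

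\textbf{Step 3 (second inequality).} Take $k=N$ in the left-hand interlacing inequality to get $\lambda_N\ge \Lambda_N$. Dividing by the positive quantity $\Lambda_1+\Delta(H_\mathcal{N})$ preserves the inequality, which gives the final bound. This holds even if $\Lambda_N$ were non-positive, in which case the bound is simply trivial.

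The only genuine content is invoking interlacing with the correct index alignment: the upper neighbour $\Lambda_{k+1}$ for $\lambda_1$ and the lower neighbour $\Lambda_k$ for $\lambda_N$. Beyond that, the one thing to get right is the sign bookkeeping that justifies dividing, namely $\Lambda_2\ge\lambda_1>0$. The strict ordering of eigenvalues assumed in the statement is not needed, since interlacing holds with multiplicities.
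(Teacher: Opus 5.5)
Your proposal is correct and follows essentially the same argument as the paper. Both use Cauchy interlacing at the bottom of the spectrum ($\lambda_1 \le \Lambda_2 = \Lambda_1 + \Delta(H_\mathcal{N})$), positivity of $\lambda_1$ to invert the inequality, and interlacing at the top ($\Lambda_N \le \lambda_N$) for the final bound. Your explicit remark that $\Lambda_2 \ge \lambda_1 > 0$ justifies the division is a slightly cleaner piece of sign bookkeeping than the paper's.
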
 

\begin{proof}
Since $A$ is a principal sub-matrix of $H_\mathcal{N}$, Cauchy’s interlacing theorem gives $\Lambda_1 \le \lambda_1 \le \Lambda_2$. Using the fact that $\Delta(H_\mathcal{N})=\Lambda_2-\Lambda_1$, we see that $\Lambda_2=\Lambda_1+\Delta(H_\mathcal{N})$ and thus $\Lambda_1 \leq \lambda_1 \leq \Lambda_1+\Delta(H_\mathcal{N})$. However, since $A$ is positive definite, $\lambda_1>0$, and hence $0<\lambda_1\le \Lambda_1+\Delta(H_\mathcal{N})$. 

It follows from here that $\frac{1}{\lambda_1} \geq \frac{1}{\Lambda_1+\Delta(H_\mathcal{N})}$. Multiplying both sides by $\lambda_N$, we obtain $\kappa(A) = \frac{\lambda_N}{\lambda_1} \geq \frac{\lambda_N}{\Lambda_1+\Delta(H_\mathcal{N})}$. We note that the denominator as well as the numerator are already positive, and thus is the same as its absolute value. 

Finally, applying the interlacing theorem at the upper end of the spectrum, $\Lambda_N \le \lambda_N \le \Lambda_{N+1}$, and thus $\kappa(A) \geq \frac{\Lambda_N}{\Lambda_1+\Delta(H_\mathcal{N})}$. 
\end{proof} 

The above relation that lower bounds $\kappa$ immediately indicates that knowledge of the gap alone is insufficient to determine or explain the growth of the condition number. Furthermore, even if a relation were to be established between $\kappa$ and $\Delta(H_\mathcal{N})$, computing the gap is as much effort as finding the extremal eigenvalues and thus, from a practical standpoint, the spectral gap is not a computationally cheaper proxy to $\kappa$. 

Our theorem also shows that knowledge of the condition number scaling of the Hamiltonian does not suffice to understand the condition number scaling of $A$. Our data shown in Figs. \ref{kappaA_kappaHsr:allmolecules} and \ref{kappaA_kappaHmr:chains} supports this observation; the scaling of $\kappa(H_{\mathcal{N}})$ and that of the $A$ matrix are not necessarily correlated for both the model molecules as well as chains that we consider. Additionally, Fig. \ref{kappaA_kappaHmr:allmolecules} also shows that this behaviour is carried forward in the icMRLCC case, although we do not derive an analogous theorem for it and defer it for a later study. 

\subsection{Scaling of sparsity}\label{subsec:s-scaling} 

\begin{center}
    \textit{a. QLS-SRLCC} 
\end{center} 

For the single-reference case, the sparsity grows as $\sim n_{occ}^2 n_{vir}^2$, because for any particular row or column of the CI matrix, an element is zero if the determinants on each side of the Hamiltonian differ by more than two spin-orbitals. Therefore, this scaling is a consequence of the two-body nature of molecular Hamiltonians, is independent of excitation rank, $\mathfrak{E}$, and thus $s$ always `stagnates' at $\sim n_{occ}^2 n_{vir}^2$ independent of whether we use LCCSD or LCCSDT$\cdots \mathfrak{E}$. Later on, when we put things together in evaluating $R(N)$, we shall see that this observation of sparsity `stagnating' has important implications. 

\begin{center}
    \textit{b. QLS-icMRLCC} 
\end{center} 

We argue that $s$ scaling for the icMRLCC case is determined in the same way as it is for the SRLCC scenario, that is, it is always quadratic in the number of virtuals when we fix the number of occupied spin-orbitals. This is due to the fact that active orbitals can participate in an excitation, but excited functions are determined by virtual orbitals or created particles. We consider an example, and for that purpose, take recourse to a diagrammatic representation of Hamiltonian matrix elements used in Ref. \cite{Andreas2011} for ease of presenting our reasoning. Fig. \ref{matrixelet} shows the diagram of a matrix element of the two-body Hamiltonian (dashed horizontal line) between two doubly excited functions $\ket{\varphi_{ab}}$ and $\bra{\varphi_{cd}}$. The active spin-orbitals, denoted with a solid arrowhead, do not play any role in generating the excited function $\bra{\varphi_{cd}}$, and thus, for our purposes of finding if the matrix element is zero using Slater-Condon rules, we only need to look at the number of spin-orbitals by which the bra and ket differ (see Fig. 3 on Page 7 of \cite{Andreas2011} for more details). As they differ by two virtual spin-orbitals, the matrix element is not zero via Slater-Condon rules. In general, for icMRCI, $s$ is independent of the excitation rank and grows as $\sim (n^2_a + n^2_{c}) n^2_v$. 

In summary, $s$ scales quadratically in the number of virtuals for both the SRLCC and the icMRLCC cases when we keep the number of core and active spin-orbitals fixed. 

\begin{figure}[!h]
    \centering
  \includegraphics[scale=0.35]{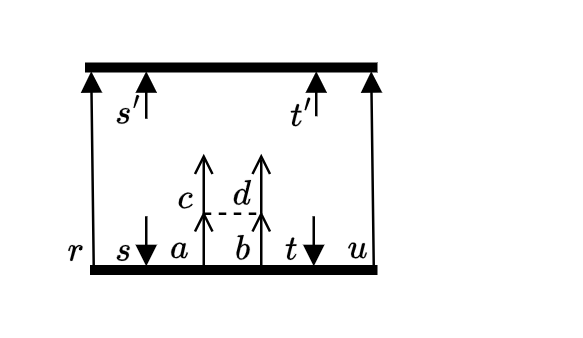}
\caption{Diagram of $\bra{\varphi_{cd}} H^{\text{fr}} \ket{\varphi_{ab}}$. The two solid black horizontal lines at the bottom and the top represent $\ket{\psi_0}$ and $\bra{\psi_0}$ respectively. Filled triangular arrow heads stand for active lines, whereas normal arrow heads stand for particle lines. The dashed horizontal line represents the two-body Hamiltonian.} \label{matrixelet}
\end{figure}

\subsection{Runtime ratios and quantum advantage}\label{subsec:R} 

We finally arrive at the runtime ratio estimate to determine if QLS-LCC offers an advantage. The discussion in this subsection holds for both SR and icMR cases, and we will denote the number of virtual orbitals by $n_v$ in both cases for simplicity. By combining Eqs. \ref{eq:thhl} and \ref{eq:tcg} and setting $1/\epsilon \sim \log{(N)}$, the runtime ratio between CG and HHL and that between CG and CKS is found to be: 

\begin{eqnarray}
R(HHL) = \frac{t_{CG}}{t_{HHL}} &\sim& \frac{N}{\log(N)}\frac{1}{s \kappa^{5/2}}\frac{\log{(\log{(N)})}}{\log{(N)}}, 
\end{eqnarray} 
and 
\begin{eqnarray}
R(CKS) = \frac{t_{CG}}{t_{CKS}} &\sim& \frac{N}{\log(N)}\frac{1}{ \kappa^{1/2}}\frac{\log{(\log{(N)})}}{\log{(s \kappa \log{(N))}}}, 
\end{eqnarray} 

respectively. We now substitute into the above expressions our observations for: 

\begin{itemize}
\item $\kappa \sim \log(n_v) \sim \log(N)$ for condition number scaling. The latter is due to the relation $N \sim n_v^\mathfrak{E}$ when the number of occupied orbitals are fixed ($\binom{n_v}{\mathfrak{E}} = \frac{n_v(n_v-1)\cdots (n_v-\mathfrak{E}+1)}{\mathfrak{E}!}$, which in the limit of large $n_v$ and under the condition of $\mathfrak{E}$ fixed boils down to $\sim n_v^\mathfrak{E}$). 
\item $s \sim n_v^2 \sim N^{\frac{2}{\mathfrak{E}}}$ for sparsity scaling. The latter results is due to the fact that $N \sim n_v^\mathfrak{E}$ leads to $n_v \sim N^{\frac{1}{\mathfrak{E}}}$, and thus $s \sim n_v^2 \sim N^{\frac{2}{\mathfrak{E}}}$. 
\end{itemize}

Therefore, we obtain 

\begin{eqnarray}
R(HHL) &\sim& \frac{N^{\frac{\mathfrak{E}-2}{\mathfrak{E}}}}{(\log{(N)})^{9/2}}\log(\log(N)), \ \mathrm{and} \\ 
R(CKS) &\sim& \frac{N}{(\log(N))^{3/2}} \frac{\log(\log(N))}{\log(N^{\frac{2}{\mathfrak{E}}}(\log(N))^2)}. 
\end{eqnarray} 

When $\mathfrak{E}>2$ (LCCSDT or beyond) in the case of HHL, we see an exponential separation in $\log(N)$ in the expressions in the form of $N/\log(N)$, indicative of exponential advantage. For CKS, $\mathfrak{E}=2$ (LCCSD or beyond) too yields an exponential separation in $\log(N)$. 

In terms of $n_v$, the expressions become 

\begin{eqnarray}
R(HHL) &\sim& \frac{n_v^{\mathfrak{E}-2}}{(\log{(n_v)})^{9/2}}\log(\log(n_v)), \ \mathrm{and} \\ 
R(CKS) &\sim& \frac{n_v^{\mathfrak{E}}}{(\log(n_v))^{3/2}} \frac{\log(\log(n_v))}{\log(n_v^{2}(\log(n_v))^2)}. 
\end{eqnarray} 

The exponent on $n_v$ is $\mathfrak{E}-2$ in the case of $R(HHL)$, and thus, $\mathfrak{E}=2$ no longer gives advantage. However, the exponent on $n_v$ is $\mathfrak{E}$ in the case of CKS, due to which we obtain an advantage in the case of LCCSD and beyond when we use the CKS algorithm. Our result points at the usefulness of near-optimal QLSs in providing advantage in cases where HHL does not. 

\begin{figure*}[t]
    \centering
    \begin{tabular}{ccc} 
    \includegraphics[scale=0.5]{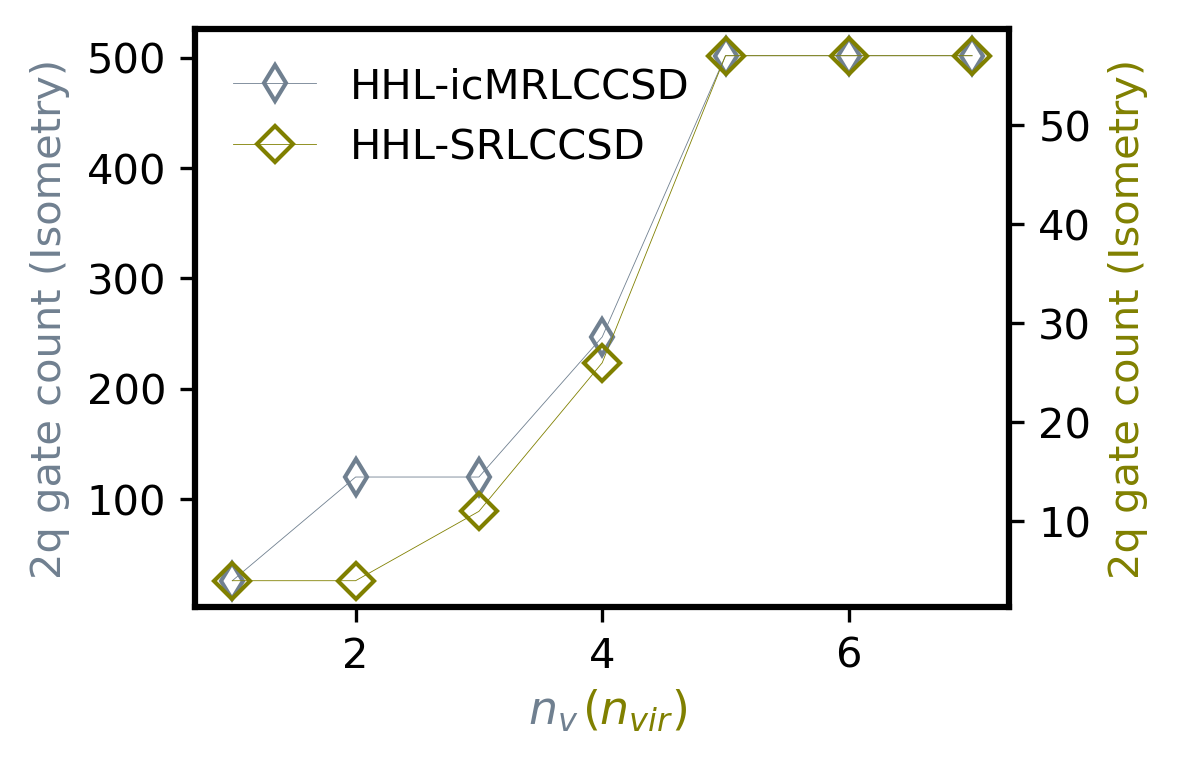} & 
\includegraphics[scale=0.5]{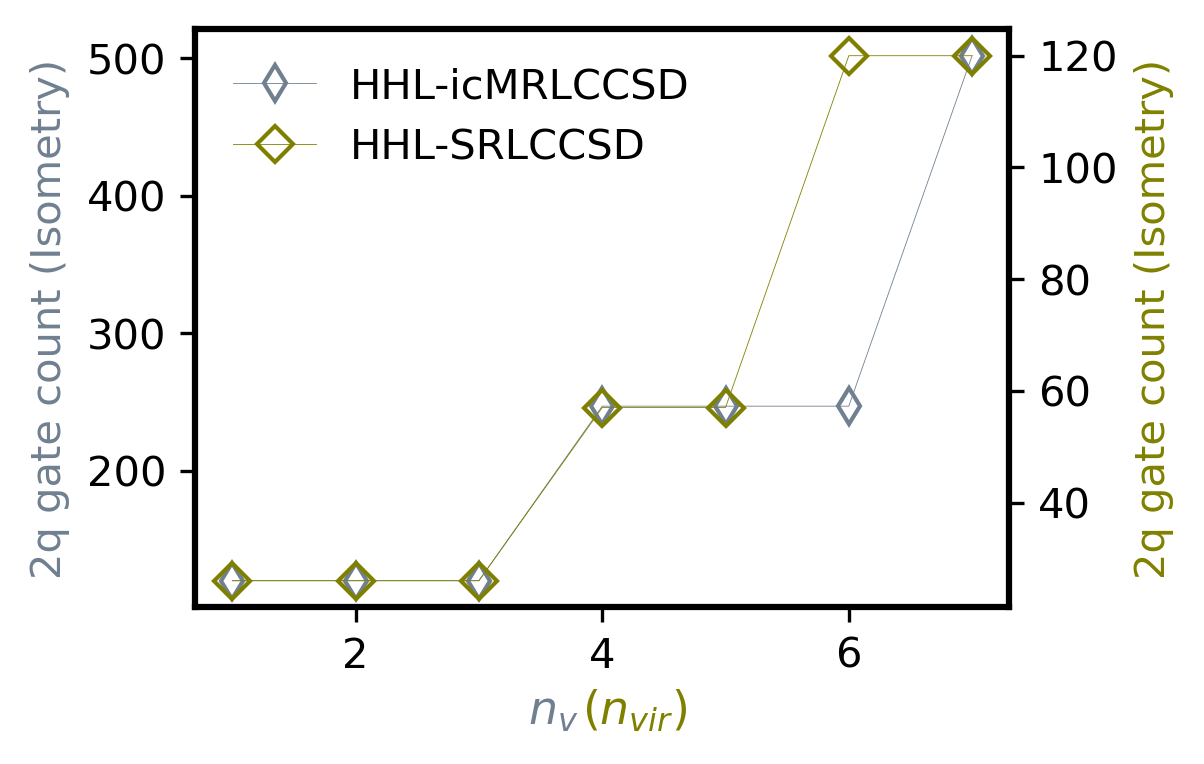} & 
\includegraphics[scale=0.5]{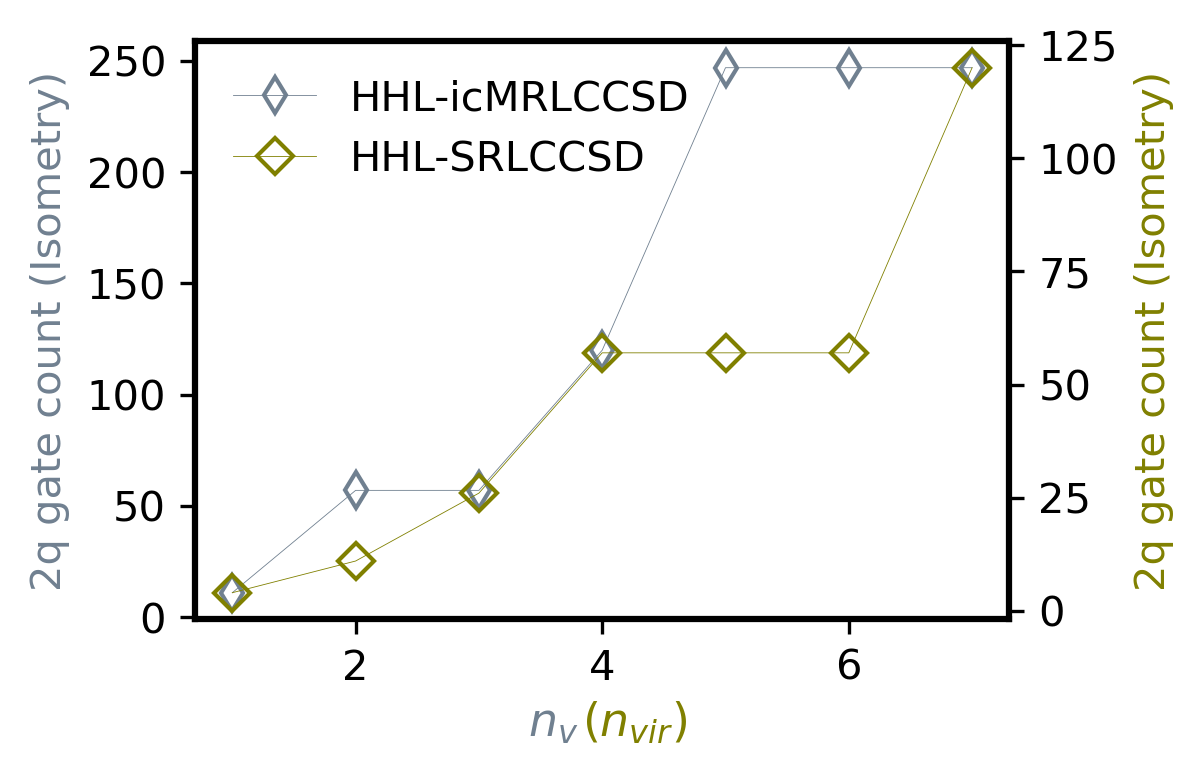} \\
(a) & (b) & (c) 
    \end{tabular}
\caption{Figure showing the growth of the number of two-qubit gates incurred in the isometry unitary for the case of (a) LiH, (b) BeH$^{2+}$, and (c) H$_4$ molecules versus the number of virtual orbitals, $n_v(n_{vir})$.} \label{isometry_scaling}
\end{figure*} 

\begin{figure*}[t]
    \centering
    \begin{tabular}{ccc}    \includegraphics[scale=0.5]{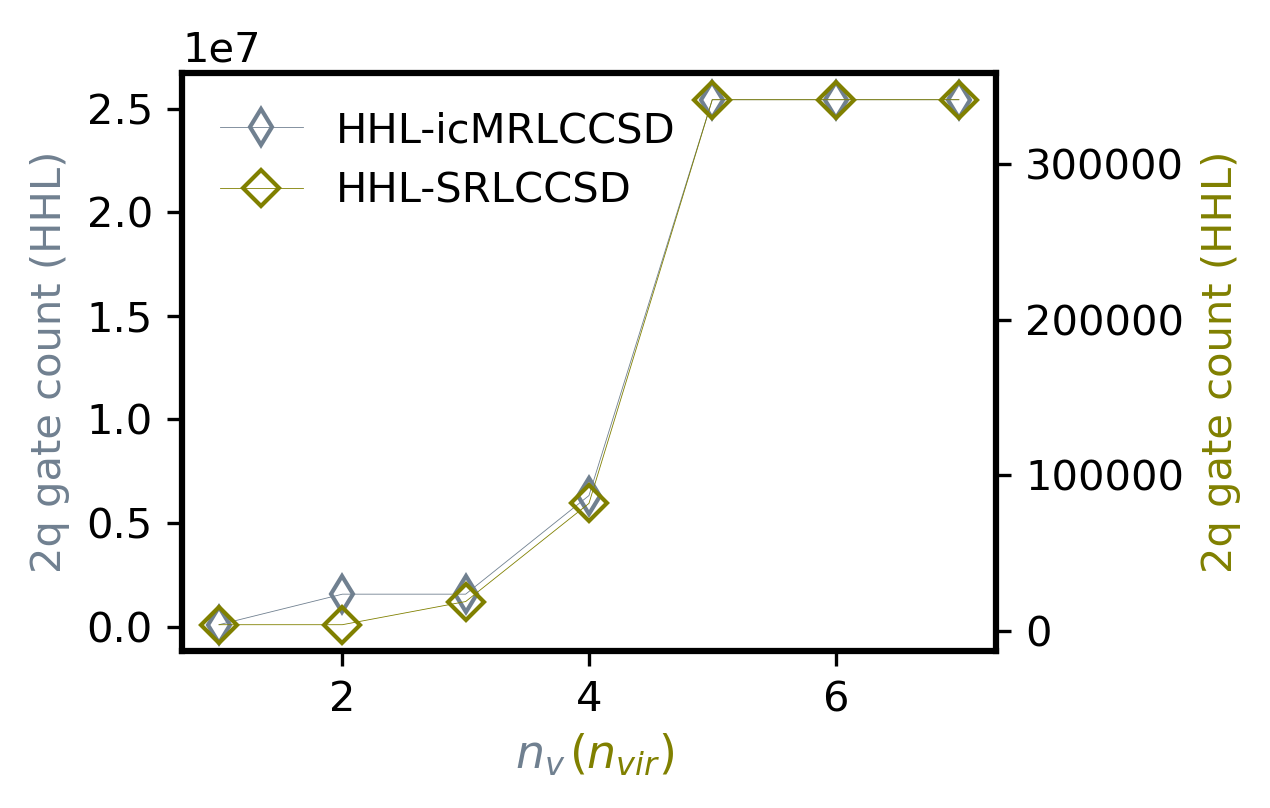} & 
\includegraphics[scale=0.5]{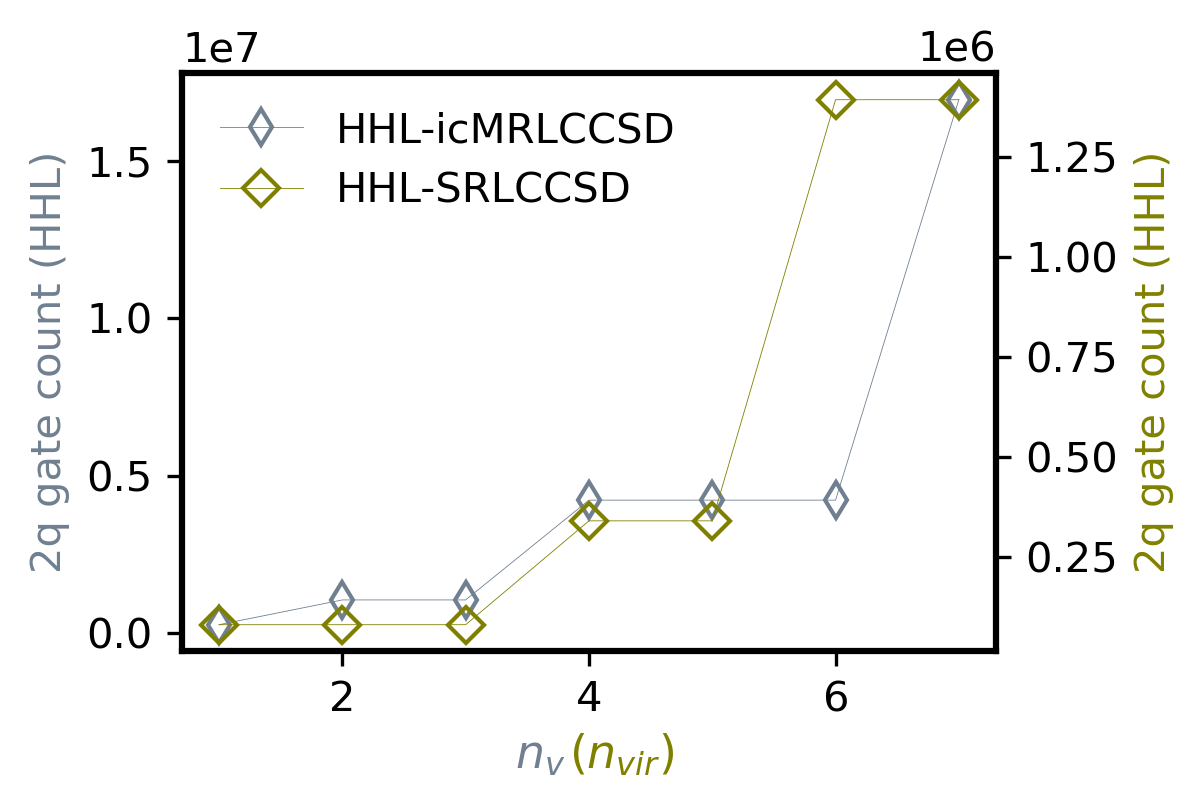} & 
\includegraphics[scale=0.5]{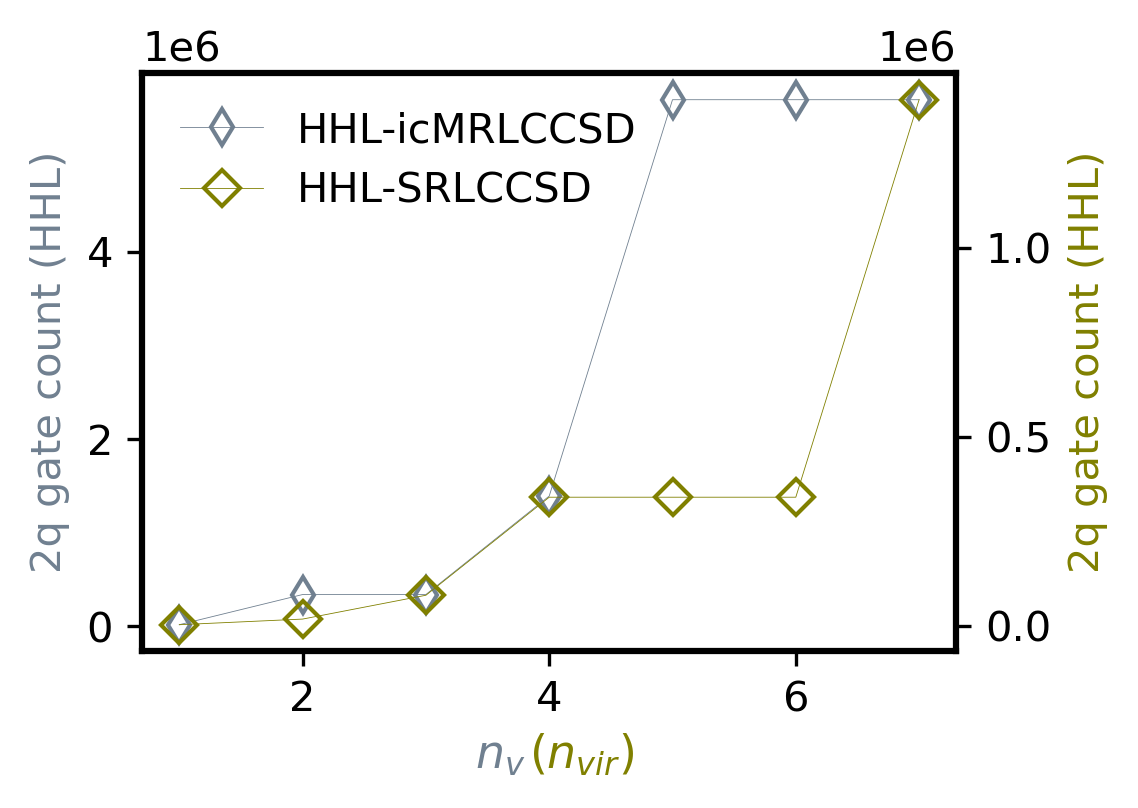} \\
(a) & (b) & (c) 
    \end{tabular}
\caption{Figure showing the growth of the number of two-qubit gates incurred in the HHL-LCCSD unitary for the case of (a) LiH, (b) BeH$^{2+}$, and (c) H$_4$ molecules versus the number of virtual orbitals, $n_v(n_{vir})$.} \label{2qhhl_scaling}
\end{figure*} 

\subsection{Gate counts}\label{subsec:gate_scaling} 

We now go beyond asymptotics and briefly discuss some practical overheads, by beginning with the two-qubit and $T$ gate counts. The use of $T$ to denote the logical version of the physical $T=\begin{pmatrix}
1& 0 \\ 
0& e^{i \pi/4}
\end{pmatrix}$ gate, instead of the cluster operator $\hat{T}$ as was the case in other sections, is a notation specific to only this sub-section. 

We are interested in the number of two-qubit gates at two levels of HHL: 

\begin{itemize}
\item The number of two-qubit gates required for the isometry circuit that is used to prepare the state $\ket{b}$, and 
\item the number of two-qubit gates involved in the entire HHL unitary and the overlap module. 
\end{itemize}

For both cases, we simulate and decompose the circuit using the following representative gate set $\{Rx, Ry, Rz, Rxx\}$ where $Rx$ is a shorthand for $Rx(\theta)=e^{-i \theta X/2}$, and similarly for $Ry$ and $Rz$, such that $Ry(\theta)=e^{-i \theta Y/2}$ and $Rz(\theta)=e^{-i \theta Z/2}$. In the expressions, $X$, $Y$, and $Z$ are the $2 \times 2$ Pauli matrices. Also, $Rxx = e^{-i \theta X \otimes X/2}$. We expect that changing between standard gate sets would not affect gate count estimates. For example, one could go from an $Rxx$ gate to two $CX$ gates by using the fact that $Rxx$ can be decomposed as $Rzz$ and some Hadamard gates, with $Rzz$ being representable by a Pauli gadget with 2 $CX$ gates. Thus, the number of $Rxx$ gates are comparable with the number of $CX$ gates to within a factor of 2 (or more depending on native gates that a target quantum hardware admits). We analyze only the number of two-qubit gates because they are usually the primary source of errors in quantum hardware. 

Figs. \ref{isometry_scaling} and \ref{2qhhl_scaling} show our HHL results for the number of $Rxx$ gates versus the number of virtual orbitals, for three model systems: LiH, BeH$^{2+}$, and H$_4$, with computational details being discussed in Section \ref{sec:calc}. While the overall trends in gate count are very similar between SRLCCSD and the icMRLCCSD cases, the number of gates grows faster in the latter. 

The previous paragraph did not account for the $T$ gate counts that become relevant in a fault-tolerant implementation with, for example, the surface code. An estimate for the (logical) $T$ gate count can be arrived at by considering works from literature that bound the quantity based on the error ($\zeta$) we admit in the angles of arbitrary $Rz$ rotation gates that occur in Pauli gadgets of the Hamiltonian simulation module of QPE that sits inside the HHL circuit. Each gadget admits 1 $Rz$ gate, and there are $\sim N_s^4$ Hamiltonian terms for $N_s$ spin-orbitals per Hamiltonian simulation module. A QPE computation consists of $2^0 + 2^1 + \cdots + 2^{n_r-1}\sim 2^{n_r}$ controlled unitaries, that is, $\sim 2^{n_r}$ Hamiltonian simulation modules, and thus the number of $Rz$ gates is $\sim 2^{n_r}N_s^4$. Each $Rz$ gate can be decomposed into $\sim \log(1/\zeta)$ \cite{Ross2016}, leaving us with a simplistic estimate of $\sim 2^{n_r}N_s^4\log(1/\zeta)$ logical $T$ gates. We note that this is not exponential, as $n_r$ is expected to grow rather slowly with system size, as seen earlier. In fact, the $T$ count is dominated by the number of Hamiltonian terms and the error in decomposing $Rz$. Further resource estimation exercises such as determining number of physical qubits and runtime, as well as compilation strategies to lower $T$ counts, are beyond the scope of our current work, and we defer it for a future study. We also defer the problem of $T$ count in the input state preparation circuit for a future work. 

\subsection{Pre-processing costs}\label{subsec:preproc-scaling} 

Although a QLS-LCC framework itself may potentially offer an exponential advantage, one needs to carry out all the classical pre-processing steps beginning from HF to atomic orbital (AO) to molecular orbital (MO) integral transformation in constructing $A$ and $\vec{b}$. These steps are not limited to HHL but are necessary for algorithms such as QPE too, except (i) one builds $H$ in place of $A$ in that case, and (ii) constructs an input state to QPE that offers flexibility (build an input state such that the square of the overlap between the input and target state is sufficiently large) unlike in the case of HHL (where $\vec{b}$ and thus its normalized amplitude encoded version are fixed). Naturally, these classical pre-processing steps could be costlier than the quantum algorithmic step in the workflow, and while it is an open problem to find efficient quantizations of these steps, it is worth visiting them briefly to put in perspective the problems to solve before reaping the potential benefits of an exponential advantage that QLS-LCC may offer in an end-to-end workflow. We also add that rather than viewing the cost from pre-processing as a limitation to QLS-LCC, we interpret it as identifying the remaining bottleneck in the end-to-end QLS-LCC workflow. 

\begin{center}
    \textit{a. Obtaining $A$ and $\vec{b}$} 
\end{center} 

 We pick HHL-SRLCCSD and HHL-icMRLCCSD for illustration, although the procedure is similar for calculations involving higher excitation ranks. Furthermore, the description of the steps for SRLCC is solely for completeness, as we already discussed this in earlier sections in brief: 

\begin{enumerate}[leftmargin=1.1em] 
\item For HHL-SRLCCSD: 
\begin{itemize}[leftmargin=0.8em]
    \item We carry out an HF calculation to construct Hamiltonian integrals in the MO basis. We use the GAMESS
US program \cite{GAMESS} for this step.
    \item With the information on integrals, we build the CISD Hamiltonian matrix using the Slater-Condon rules. 
    \item We shift the diagonal entries of the CISD Hamiltonian by subtracting the HF energy from each $H_{ii}$. 
    \item We slice away the first row and first column to identify the resulting principal sub-matrix as the $A$ matrix of an SRLCCSD calculation. The sliced column with the first row information removed is $\vec{b}$. 
\end{itemize}
\item For HHL-icMRLCCSD: 
\begin{itemize}[leftmargin=0.8em]
    \item We carry out a CASSCF calculation to construct the Hamiltonian integrals in the molecular orbital basis. We employ the Molpro \cite{Molpro} package for this purpose. 
    \item Using the information on integrals, we build the icMRCISD Hamiltonian matrix ($H^{\text{rd}}$). `rd' here denotes `rank-deficient'. As explained before, some excitations are redundant; therefore, the Hamiltonian has some linearly dependent columns and is therefore near-singular. We will now outline the procedure that we use to make $H^{\text{rd}}$ full rank by removing redundancies. 
    \item We remove redundancies in $H^{\text{rd}}$ as follows, with Fig. \ref{A_and_b} outlining the steps: 
    \begin{enumerate}[label=\roman*),leftmargin=0.8em]
        \item We construct $\mathcal{S} \in \mathbb{R}^{n\times n}$ built out of $n$ excitation functions.
        \item We diagonalize $\mathcal{S}$ as follows: 
        \begin{equation}
        U^\dagger \mathcal{S} U = \Omega,
        \end{equation} 
        where $\Omega=\mathrm{diag}{(\omega_1,\omega_2,\cdots,\omega_n)}$ is the diagonal form of $\mathcal{S}$ with all its eigenvalues arranged in the descending order. $U$ is the matrix whose columns are composed of eigenvectors of $\mathcal{S}$ ordered in the same way as in $\Omega$. 
        \item We set a threshold value of $\varepsilon= 10^{-6}$ to slice $\Omega$ such that we keep all $n'$ eigenvalues that are greater than $\varepsilon$ (see Fig. \ref{A_and_b}). 
        \item The resulting rectangular matrix $\Omega_{\varepsilon}$, of dimension, of $n \times n'$ is then used to build the transformation matrix $X = U \Omega_{\varepsilon}^{-1/2}$ of dimension $n \times n'$. 
        \item Finally, $X^\dagger$ and $X$ are applied on both sides of $H^{\text{rd}}$ the following way to construct the full rank Hamiltonian matrix $H^{\text{fr}}$ (`fr' here denotes `full rank'), 
\begin{equation}
    H^{\text{fr}} = X^\dagger  H^{\text{rd}} X, 
\end{equation}
which has a dimension of $n' \times n'$. 
    \end{enumerate} 
    \item The matrix elements of $A$ in Eq. (\ref{eq16}) can be seen to belong to the principal sub-matrix of $H^{\text{fr}}$, with the excluded row and column that has matrix elements of the type $\bra{\varphi_\beta} H^{\text{fr}} \ket{\psi_{0}}$ and their corresponding conjugates. Because of this, after getting $H^{\text{fr}}$, we identify the row and column where such elements appear and remove them. Lastly, we subtract $E_{0}$ from the diagonal of the resulting matrix to get $A$. Furthermore, Eq. (\ref{eq16}) suggests that the elements of $\vec{b}$ are of the type $-\bra{\varphi_\beta}H^{\text{fr}} \ket{\psi_{0}}$. Therefore, we identify the row/column that has elements of that structure in $A$, remove the entry which has $E_0$ and keep the remaining elements to form $\vec{b}$. 
\end{itemize}
\end{enumerate} 

\begin{figure*}[t] 
\centering
    \hspace{-0.6cm} \includegraphics[scale=0.26]{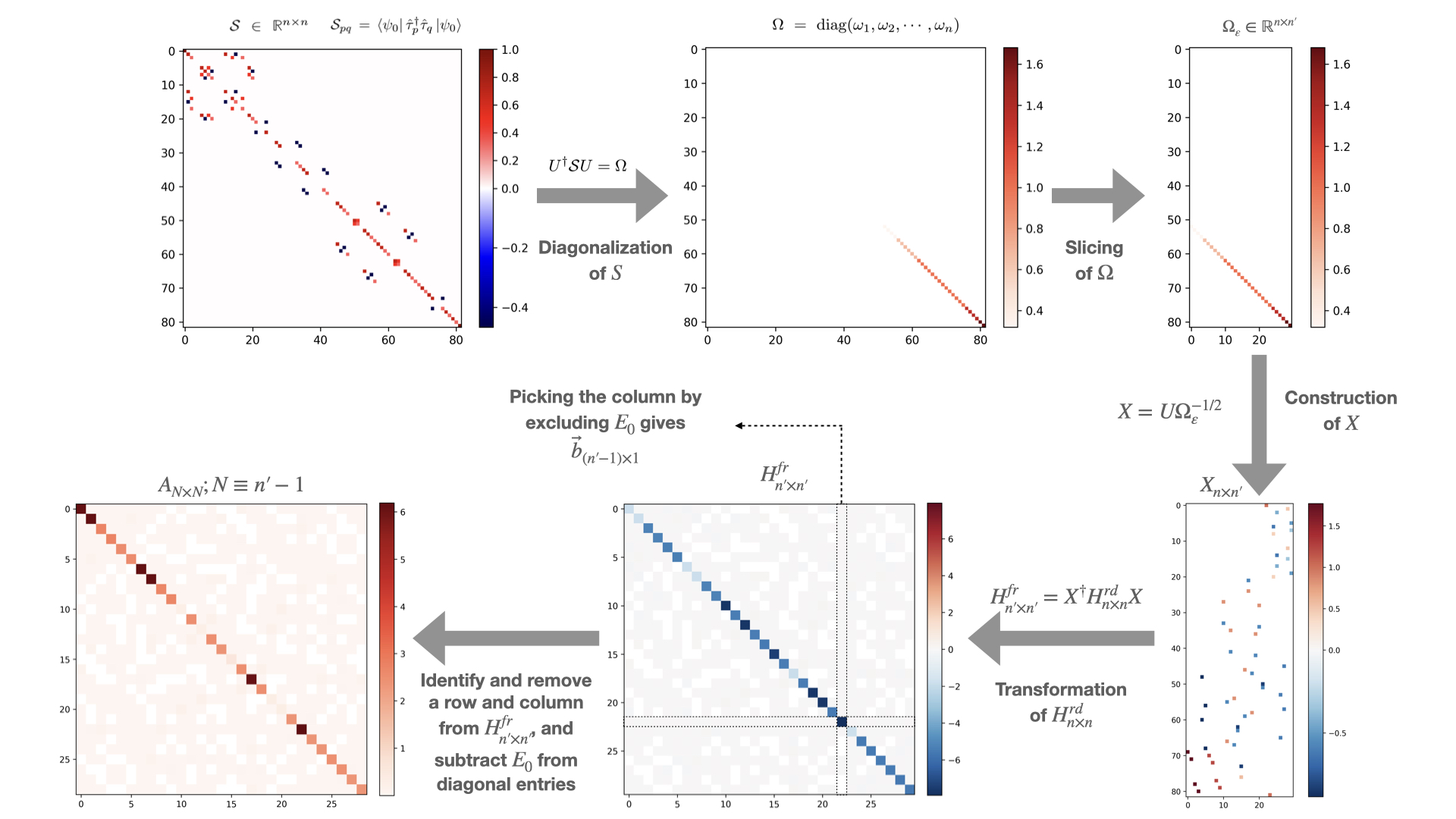}  
\caption{This figure summarizes the procedure used to construct the $A$ matrix and the $\vec{b}$ vector in an icMRLCC calculation, by picking a representative example molecule. The overlap matrix $\mathcal{S}$ is first constructed and diagonalized, which is then sliced such that only the eigenvalues greater than a threshold $\varepsilon$ are retained, leaving $n'$ such eigenvalues. Furthermore, the matrix $U$ made of all selected eigenvectors (eigenvectors corresponding to the eigenvalues kept after slicing) of $\mathcal{S}$ is used to construct the transformation matrix $X$, which finally acts on the rank-deficient Hamiltonian, $H^{\text{rd}}$, to form the full rank Hamiltonian, $H^{\text{fr}}$. The $A$ matrix is finally obtained by identifying and removing the row and column of $E_0$, whereas $\vec{b}$ is either that row or column without the entry of  $E_0$. } \label{A_and_b} 
\end{figure*} 

\begin{center}
    \textit{b. Cost of constructing $A$ and $\vec{b}$ classically in SRLCCSD and in icMRLCCSD} 
\end{center} 

\paragraph{SRLCCSD case:} 

The cost of preparing $A$ is dominated by computing the CISD matrix elements, as $A$ is essentially the CISD matrix after slicing the first row and column. We recall that the CISD matrix elements are computed through Slater-Condon rules. Among those, the costliest rule is the one where the matrix element involves two identical determinants, which reduces to $\sum_i \langle i |h|i\rangle + \sum_{i < j } \langle ij||ij \rangle $. Its cost is dominated by the two-body part, and is $\binom{n_{vir}}{2} \sim n_{vir}^2$. Since this rule is applied to the diagonal entries of the CISD matrix, which are $\sim n_{occ}^2 n_{vir}^2$ in number, the total cost in preparing $A$ is dominated by $\sim n_{occ}^2 n_{vir}^4$. 

Getting $\vec{b}$ costs $\sim n_{occ}^2 n_{vir}^2$ because it amounts to reading a particular row or column of the CISD matrix. 

\paragraph{icMRLCCSD case:} 

In the multi-reference case, the cost of preparing $A$ stems from the following parts: 

\begin{itemize} 
\item $H^{\text{rd}}$ construction: The most expensive terms in getting  $H^{\text{rd}}$ are the ones where the Hamiltonian possesses four particle lines, where two of those lines are involved in a contraction with an excitation operator and two others are open lines, leading to doubly excited functions. In such a situation, the excitation operator that contracts with $H^{\text{rd}}$ has as open lines either two-hole lines, or two active lines, or one active and one hole line, with respective computational costs of $n_c^2n_v^4$, $n_a^2n_v^4$, and $n_cn_an_v^4$, which gives an overall computational cost of $\sim (n_a^2 + n_c^2) n_v^4$ (See Figs. 3 (a), (b) and (c) of \cite{Andreas2011} for more details). The cost of construction $H^{\text{rd}}$ stems from the Hamiltonian contribution in each of these scenarios, which is $\sim n_v^4$.  
\item $\mathcal{S}$ construction: Constructing the overlap matrix depends solely on the cost of evaluating the reduced density matrices that appear in the expressions of each of its blocks,  and contains only active orbital indices. Reduced density matrices are defined by \begin{equation}
    \rho_{rs\cdots}^{tu\cdots} = \bra{\psi_0} a^\dagger_t a^\dagger_u \cdots a_s a_r \cdots \ket{\psi_0}. 
\end{equation}
Consider the example of the matrix element belonging to the excitation class (0,0):  
\begin{equation}
\begin{aligned}
\mathcal{S}_{s, t_1 t_2}^{r, u_1 u_2} &= \bra{\psi_0} (a_s^\dagger a_r)^\dagger (a^\dagger_{u_1} a^\dagger_{u_2} a_{t_1}a_{t_2}) \ket{\psi_0} \\
&= \rho_{t_1 t_2 s}^{r u_1 u_2} + \delta_{su_2} \rho_{t_1 t_2}^{r u_1} + \delta_{s u_1} \rho_{r_1 r_2}^{r u_2},
\end{aligned}
\end{equation}
where only active orbital indices are involved. 
The cost of this matrix element reduces the cost of its most expensive term, which is the three-body reduced density matrix $\rho_{t_1 t_2 s}^{r u_1 u_2}$. Similarly, every matrix element of $\mathcal{S}$ can be expressed in terms of these reduced density matrices involving only active orbitals. Therefore, the rate-determining terms in $\mathcal{S}$ are the $m$-body reduced density matrices $\rho_{rs\cdots}^{tu\cdots} = \frac{1}{(N_a -m)!} \sum_{vw \cdots} C^*_{tu \cdots vw\cdots}  C_{rs \cdots vw \cdots}$, where $C_{\cdots}$ are coefficients that occur in the expansion of $\ket{\psi_0}$. In the expression of $\rho_{rs\cdots}^{tu\cdots}$, $m$ electrons are moved from $r,s \cdots$ to $t,u \cdots$ after excitations, while electrons in $v,w \cdots$ are kept fixed. Thus, the upper bound on the cost of forming such density matrices is $\sim \binom{n_a}{N_a -m} \binom{n_a -(N_a - m)}{m}^2$. In the case of icMRLCCSD, $m$ can go up to 4, thus, the cost incurred is $\sim \binom{n_a}{N_a - 4} \binom{n_a -(N_a - 4)}{4}^2$. 
\item Diagonalization of $\mathcal{S}$: As mentioned earlier, the diagonalization of only active blocks of  $\mathcal{S}$ is needed. The cost of doing so is upper bounded by the cost of diagonalizing the biggest active blocks of $\mathcal{S}$, which is $\left( \binom{n_a}{2} \binom{n_a}{1}\right)^3 \sim n_a^9$. 
\item Construction of $X$: It can be shown that $X$ has the same block structure as $\mathcal{S}$ \cite{Andreas2011}, and therefore, the cost of building the matrix is determined by the cost of building the biggest active block of $X$. The biggest $U$ matrix block has size $n_a^3$. In the worst case where $\Omega_{\varepsilon}$ has the same size as $U$ (that is, no eigenvalue of $\mathcal{S}$ is negligibly small), the cost of building the biggest block of $X$ would be equal to the cost of multiplying $U$ with $\Omega_{\varepsilon}^{-1/2}$ (inverting $\Omega_\varepsilon$ costs $n_a^3$ since the matrix is diagonal) in that block, which is $\sim n_a^3$. 
\item Getting the set of new excitation operators: Let $\hat{T'} = \boldsymbol{\hat{\tau}} \cdot \mathbf{t} = \hat{\tau_1} t_1 + \hat{\tau_2} t_2+ \cdots $, where $\boldsymbol{\hat{\tau}}$ is a row vector containing all excitation operators, and $\mathbf{t}$, a column vector containing all excitation amplitudes. Once $X$ is obtained, the new set of excitation operators is obtained via $\boldsymbol{\hat{\tau'}} = \boldsymbol{\hat{\tau}} X$. Since $X$ has a block structure, all operators in $\boldsymbol{\hat{\tau'}}$ can be obtained by performing $\boldsymbol{\hat{\tau}} X$ in each block of $X$. Therefore, the cost of obtaining $\boldsymbol{\hat{\tau'}}$ is determined by the cost of performing $\boldsymbol{\hat{\tau}} X$ in the biggest block of $X$. Knowing that we have $\sim n_a^3$ elements and $\sim n_a^3$ operators in that block, we expend $\sim n_a^3 n_a^3 = n_a^6$ operations for this task. 
\item Getting the full rank Hamiltonian $H^{\text{fr}}$: Each matrix element of $H^{\text{fr}}$ can be written as $H^{\text{fr}}_{pq} = \bra{\psi_0} \hat{\tau}'^\dagger_p H^{\text{rd}} \hat{\tau}'_q \ket{\psi_0} = \sum_{q',p'} X_{q'p}^* X_{p'q} \bra{\psi_0} \hat{\tau}^\dagger_{q'p} H^{\text{rd}} \hat{\tau}_{p'q} \ket{\psi_0}$. Since the matrix elements of $X$ and $H^{\text{rd}}$ have already been computed in the earlier steps and are stored in memory, and $q'$ and $p'$ run from 1 to $n_a^3$ in the biggest active block, the cost of constructing the matrix elements of $H^{\text{fr}}$ is estimated to be $\sim n_a^6$. 
\end{itemize} 
Building $\vec{b}$ costs $\sim (n_a^2 + n_c^2) n_v^2$ because it is obtained by reading a particular row or column of the $H^{\text{fr}}$ matrix, which has size $\sim (n_a^2 + n_c^2) n_v^2$. 

\begin{center}
    \textit{c. The fine print: preparing $\ket{b}$ and loading $A$} 
\end{center} 

Once $\vec{b}$ has been constructed, the next natural step is to encode it into $\ket{b}$ via amplitude encoding. The presence of structure in $\vec{b}$ can lower the state preparation cost in this step. One can leverage Slater-Condon rules \cite{Slateryo, Condonyo} to show that matrix elements of $\vec{b}$ involving an excited determinant that arises from triple excitations and beyond are all zeroes, as well as invoke Brillouin theorem \cite{Surjan1989} to see that matrix elements involving singly excited determinants are all zeroes. This leads us to not having an amplitude encoded $\ket{b}$ with gates acting across $\lceil \mathrm{log}(\sum_{k=1}^{E} \binom{n_{\mathrm{occ}}}{k}\binom{n_{\mathrm{vir}}}{k})\rceil=\lceil \mathrm{log}(N)\rceil$ qubits but instead acting only across a smaller subset (by a constant factor) of $\lceil \mathrm{log}(n_D)\rceil$ qubits, where $N_D \sim n_{occ}^2 n_{vir}^2$ in the  SRLCC case (and $\sim (n_c^2 + n_a^2)n_v^2$ in the MRLCC case). The number of required CX gates can be estimated to be  $\sim 2^{\lceil \mathrm{log}(N_D)\rceil} \sim N_D$ using Mottonen's state preparation scheme \cite{Mottonen2005} in place of a larger $\sim 2^{\lceil \mathrm{log}(N)\rceil} \sim N$ two-qubit gates had Slater-Condon rules and Brillouin theorem not restricted the support of $\ket{b}$. It is not obvious if further reduction in gate count is possible in a general setting. 

With regard to the problem of efficient access to the $A$ matrix, it is worth noting that the matrix in LCC can possess certain structure owing to Slater-Condon rules, which may in turn be leveraged in designing a circuit for the oracle(s). However, a detailed investigation of matrix loading costs is beyond the scope of the current work, and we defer this task for a future study. 

\subsection{Post-processing costs} 

One of the oft-cited limitations of HHL is the prohibitive cost involved in extracting the output vector for large system sizes. The application of a QLS to quantum chemical problems need not involve reading $\vec{x}$, since one is typically interested in evaluating molecular properties using the elements of $\vec{x}$. As noted in earlier sections, the correlation energy can be obtained by appending a CSWAP circuit (or an ancilla-free version, the Hong-Ou-Mandel circuit) to the QLS circuit, whose inputs are $\ket{x}$, the output from a QLS, and $\ket{b}$. Although the exercise is beyond the scope of the current work, one could envisage computing molecular properties other than energy by using suitably block encoded versions of the operator of interest and using it in a Hadamard test module. The input to this module would be $\ket{x}$, the output from a QLS. It would be interesting to devise approaches on these lines in future works. 

\section{Proof-of-concept icMRLCC energy calculations}\label{sec:calc} 

Now that our analyses have indicated prospects of quantum advantage for the QLS-LCC problem, a natural follow up given the importance of the field of quantum chemistry on quantum computers is to see if practical calculations yield good accuracies, for example, with limited $n_r$. While quantum hardware computations are outside the scope of our work, we perform simulations of potential energy curves (PECs) on three model molecules, LiH, H$_4$, and BeH$_2$, that are known to exhibit multi-reference nature along their PECs. We pick the simplest QLS, that is, HHL, for our computations. We discuss the computational details and our results in the succeeding paragraphs. 

\subsection{Computational details} 

We run all HHL simulations on Qiskit 0.37.2. For LiH, and H$_{4}$ molecules, we used the 6-31G split-valence basis sets. We run all HHL simulations on Qiskit 0.37.2. We systematically increased $n_r$, which we recall is the number of clock register qubits, until we reached satisfactory accuracy in our energies. To that end, we report our results with $n_r$ set to 8 for LiH, 9 for BeH$_2$, and 6 for H$_4$. We also note that the $A$ matrix sizes (which in turn depend on the number of orbitals considered for our calculations) as well as the choice of $n_r$ depended on our access to high-performance computing resources. To that end, our qubit counts stated in the format $n_r + 2n_b + 1$ are given below: 

\begin{itemize}
\item LiH: $8 + 2 \times 8+1=25$, 
\item H$_4$: $6 + 2 \times 6 +1=19$, and 
\item BeH$_2$: $9 + 2 \times 9+1 = 28$. 
\end{itemize} 

Since BeH$_{2}$ has more electrons compared to other systems, which in turn leads to more qubits for our HHL computations, we considered only four $s$ orbitals and five $p$ orbitals, which led to a 28-qubit calculation. 

We now comment on the choice of the reference active space. For CASSCF computations, the choice of the active orbitals is essential to capture the maximum amount of static correlation. By `reference active space', we mean the space composed of active orbitals (having only partial occupancy) and active electrons. We denote a reference active space as $(N_a, n_a/2)$. For the LiH system, we consider the $2s$ orbital of Li and the $1s$ orbital of H as the active orbitals. Therefore, the active space is $(2, 2)$. For H$_{4}$, the $1s$ orbital of each H atom is considered; therefore, the active space is $(4, 4)$. In BeH$_{2}$, the $2s$ and $2p$ orbitals of Be and the $1s$ orbital of each H are included, resulting in a $(4, 6)$ active space. 

\subsection{HHL-icMRLCCSD numerical results} \label{results} 

The comparison between standard single-reference methods, CCSD and CCSD(T), and multi-reference methods, icMRCISD and HHL-icMRLCCSD, is illustrated by plotting $E^{\text{SR}/\text{icMR}} - E_{FCI}$ where $E^{\text{SR}/\text{icMR}}$ represents the total energy obtained from each of these methods. Although $E^{\text{SR}/\text{icMR}}$ and $E_{FCI}$ are total energies, their difference still tells us about how much correlation energy is captured. 

\begin{center}
    \textit{a. LiH} 
\end{center} 

Fig. \ref{Ediff:lih-h4-beh2}(a) presents our findings as we stretch the bond distance between the two constituent atoms of LiH. Throughout the PEC, both the single- and multi-reference methods agree reasonably well, and differ only at the sub-mHa level in their energies. For context, the total FCI energy is $-7.998358$ Ha at equilibrium geometry (3.023 Bohrs), and the correlation energy in the same geometry is $-0.019036$ Ha, and thus the deviation between single- and multi-reference methods across the PEC is an order of magnitude smaller than the correlation energy at equilibrium geometry. As the figure shows, the disagreement between the single- and multi-reference methods becomes apparent, especially at stretched geometries. The results from icMRCISD are in close agreement with FCI, indicating that across the range of bond lengths considered, not only do triples and quadruples not contribute to energy significantly (the close agreement between CCSD and CCSD(T) results also reflect the diminished importance of triples), but also the active space built from the two valence electrons (unsurprisingly) matters the most. On the other hand, HHL-icMRLCCSD, while in good agreement most of the time, shows some deviations at specific geometries. This is not necessarily correlated to the ability of icMRLCCSD in capturing multi-reference effects in those geometries, but rather the inadequacy in the number of HHL ancilla qubits that we have chosen in capturing eigenvalues. We finally note that although our interest was in stretching the bond, we also obtain energies when we lower the bond length below the equilibrium value. Our data indicate that a multi-reference character is not too important when the bond is compressed, within the range of considered geometries. In fact, at 2.000 Bohrs, we find that the two most important CASSCF coefficients correspond to 0.991937 and -0.126724 configurations ($|220\rangle$ and $|202\rangle$). On the other hand, at about 7.550 Bohrs, 0.810444 and -0.585814 matter the most ($|220\rangle$ and $|202\rangle$). 


\begin{figure*}[t]
\centering
\hspace{-1.5cm}
\includegraphics[scale=0.72]{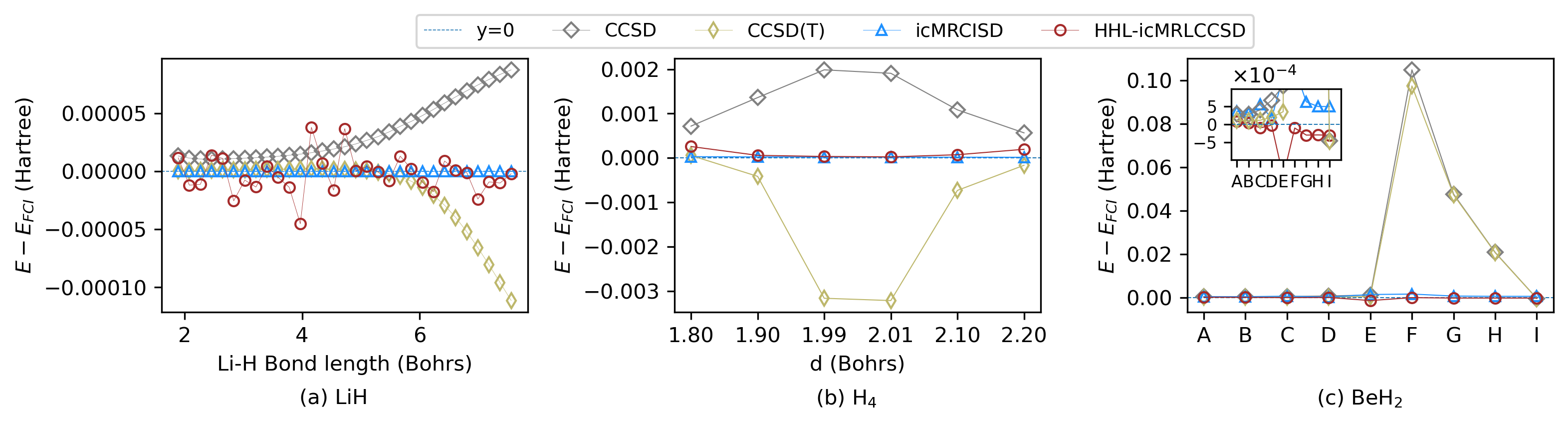}
\caption{Figure showing the energy difference between the total energy and the FCI energy of LiH, H$_4$ and BeH$_2$ between CCSD, CCSD(T), icMRCISD and HHL-icMRLCCSD at given stretched geometries. } \label{Ediff:lih-h4-beh2}
\end{figure*}

\begin{center}
    \textit{b. H$_{4}$} 
\end{center} 

Our data from Fig. \ref{Ediff:lih-h4-beh2}(b) show that the two multi-reference methods, HHL-icMRLCCSD and icMRCISD, are the most stable throughout the range of $d$ values considered for the PEC. We recall that $d$ is the separation between the two H$_2$ sub-systems, as panel IV of Fig. \ref{fig1} shows. Furthermore, the two H atoms in each of the H$_2$ sub-systems are separated by 2.000 Bohrs. Our range for $d$ varies by $0.200$ Bohrs on either side of $d=2.000$, which is the square geometry; these are the distances where strong correlation effects are expected to be present with them being strongest in the vicinity of $d=2.000$ Bohrs. Returning to our results, we find that the icMRCISD approach provides results that are within $0.010$ mHa agreement with FCI, while HHL-icMRLCCSD gives results within 0.100 mHa agreement with FCI (see Table \ref{taballmolecules}). We also note that the maximum correlation energy obtained using FCI is $\sim 0.100$ Ha at the two near-square geometries ($d=1.990$ and $d=2.010$ Bohrs). This indicates that a substantial part of the correlation effects have been captured by both HHL-icMRLCCSD and icMRCISD. With regard to the single-reference theories (CCSD and CCSD(T)), they also provide reasonable results at $d=1.800$ and $d=2.200$ Bohrs, yielding deviations of at most 0.100 mHa away from FCI. Our results from CASSCF calculations show that even at these geometries, the HF determinant is not the most dominant configuration: $-0.296014 \ket{2200} + 0.932614\ket{2020}$ at $d$=1.800 and $-0.347915 \ket{2200}+0.907368\ket{2020}$ at $d=2.200$. We also observe the importance of perturbative triples at $d=1.990$ and $d=2.010$, as CCSD(T) yields $E^{\text{SR}} - E_{FCI}$ that is closer to FCI than CCSD. Lastly, we notice the superiority of multi-reference methods at near-square geometries, as CCSD and CCSD(T) shift from sub-mHa to mHa separation from FCI, whereas HHL-icMRLCCSD and icMRCISD keep a consistent $0.010$ mHa separation from FCI. Once again, this is expected because at $d$=1.990 Bohrs and $d=2.010$ Bohrs, two configurations are important in the active space: $-0.669528 \ket{2200} +0.709094 \ket{2020}$ at $d=1.990$ Bohrs, and $-0.669486 \ket{2200} + 0.708673\ket{2020}$ at $d$=2.010 Bohrs. 

\begin{center}
    \textit{c. BeH$_{2}$} 
\end{center} 

Next, we consider the Be insertion problem \cite{Purvis-1983}, which happens when a Be atom is inserted into H$_2$ molecule to form the BeH$_2$. As the Be atom gets closer to H$_{2}$ and the latter's bond gets stretched, the $2p$ orbital of Be interacts the anti-bonding orbital of H$_{2}$, and the entire compound transitions to BeH$_2$ through an avoided crossing between the ground and first excited state. Fig. \ref{Ediff:lih-h4-beh2}(c) shows that all four methods we consider perform reasonably well across geometries A through D, yielding $\sim 0.100$ mHa to $\sim 1.000$ mHa energy differences with respect to FCI. At the geometry E, CCSD, icMRCI, and HHL-icMRLCCSD start deviating from $\sim 0.100$ mHa to $\sim 1.000$ mHa energy differences (See Table \ref{taballmolecules}). We also note that the deviations are well below the scale of the correlation energy in geometry E, which is $\sim 64.000$ mHa. While other methods tend to perform worse, CCSD(T) performs the best by giving 0.100 mHa difference with FCI, showing the importance of including perturbative triples. Point F shows that both CCSD and CCSD(T) start performing worse, giving a result of $\sim 104.000$ mHa and  $\sim 97.000$ mHa, respectively, whereas the correlation energy is $\sim 195.000$ mHa. This means that CCSD and CCSD(T) could capture only $\sim  50 \%$ of correlation effects at that geometry. On the other hand, HHL-icMRLCCSD and icMRCISD perform very well, giving $\sim  0.100$ mHa and 1.500 mHa, respectively. The CASSCF vector at that geometry is $0.898607\ket{2220000} -0.300455\ket{2200002}$, which also shows the importance of using multi-reference theories under such circumstances. The same situation occurs at points G and H with the dominance of icMRCISD and HHL-icMRLCCSD. Finally, at point I, all four methods stabilize within a sub-mHa energy difference from FCI. 

\begin{table}[!h]
\begin{tabular}{ccc}
\hline\hline
Label & \multicolumn{2}{c}{Coordinates of Hydrogens (($x, y, z$) in Bohrs)} \\\hline
                   & H                      & H                     \\
A             & (0.000, $2.540$, 0.000)       & (0.000, $-2.540$, 0.000)       \\
B            & (0.000, $2.080$, 1.000)     & (0.000, $- 2.080$, 1.000)    \\
C             & (0.000, $1.620$, 2.000)     & (0.000, $-1.620$, 2.000)     \\
D            & (0.000, $1.390$, 2.500)     & (0.000, $-1.390$, 2.500)     \\
E             & (0.000, $1.275$, 2.750)   & (0.000, $-1.275$, $2.750$)    \\
F             & (0.000, $1.160$, 3.000)      & (0.000, $-1.160$, 3.000)     \\
G             & (0.000, $0.930$, 3.500)      & (0.000, $- 0.930$, 3.500)    \\
H            & (0.000, $0.700$, 4.000)     & (0.000, $-0.700$, 4.000)    \\
I            & (0.000, $0.700$, 6.000)    & (0.000, $- 0.700$, 6.000)  \\\hline
\end{tabular}
\caption{Table giving the coordinates of the two hydrogen atoms in BeH$_2$. Be is fixed at the origin throughout. } \label{tabbeh2}
\end{table} 

\section{Conclusion}\label{sec:conclusion} 

In summary, we investigated the possibility of quantum advantage from quantum linear solvers (QLSs) for quantum chemistry, and in particular, `solving’ the linearized coupled cluster (LCC) equations. To make the question general and applicable beyond weakly correlated systems, we first extend the existing single-reference version (QLS-SRLCC) to its multi-reference counterpart, thereby introducing the QLS-internally contracted multi-reference LCC (QLS-icMRLCC) framework. We show that both the SRLCC and the icMRLCC amplitude equations can be expressed as linear systems, naturally motivating the use of QLSs for their solution. 

The central objective of this work is to assess the condition number ($\kappa$) and sparsity ($s$) scaling of the $A$ matrix, that is, to gauge if they scale favourably (at most as a polylogarithmic function of $N$). The former is computationally demanding, thereby rendering the assessment of advantage using a QLS itself a challenging computational task. Thus, although we explicitly check $\kappa$ scaling with $N$ for small model systems and find that it is polylogarithmic in $N$, we also propose based on our data the use of two other diagnostics, but which are increasingly indirect while also computationally less expensive. The first is the ratio of the maximum to the minimum diagonal entries of $A$. We show the conditions on $A$ under which $\kappa$ is a function of this ratio, for both diagonally dominant and non-diagonally dominant cases. In fact, our data shows that the ratio `tracks’ $\kappa$, displaying polylog in $N$ growth for almost all of the systems we consider. The other diagnostic involves invoking and adapting to our problem the data-driven edge spawning conjecture, through which we infer that if the conjecture were true, then our data shows a polylogarithmic $\kappa$ growth. 

We further argue that $s$ grows sub-linearly as $N^{2/\mathfrak{E}}$, with $\mathfrak{E}$ being the excitation rank, for both SRLCC and icMRLCC cases, thus showing that with LCCSDT and beyond, we have prospects of exponential advantage with the prototypical HHL algorithm. Improved near-optimal approaches such as the CKS algorithm, exhibit a speed-up even at the LCCSD level of theory. 

We also briefly examine gate counts for HHL-LCC, and discuss pre-processing costs of QLS-LCC computations as an important remaining bottleneck in an end-to-end workflow. Developing efficient quantum alternatives to the latter remains an open problem. Finally, we carry out proof-of-concept numerical simulations on the model molecules that we consider, to show that the HHL-icMRLCCSD method reproduce benchmark ground state energies with good accuracy, demonstrating the practical viability of the framework. 

Looking ahead, QLS-LCC could play a complementary role alongside the high-accuracy QPE-based active space treatments on large molecular systems. Investigating such QPE-QLS workflows, extending our scaling analyses to larger molecular systems and higher excitation ranks, as well as developing quantum pre-processing techniques, while a tall order, constitute promising directions for future studies. 


\begin{acknowledgements}
We acknowledge the National Supercomputing Mission (NSM) for providing computing resources of `PARAMSiddhi-AI', under National PARAM Supercomputing Facility (NPSF), C-DAC, Pune, and supported by the Ministry of Electronics and Information Technology (MeitY) and Department of Science and Technology (DST), Government of India. VSP acknowledges support from CRG grant (CRG/2023/002558). 

PBT and VSP thank Prof. Debashis Mukherjee for his support in the initial days of the work. VSP and PBT also thank Prof. Andreas Koehn for valuable help with setting up the icMRLCC computations and helpful comments on the manuscript, as well as VSP's group members for carefully reading the manuscript. PBT acknowledges Prof. Ankan Paul for providing access to Molpro at the Indian Association for the Cultivation of Science (IACS), and Dr. Rounak Nath for Molpro access support. PBT acknowledges Dr. Dibyajyoti Chakravarti for useful discussions on icMRCC theories. 
\end{acknowledgements} 

\bibliography{references}

\clearpage 

\onecolumngrid 

\appendix 

\renewcommand{\thefigure}{A\arabic{figure}}
\renewcommand{\thetable}{A\arabic{table}}

\setcounter{figure}{0}
\setcounter{table}{0} 

\begin{table*}[t]
\caption{Acronyms used throughout this work, arranged in alphabetical order in this table. }
\label{tab:acronyms}
\begin{ruledtabular}
\begin{tabular}{ll}
Acronym & Definition \\
\hline
AO & Atomic orbital \\ 
BCH & Baker-Campbell-Haussdorf \\ 
CASCI & Complete active space configuration interaction \\
CASSCF & Complete active space Self-consistent field \\
CC & Coupled cluster \\
CCSD & Coupled cluster with singles and doubles \\
CCSD(T) & Coupled cluster with singles, doubles, and perturbatively included partial triples \\
CG & Conjugate gradient \\ 
CI & Configuration interaction \\ 
CISD & Configuration interaction with singles and doubles \\
CKS & Childs-Kothari-Somma \\ 
FCI & Full configuration interaction \\
HF & Hartree-Fock \\
HHL & Harrow-Hassidim-Lloyd \\
icMR & Internally contracted multi-reference \\ 
icMRCI & Internally contracted multi-reference configuration interaction \\
icMRCISD & Internally contracted multi-reference configuration interaction with singles and doubles \\
icMRLCC & Internally contracted multi-reference linearized coupled cluster \\
icMRLCCSD & Internally contracted multi-reference linearized coupled cluster with singles and doubles \\
icMRLCCSDT & Internally contracted multi-reference linearized coupled cluster with singles, doubles, and triples \\
LCC & Linearized coupled cluster \\
LCCSD & Linearized coupled cluster with singles and doubles \\
LCCSDT & Linearized coupled cluster with singles, doubles, and triples \\
LCCSDTQ & Linearized coupled cluster with singles, doubles, triples, and quadruples \\
MO & Molecular orbital \\ 
MR & Multi-reference \\
MRLCC & Multi-reference linearized coupled cluster \\
NISQ & Noisy intermediate scale quantum \\
PEC & Potential energy curve \\
QLS & Quantum linear solver \\
QLS-SRLCC & Quantum linear solver for single-reference linearized coupled cluster \\
QPE & Quantum phase estimation \\
SR & Single-reference \\
SRLCC & Single-reference linearized coupled cluster \\
SRLCCSD & Single-reference linearized coupled cluster with singles and doubles \\
SRLCCSDT & Single-reference linearized coupled cluster with singles, doubles, and triples \\
SRLCCSDTQ & Single-reference linearized coupled cluster with singles, doubles, triples, and quadruples \\
VQE & Variational quantum eigensolver \\
\end{tabular}
\end{ruledtabular}
\end{table*} 

\clearpage 

\section{The HHL algorithm} \label{app:hhl}

We begin by expanding $A$ in terms of its eigenvectors, $\ket{u_i}$, and eigenvalues, $\lambda_i$, to get the expression: 

\begin{equation}
    A = \sum_i \lambda_i \ket{u_i} \bra{u_i}. 
\end{equation} 

Furthermore, $\ket{b}$ can be expanded in the eigenbasis $\{\ket{u_i}\}$ of $A$ in the following way: 
\begin{equation}
    \ket{b} = \sum_i b_i \ket{u_i}; \sum_i |b_i|^2 = 1, 
\end{equation} 

combining both of which leads to: 

\begin{equation}
    \ket{x} = A^{-1} \ket{b} = \sum_i b_i \lambda_i^{-1} \ket{u_i}.
\end{equation} 

The HHL algorithm is therefore designed to extract the eigenvalues of $A$, invert them as shown in the above equation, and extract $\ket{x}$ (or an approximation to it). We note that while HHL places no restrictions on $A$ being Hermitian, we do so above since a system of linear equations involving a non-Hermitian matrix can always be recast into one where the resulting matrix is Hermitian. 

We now enumerate the different steps involved in the HHL algorithm: 

\begin{enumerate} 
\item We begin with three registers: the state register composed of $n_b$ qubits and prepared in the state $\ket{b} = \frac{\sum_i b_i \ket{u_i}}{\|\sum_i b_i \ket{u_i}\|}$, the clock register containing $n_r$ qubits intended to be used as QPE ancillas and initialized to $\ket{0^{n_r}}$, and a single qubit register for the HHL ancilla. 
\item Upon applying QPE on $\ket{0^{n_r}} \otimes \ket{b}$, the phases of $e^{i A t}$ are collected in the form $\frac{\lambda_i t}{2 \pi}\ \forall j \in [1,N]$, that is, 

\begin{equation}
\text{QPE}(\ket{0^{n_r}} \otimes \ket{b}) = \sum_i b_i \bigg|\frac{2^{n_r}\lambda_i t}{2\pi}\bigg\rangle \otimes \ket{u_i}. 
\end{equation} 

The right-hand side can be expressed as: 

\begin{equation} 
\sum_i b_i \ket{\tilde{\lambda}_i} \otimes \ket{u_i}, 
\end{equation} 

where $\tilde{\lambda}_i = \frac{2^{n_r}\lambda_i t}{2\pi}$ is an approximate to the rescaled eigenvalues $\frac{\lambda_i t}{2 \pi}$ in decimal representation.
\item Next, we perform the controlled rotation gate $e^{-i \theta Y}$ with $\theta_i = \arcsin{\left(\frac{C}{\tilde{\lambda}_i}\right)}$. $C$ is a scalar that can be tuned such that the probability of obtaining the outcome $\ket{1}$ at the HHL ancilla is maximal. For our numerical calculations, we set the standard choice of $C=\lambda_{min}$ for convenience, but we note that a suitable rescaling of $A$ in the QPE step, especially for diagonally dominant matrices such as those in chemistry, can circumvent this inefficiency \cite{Nishanth2023}. At the end of this step, we obtain: 

\begin{equation}
\sum_i b_i \left(\sqrt{1-\frac{C^2}{\tilde{\lambda}^2_i}} \ket{0} + \frac{C}{\tilde{\lambda}_i} \ket{1} \right) \otimes \ket{\tilde{\lambda}_i} \otimes \ket{u_i}.
\end{equation} 

\item Uncompute the register $\ket{\tilde{\lambda}_i}$ by applying QPE$^\dagger$ on $ \sum_i b_i \ket{\tilde{\lambda}_i} \otimes \ket{u_i}$ to obtain: 

\begin{equation}
\sum_i b_i \left(\sqrt{1-\frac{C^2}{\tilde{\lambda}^2_i}} \ket{0} + \frac{C}{\tilde{\lambda}_i} \ket{1} \right) \otimes \ket{0^{n_r}} \otimes \ket{u_i}.
\end{equation} 

\item Finally, we measure the HHL ancilla in the computational basis and post-select the state outcome $\ket{1}$. The obtained normalized solution appears on the state register as follows: 

\begin{equation}
\ket{\tilde{x}} = \frac{1}{\sqrt{\sum_i \left|\frac{b_i C}{\tilde{\lambda}_i} \right|^2}}
\sum_i \frac{b_i C}{\tilde{\lambda}_i} \ket{u_i}. 
\end{equation}
\end{enumerate} 

\begin{figure}[!h] 
\centering
   \includegraphics[scale=0.65]{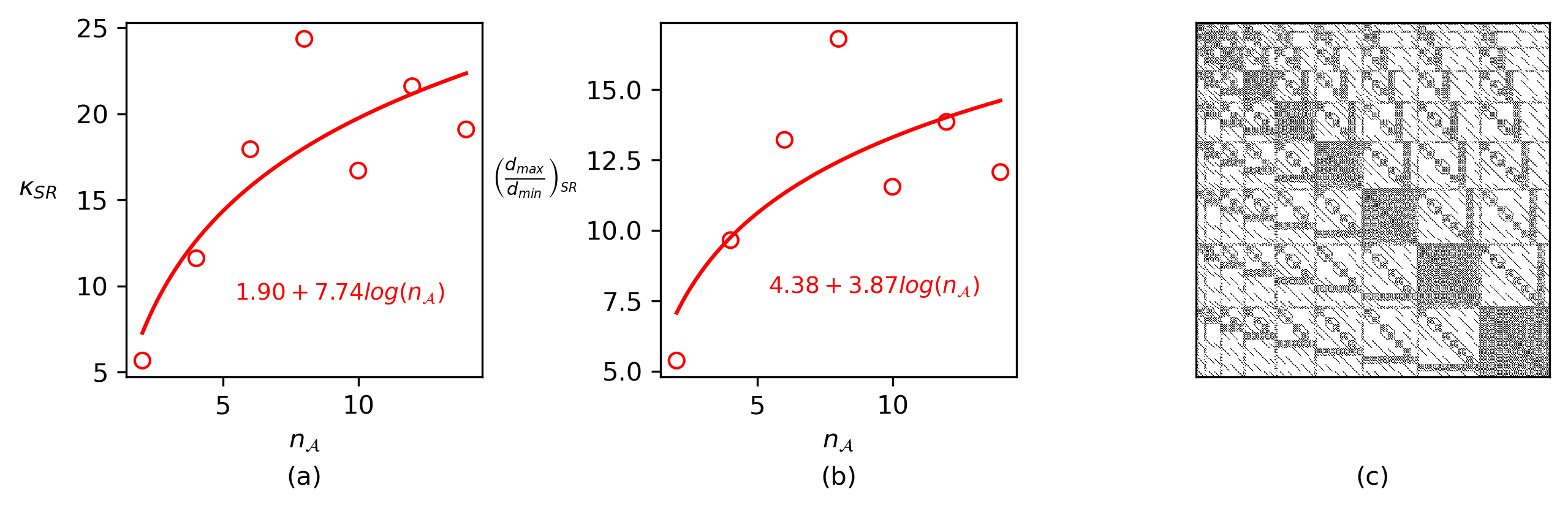}  
\caption{The figure depicts the following: (a) $\kappa$ versus $n_v$, (b) $d_{\mathrm{max}}\big/d_{\mathrm{min}}$ versus number of atoms, $n_{\mathcal{A}}$, and (c) the $A$ matrix heatmap at $n_{\mathcal{A}}=6$ ($A$ matrix size of $405\times405$), all for $H_6$ chain picked as a representative example and in the SRLCCSD level of theory. } \label{overallchain:h} 
\end{figure} 

\begin{figure}[!h] 
\centering
   \includegraphics[scale=0.65]{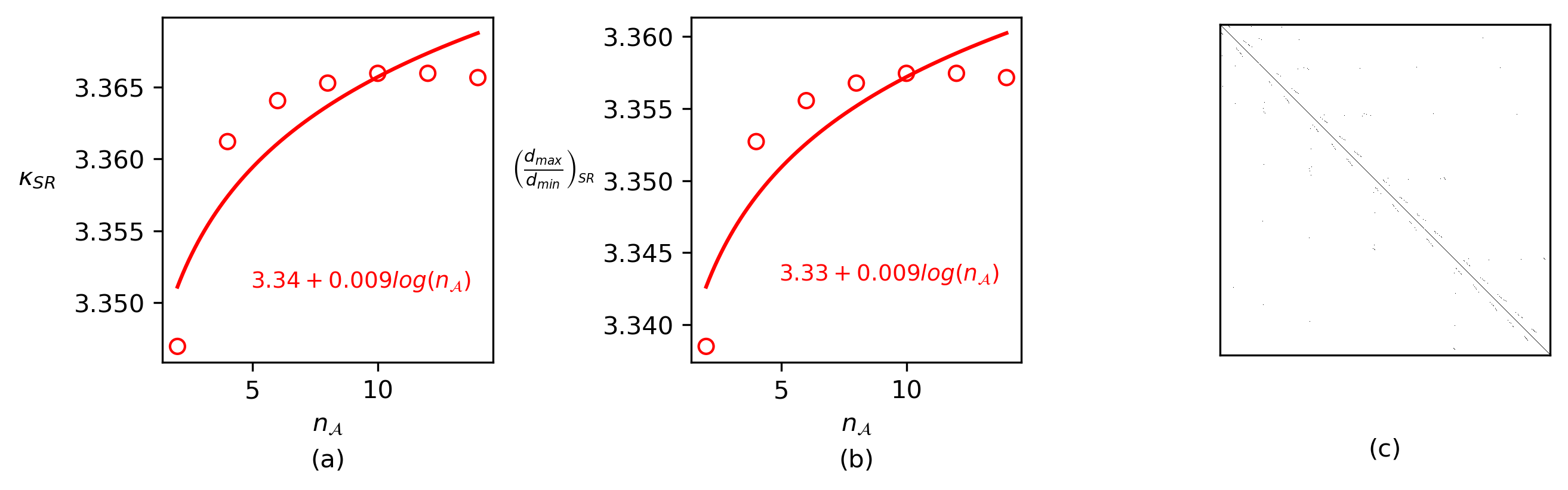}  
\caption{Figure illustrating the behaviour of the same quantities as in the earlier one, this time for the $He_6$ chain ($A$ matrix size of $702\times 702$) chosen for the representative example. The presence of a diffuse pattern is not obvious at this scale, unless zoomed in. } \label{overallchain:he} 
\end{figure} 

\begin{figure}[!h] 
\centering
   \includegraphics[scale=0.65]{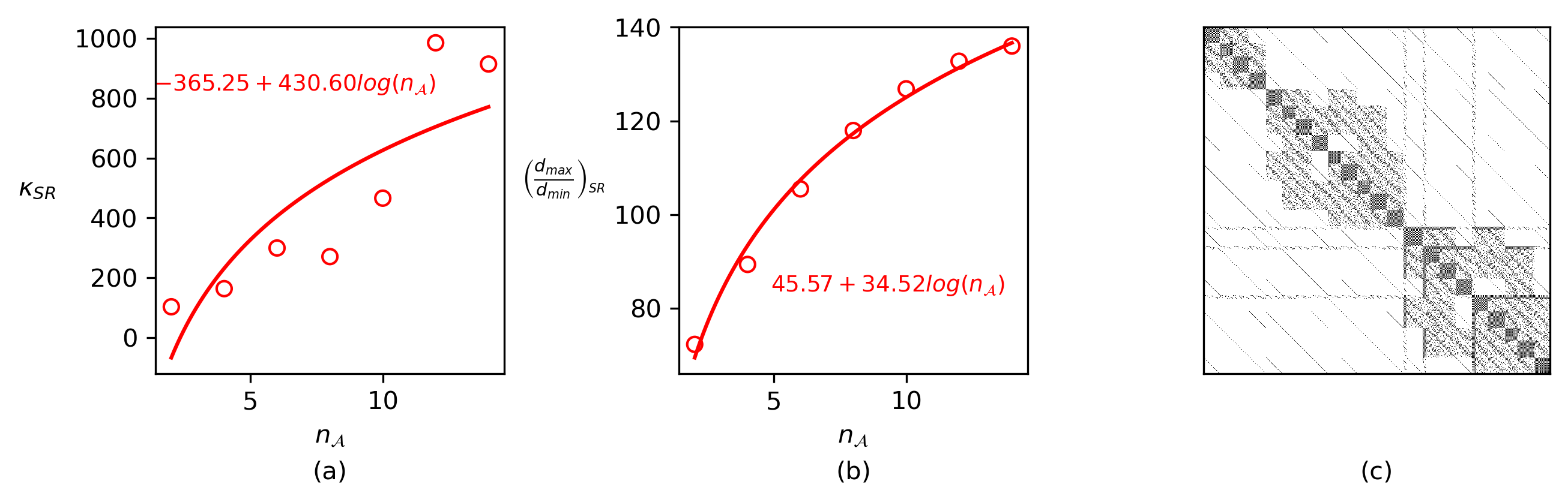}  
\caption{The same data as the earlier two figures, but for the $Li_6$ chain ($A$ matrix size of $953\times 953$). The notations are the same as in the earlier two figures. } \label{overallchain:Li} 
\end{figure} 

\begin{figure}[!h] 
\centering
   \includegraphics[scale=0.65]{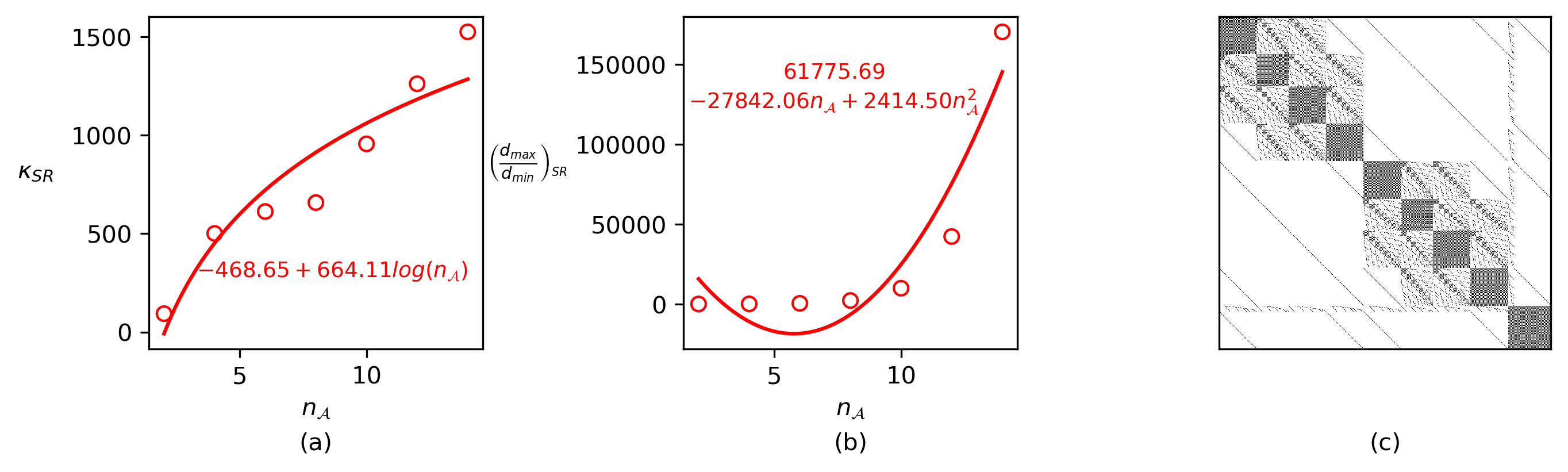}  
\caption{Illustration of the behaviour of the same quantities as in the three earlier figures, for the case of $Be_6$ molecular chain ($A$ matrix size of $690 \times 690$). } \label{overallchain:Be} 
\end{figure} 

\begin{figure}[!h] 
\centering
   \includegraphics[scale=0.55]{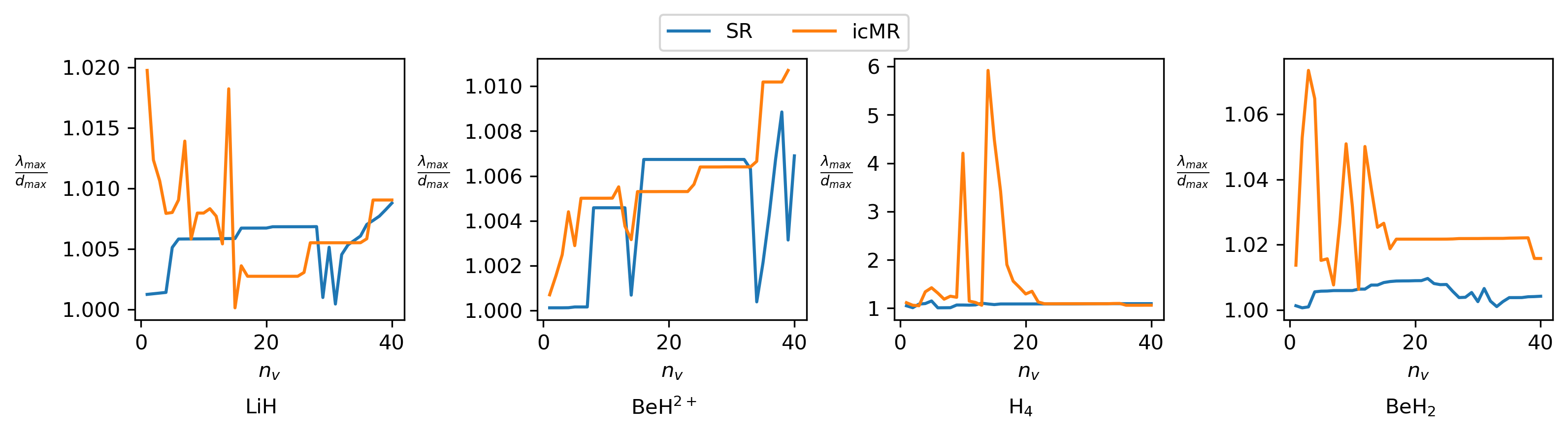}  
\caption{This figure depicts the ratios $\frac{\lambda_{max}}{d_{max}}$ for our four molecules.} \label{lambdamaxbydmax} 
\end{figure} 

\begin{figure}[!h] 
\centering
   \includegraphics[scale=0.55]{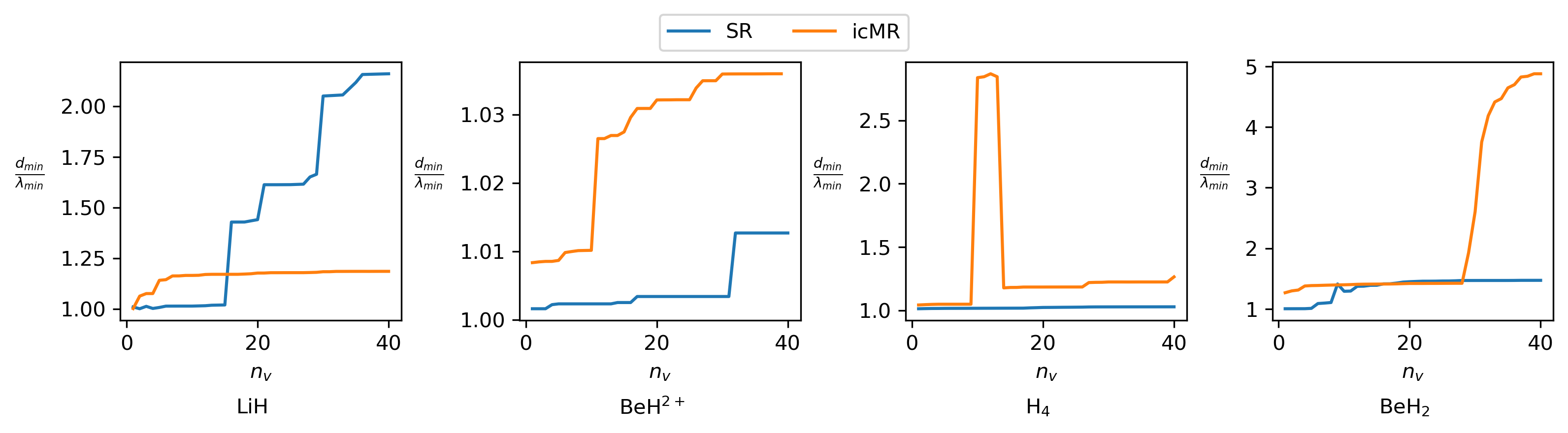}  
\caption{This figure depicts the ratios $\frac{d_{min}}{\lambda_{min}}$ for our four molecules.} \label{dminbylambdamin} 
\end{figure} 
\begin{figure}[!h] 
\centering
   \includegraphics[scale=0.6]{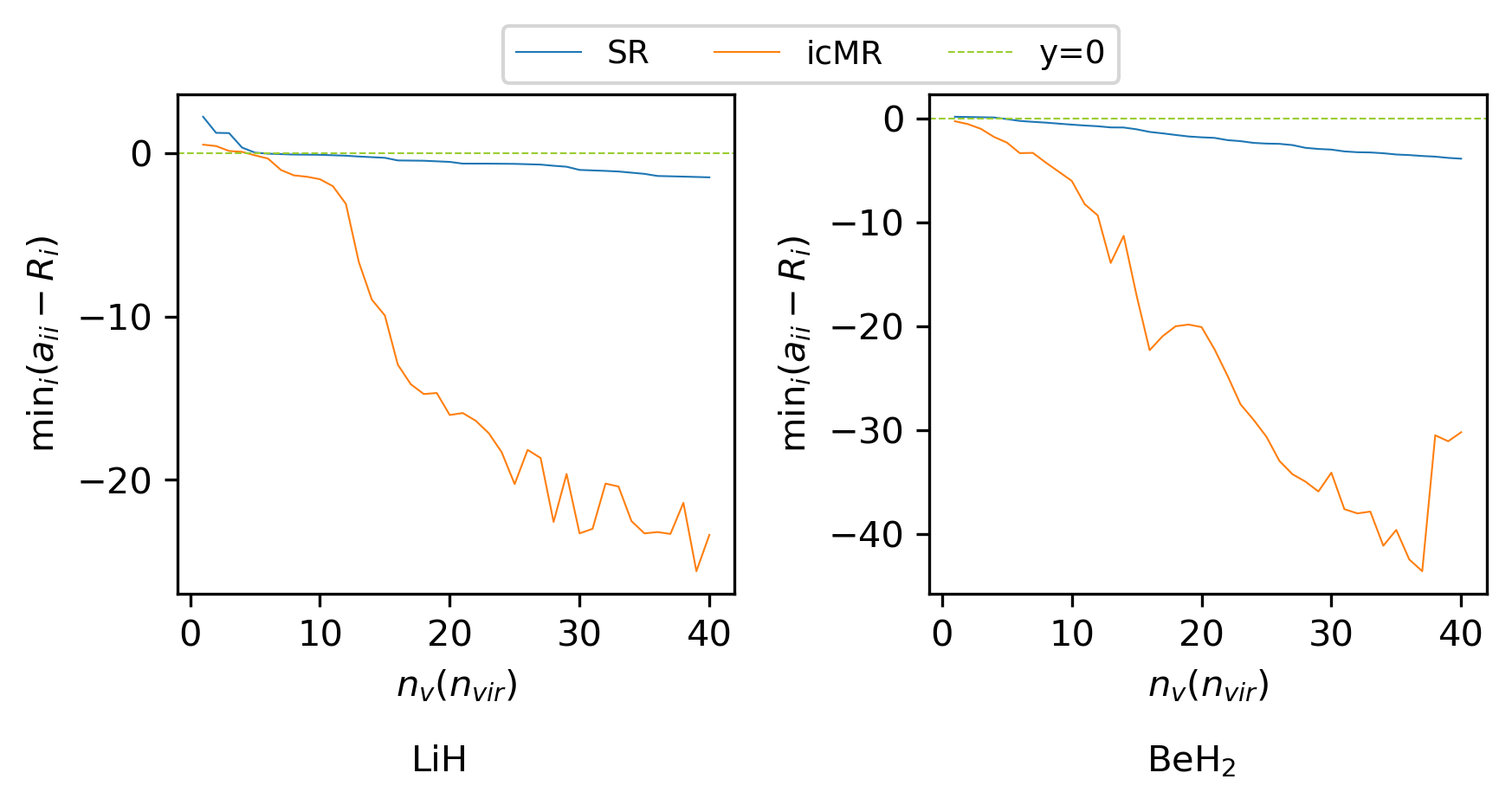}  
\caption{This figure depicts $\text{min}_i (a_{ii} - R_i) $ vs the number of virtuals $n_v (n_{vir})$ included for SR and the icMR cases.} \label{Aii-Ri_srmr:lihbeh2} 
\end{figure} 

\begin{figure}[!h] 
\centering
   \includegraphics[scale=0.55]{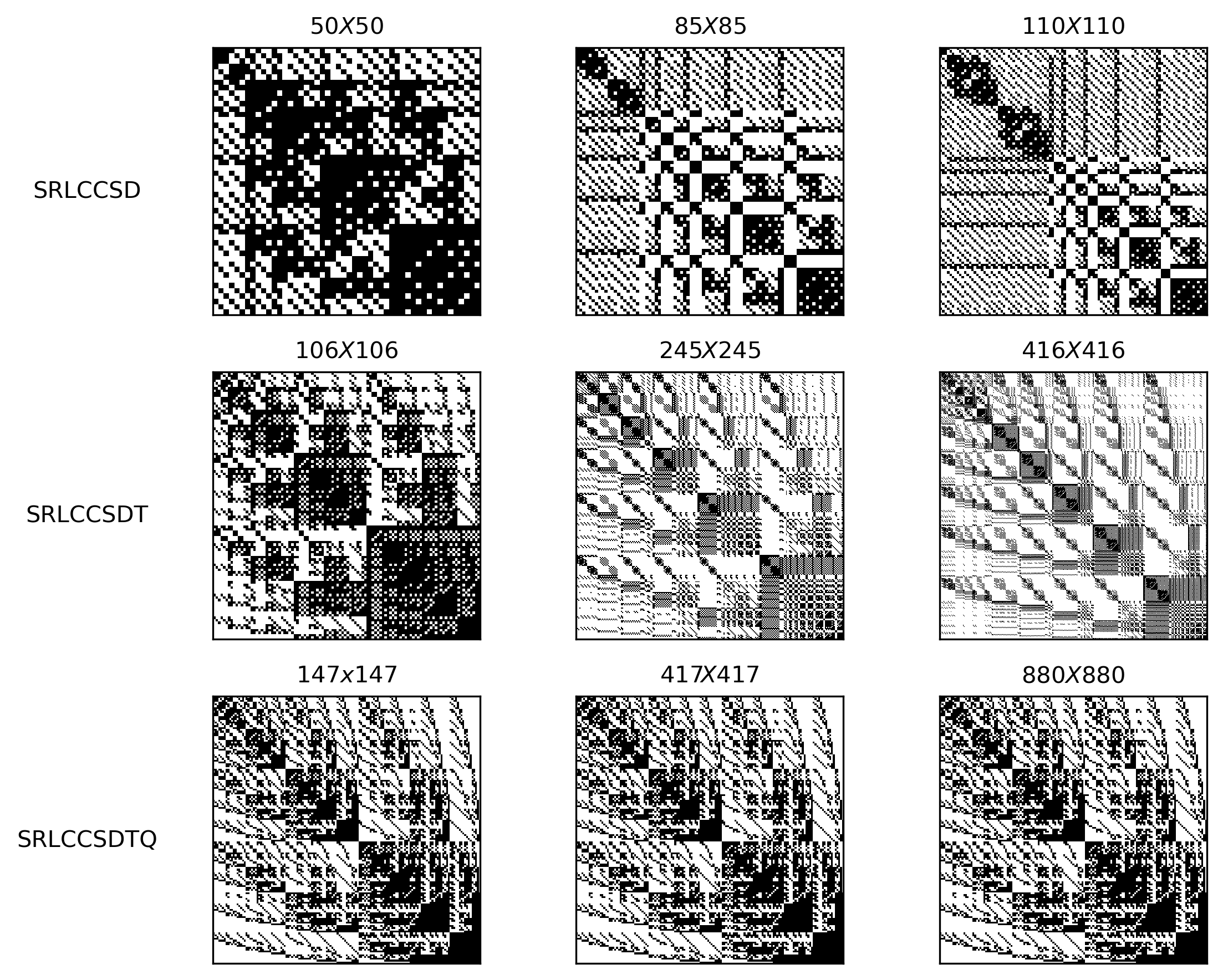}  
\caption{This figure depicts heatmaps of the LiH molecule for three different matrix sizes (mentioned at the top of every heatmap) and for  SRLCCSD, SRLCCSDT, and SRLCCSDTQ.} \label{heatmapsr:lih} 
\end{figure} 

\begin{figure}[!h] 
\centering
   \includegraphics[scale=0.55]{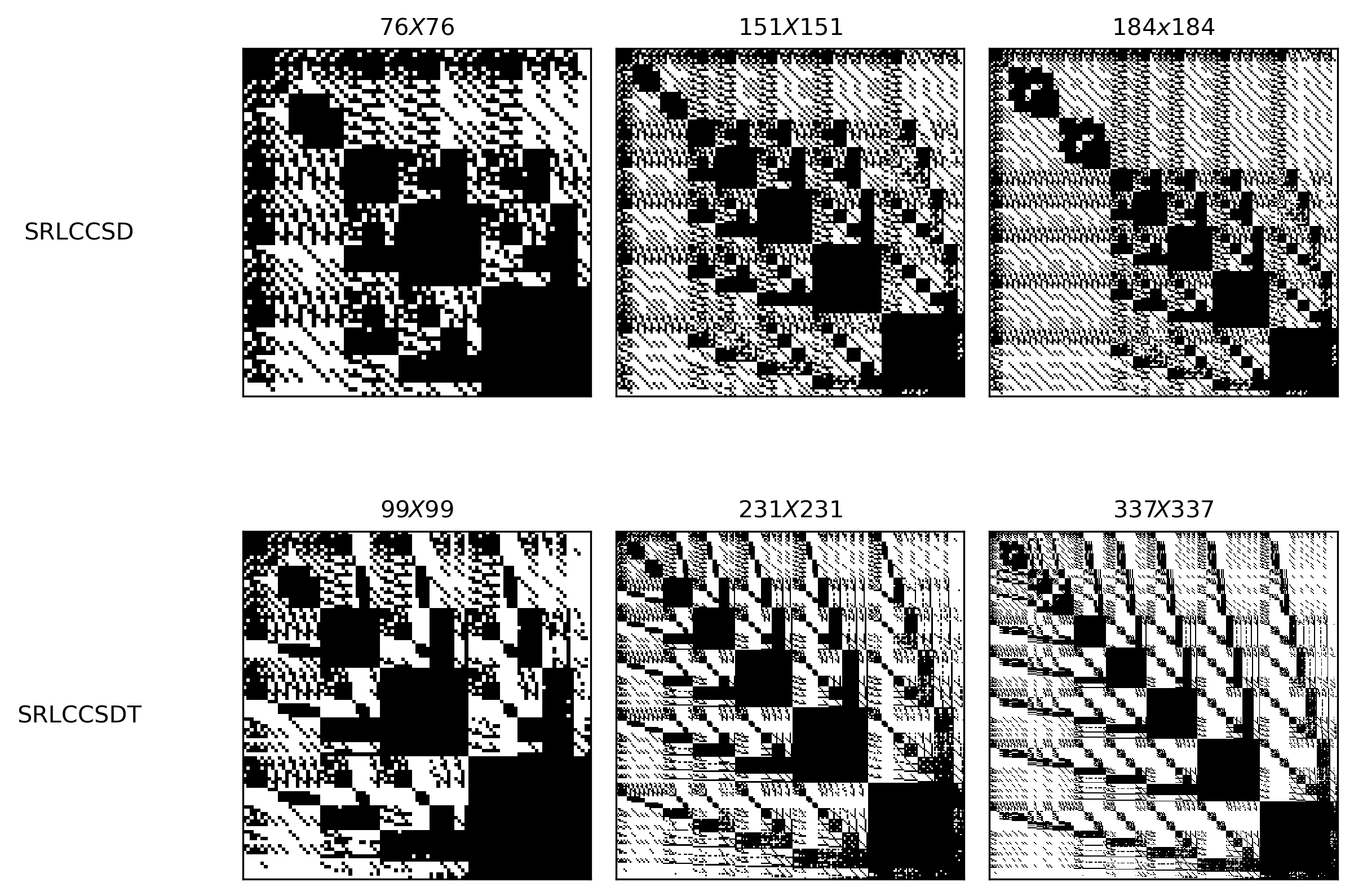}  
\caption{This figure depicts heatmaps of the BeH$^{2+}$ molecule for three different matrix sizes (mentioned at the top of every heatmap) and for SRLCCSD and SRLCCSDT } \label{heatmapsr:beh2p} 
\end{figure} 

\begin{figure}[!h] 
\centering
   \includegraphics[scale=0.55]{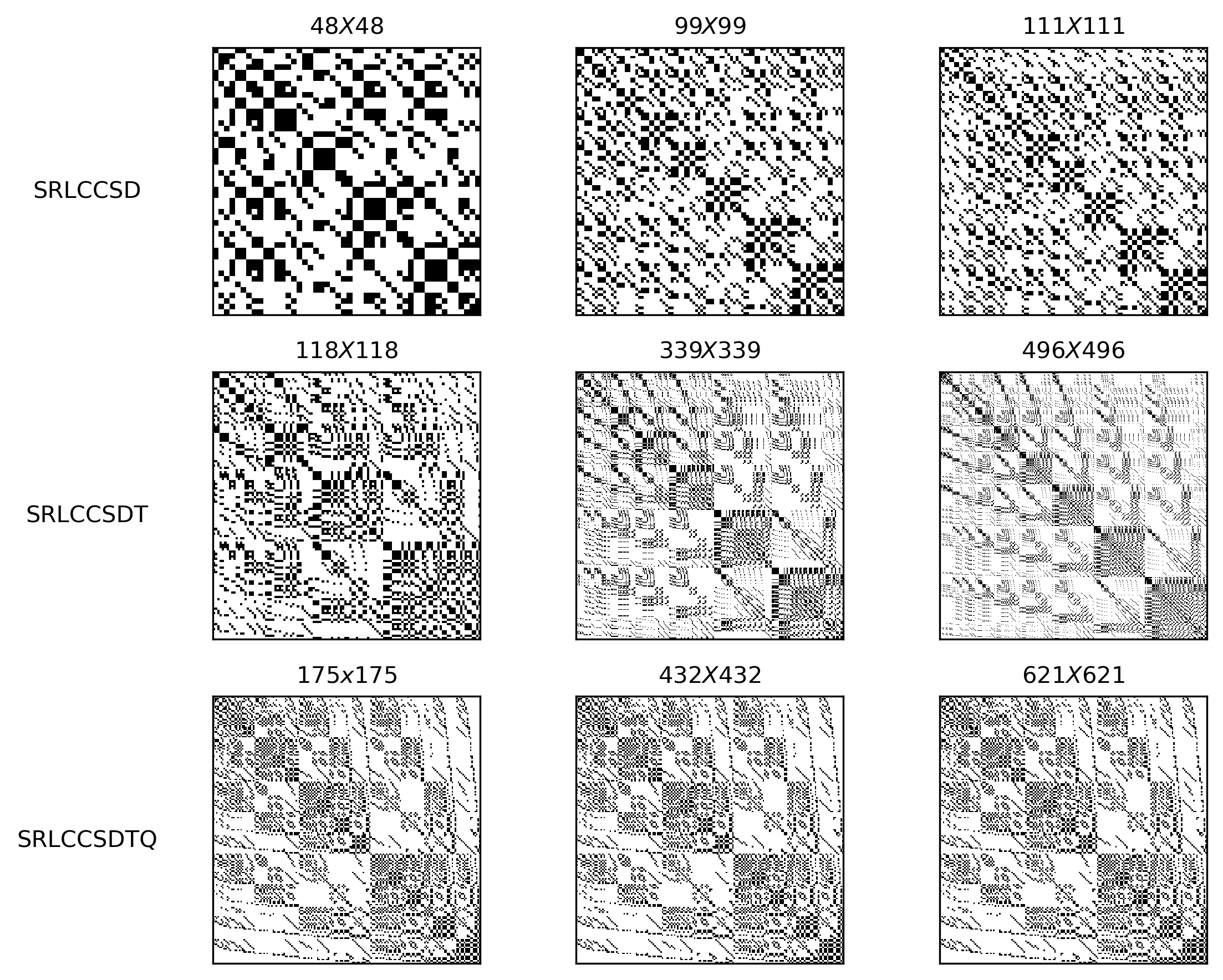}  
\caption{This figure depicts heatmaps of the H$_4$ molecule for three different matrix sizes (mentioned at the top of every heatmap) and for  SRLCCSD, SRLCCSDT, and SRLCCSDTQ.} \label{heatmapsr:h4} 
\end{figure} 

\begin{figure}[!h] 
\centering
   \includegraphics[scale=0.55]{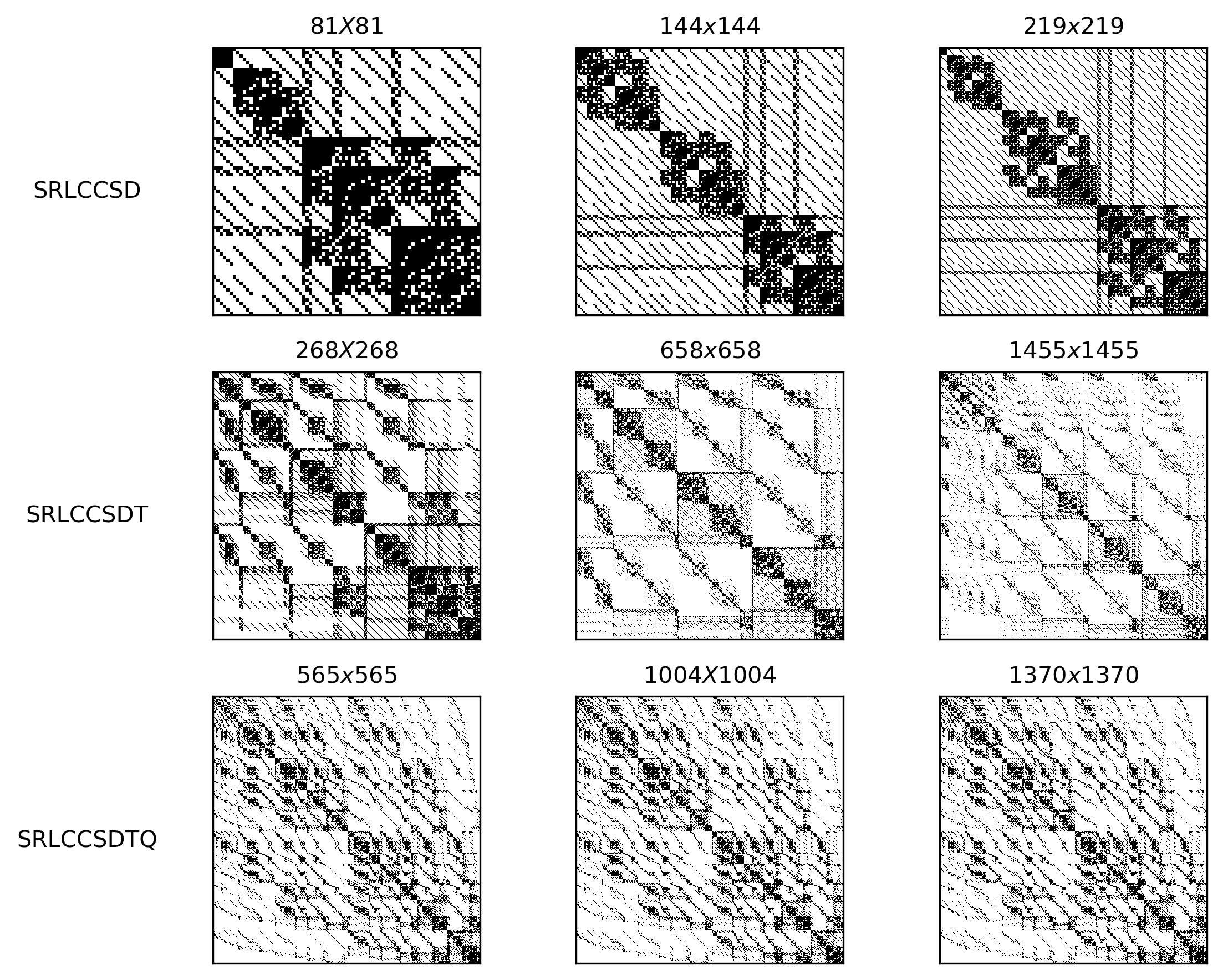}  
\caption{This figure depicts heatmaps of the BeH$_2$ molecule for three different matrix sizes (mentioned at the top of every heatmap) and for SRLCCSD, SRLCCSDT, and SRLCCSDTQ.} \label{heatmapsr:beh2} 
\end{figure} 

\begin{figure}[!h] 
\centering
   \includegraphics[scale=0.65]{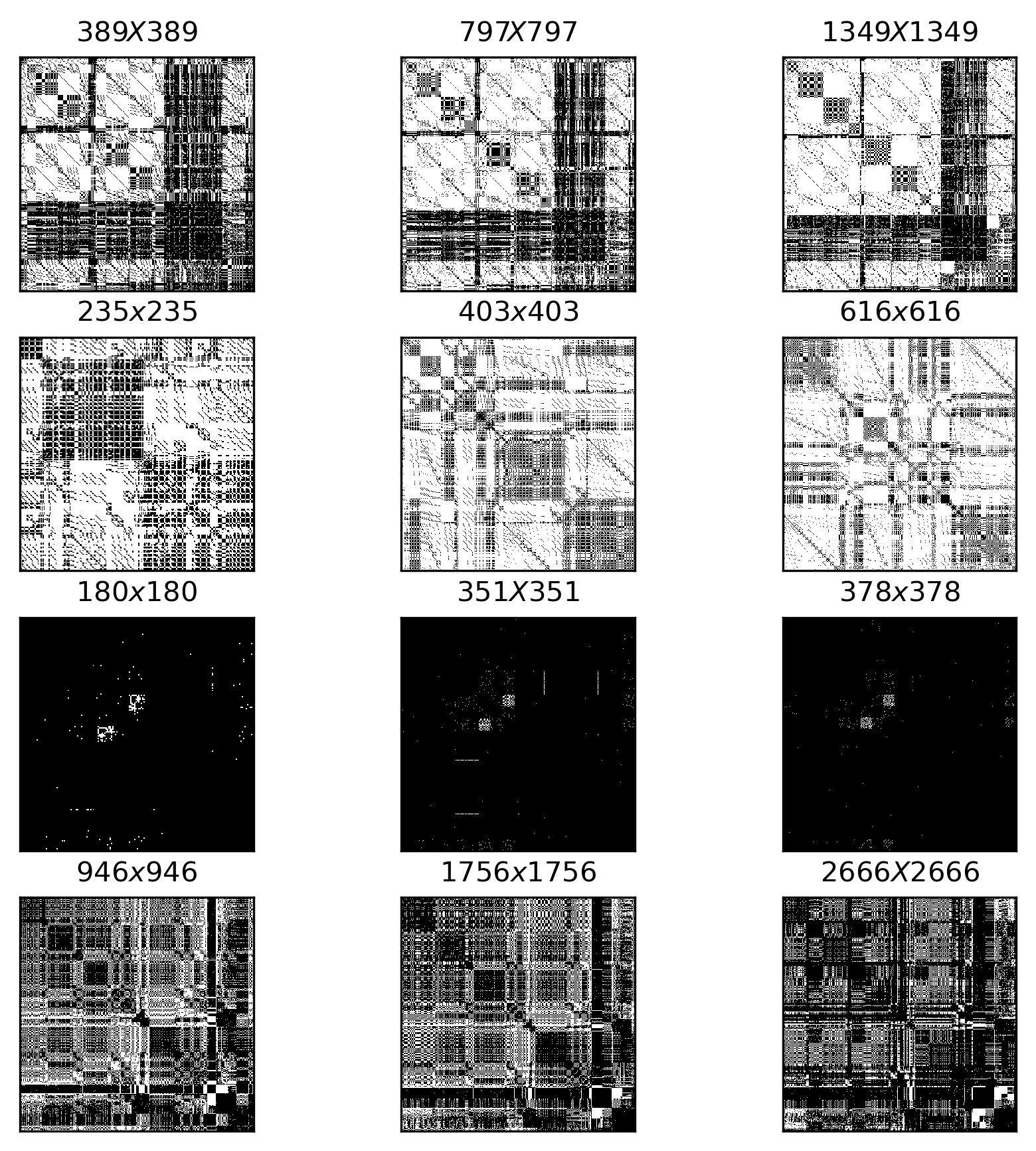}  
\caption{This figure depicts heatmaps of our 4 molecules in the case of icMRLCCSD. The first, second, third, and fourth rows respectively correspond to LiH, BeH$^{2+}$, H$_4$, and BeH$_2$.} \label{heatmapmr:allmolecules} 
\end{figure} 

\begin{figure}[!h] 
\centering
   \includegraphics[scale=0.55]{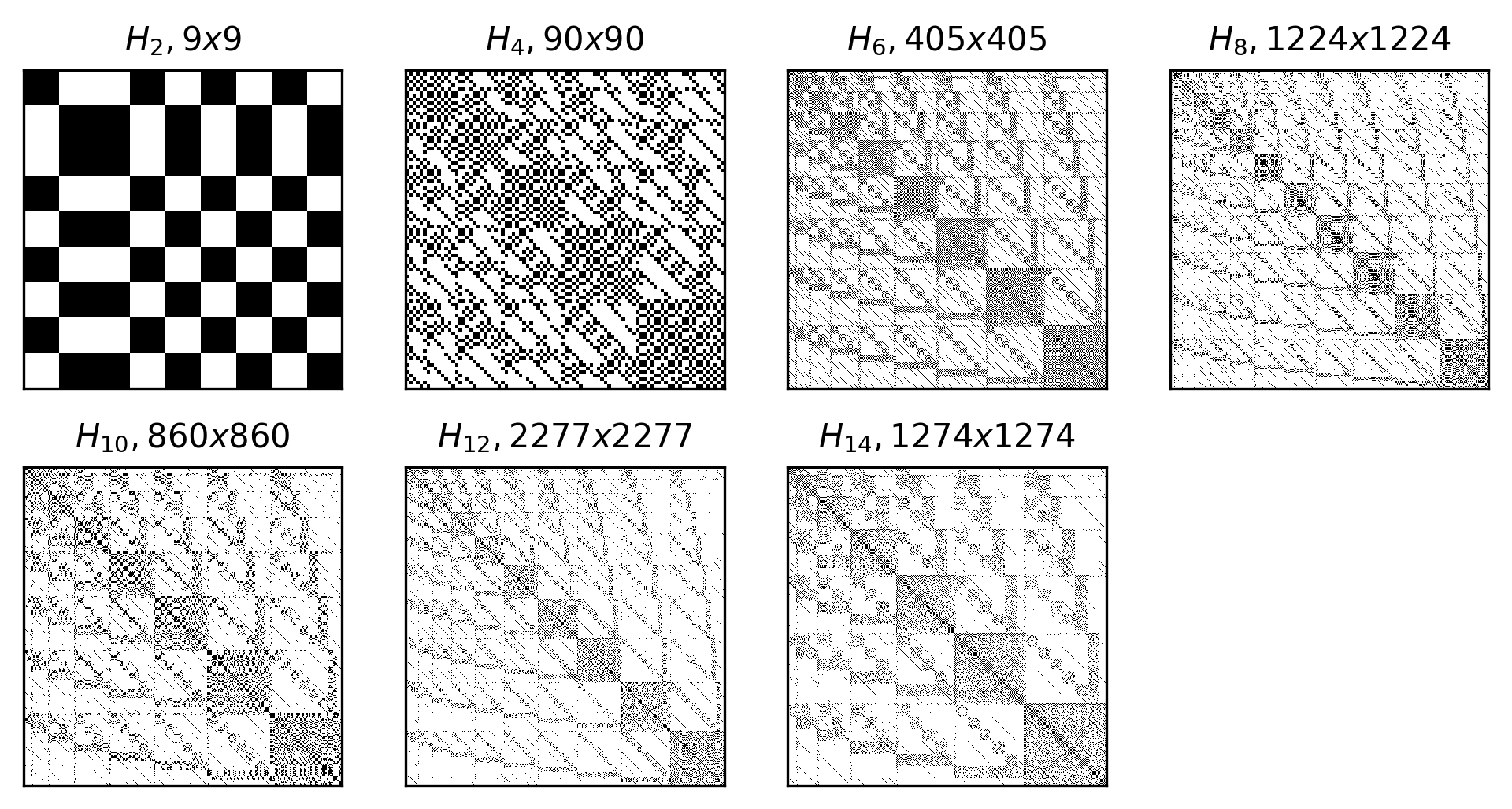}  
\caption{This figure depicts heatmaps of Hydrogen chains going from H$_2$ to H$_{14}$ in the case of  SRLCCSD.} \label{heatmap:h} 
\end{figure} 

\begin{figure}[!h] 
\centering
   \includegraphics[scale=0.55]{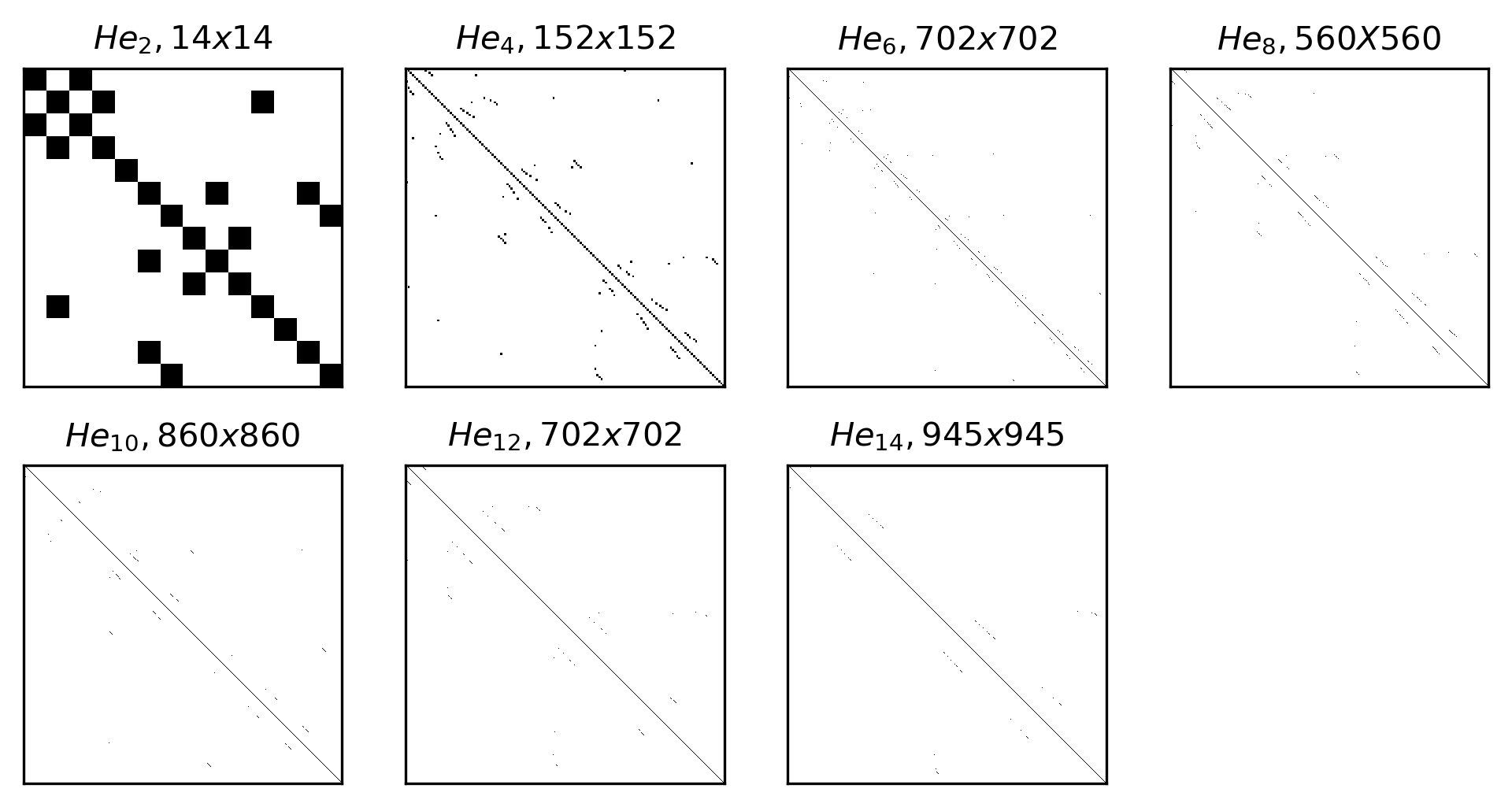}  
\caption{This figure depicts heatmaps of Helium chains going from He$_2$ to He$_{14}$ in the case of SRLCCSD.} \label{heatmap:he} 
\end{figure} 
\begin{figure}[!h] 
\centering
   \includegraphics[scale=0.55]{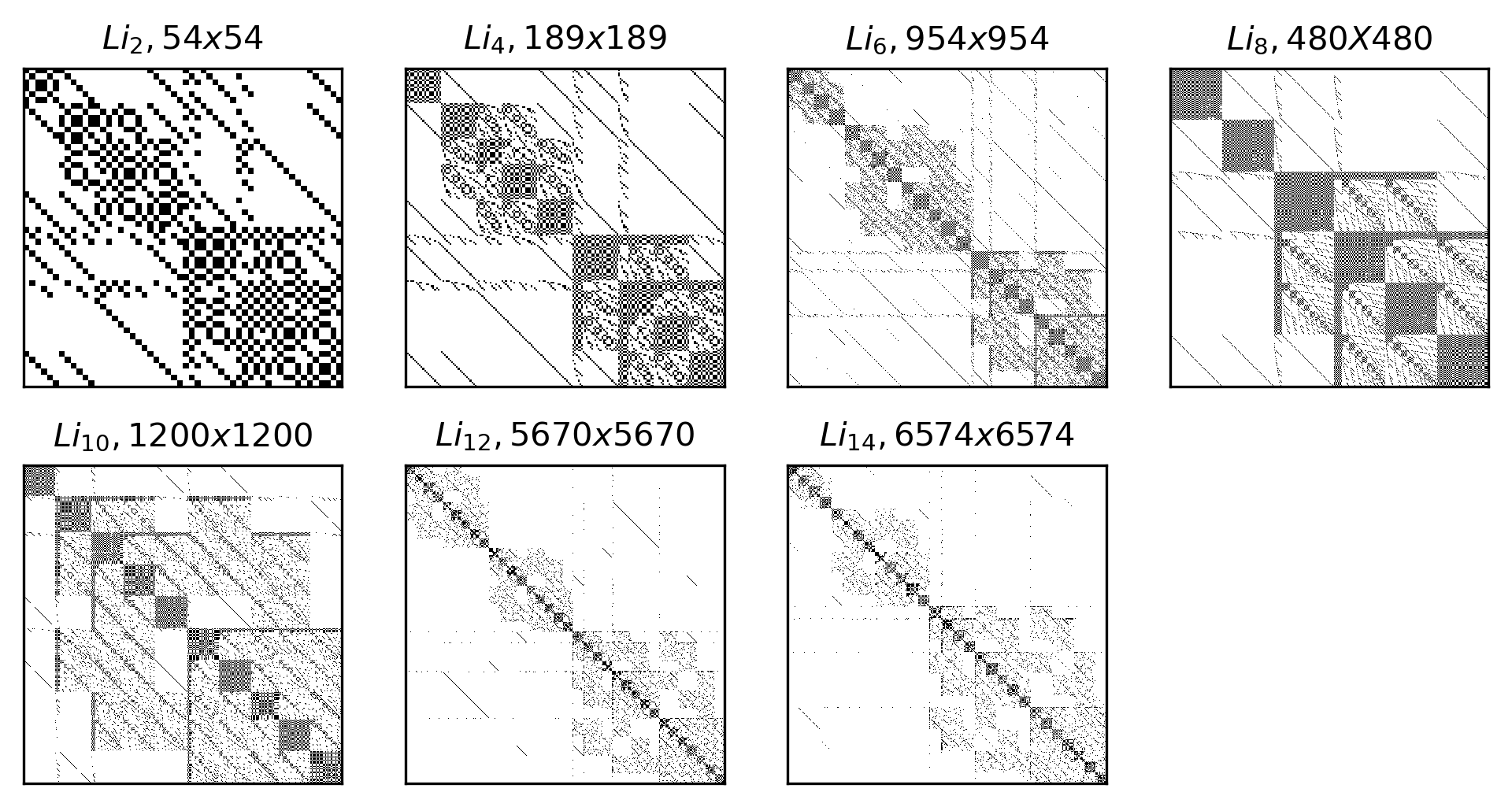}  
\caption{This figure depicts heatmaps of Lithium chains going from Li$_2$ to Li$_{14}$ in the case of  SRLCCSD.} \label{heatmap:li} 
\end{figure} 

\begin{figure}[!h] 
\centering
   \includegraphics[scale=0.55]{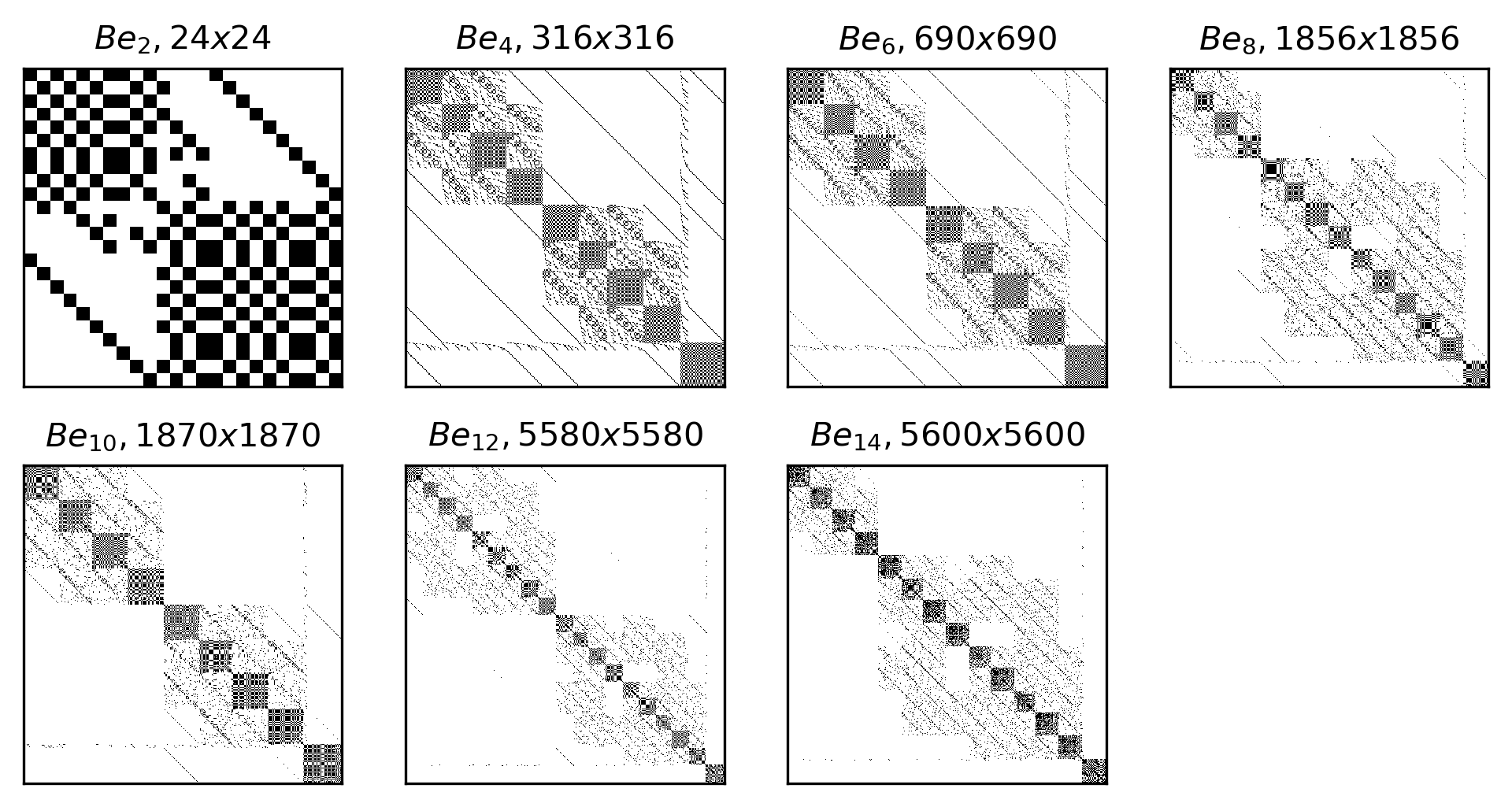}  
\caption{This figure depicts heatmaps of Beryllium chains going from Be$_2$ to Be$_{14}$ in the case of SRLCCSD.} \label{heatmap:be} 
\end{figure} 

\begin{figure}[!h] 
\centering
   \includegraphics[scale=0.52]{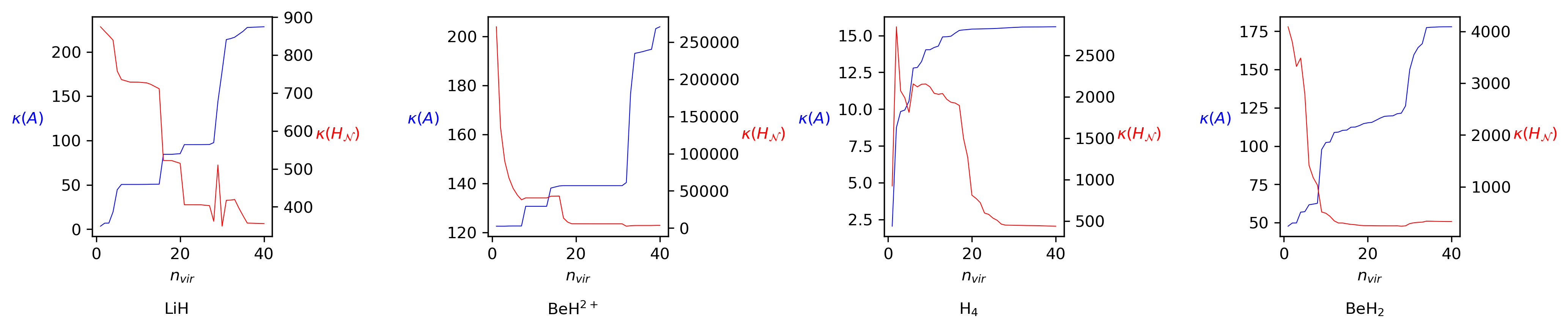}  
\caption{This figure depicts condition numbers of the $H_{\mathcal{N}}$, $\kappa(H_\mathcal{N})$, and condition number of $A$, $\kappa(A)$ vs the number of virtual orbitals ($n_{vir}$) and in the SRLCCSD case for our 4 molecules.} \label{kappaA_kappaHsr:allmolecules} 
\end{figure} 

\begin{figure}[!h] 
\centering
   \includegraphics[scale=0.5]{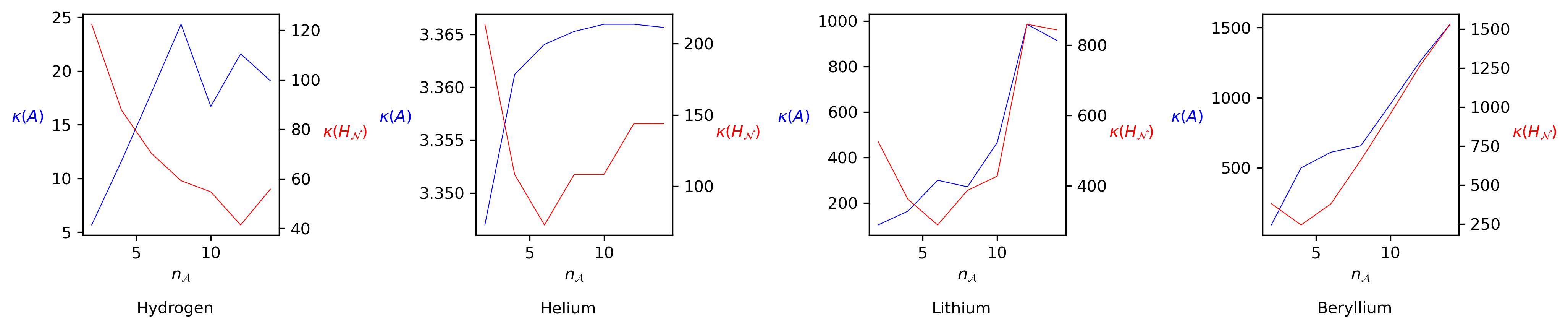}  
\caption{This figure depicts condition numbers of the $H_{N}$, $\kappa(H_\mathcal{N})$, and condition number of $A$, $\kappa(A)$ vs the number of atoms ($n_{\mathcal{A}}$) included in the chain and in the SRLCCSD case.} \label{kappaA_kappaHmr:chains} 
\end{figure}

\begin{figure}[!h] 
\centering
   \includegraphics[scale=0.5]{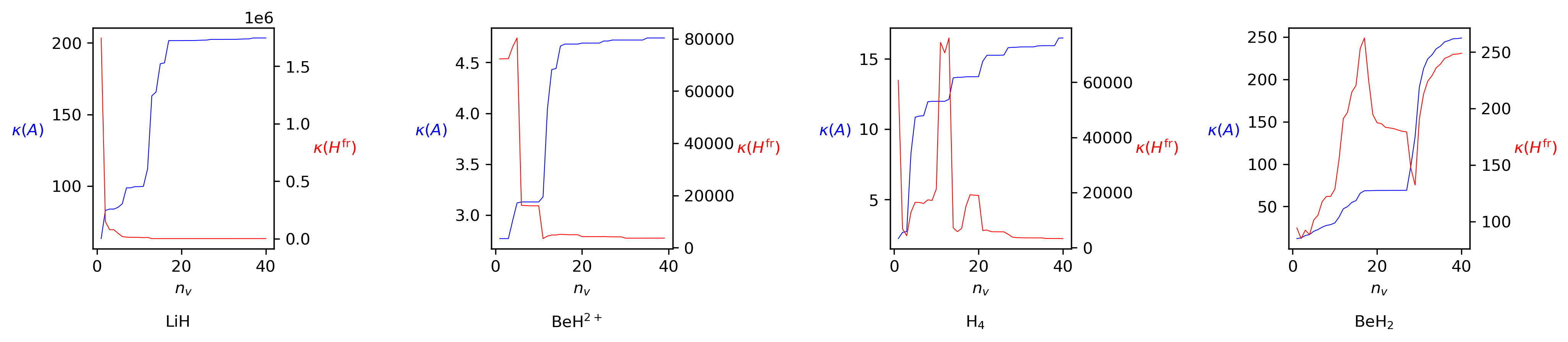}  
\caption{This figure depicts condition numbers of the $H^{\text{fr}}$, $\kappa(H^{\text{fr}})$, and condition number of $A$, $\kappa(A)$ vs the number of virtual orbitals ($n_{v}$) and in the icMRLCCSD case for our 4 molecules.} \label{kappaA_kappaHmr:allmolecules} 
\end{figure}

\clearpage

\vspace{1cm}
\begin{longtable}{cccccc}
\caption{Table containing data of energy differences (in mHa) between the total energy of each method and the FCI energy. Geometry refers to bond lengths (in Bohrs) for LiH, the distance $d$ for H$_4$ and labels A to I for BeH$_2$ (See Table \ref{tabbeh2}).}\label{taballmolecules} \\ 
\hline 
\multicolumn{1}{c}{Molecule} &
  Geometry (Bohrs) &
  \multicolumn{4}{c}{$E^{\text{SR}/\text{icMR}} - E_{FCI}$ (mHa)} \\
 &
  &
  CCSD &
  CCSD(T) &
  icMRCISD &
  HHL-icMRLCCSD\\\hline 
  \endfirsthead
  \multicolumn{6}{c}{TABLE \thetable{}: \textit{Continued.}} \\
\hline
  \multicolumn{1}{c}{Molecule} &
  Geometry &
  \multicolumn{4}{c}{$\Delta E \text{ (mHa)}$} \\
 &
  &
  CCSD &
  CCSD(T) &
  icMRCISD &
  HHL-icMRLCCSD \\\hline 
  \endhead
  \hline
  \endfoot
  \hline
  \endlastfoot
 &
   1.880 &
  0.013 &
  0.000 &
  0.002 &
  -0.011 \\
 &
  2.070&
   0.014 &
  0.000 &
  0.002 &
  -0.011 \\
 &
  2.260  &
  0.010 &
  0.000 &
  0.001 &
  -0.011 \\
 &
   2.450 &
  0.010 &
  0.001 &
  0.001 &
  0.014 \\
 &
  2.640 &
   0.010 &
  0.001 &
  0.001 &
  0.012 \\
 &
   2.830 &
   0.010&
  0.001 &
  0.001 &
  -0.025 \\
 &
   3.020 &
   0.011&
   0.001 &
  0.001 &
  -0.007\\
 &
   3.210 &
  0.011 &
   0.001 &
  0.001 &
  -0.013 \\
 &
   3.400 &
   0.012 &
   0.001 &
  0.001 &
  0.004 \\
 &
  3.590 &
   0.001 &
 0.002 &
  0.001 &
  -0.005 \\
 &
  3.770 &
  0.013 &
   0.001 &
  0.001 &
  -0.013 \\
 &
   3.960 &
   0.014 &
   0.002 &
  0.001 &
  -0.045 \\
 &
   4.150 &
   0.016 &
   0.002 &
  0.001 &
  0.038 \\
 &
   4.340 &
   0.017 &
   0.002 &
   0.001 &
  0.007 \\
 &
   4.530 &
   0.019 &
   0.001 &
  0.001 &
  -0.016 \\
 &
 4.720 &
 0.021 &
   0.001 &
  0.001 &
  0.036 \\
 &
 4.910 &
  0.023 &
   0.001&
  0.001 &
  0.000 \\
 &
 5.100 &
 0.026 &
   0.000 &
  0.001 &
  0.042 \\
 &
  5.290 &
   0.030 &
   0.000&
  0.001 &
  0.000 \\
 &
   5.480 &
   0.034 &
   -0.002 &
  0.001 &
  -0.008 \\
 &
  5.660 &
   0.038 &
   -0.004 &
  0.001 &
  0.031 \\
 &
   5.850 &
   0.043 &
   -0.008 &
  0.001 &
  0.002 \\
 &
   6.040 &
  0.048 &
   -0.013 &
  0.001 &
  -0.009 \\
 &
   6.230 &
   0.053 &
   -0.020 &
  0.001 &
  -0.017 \\
 &
  6.420 &
   0.058 &
   -0.029 &
  0.009 &
  0.008 \\
 &
   6.610 &
  0.064 &
    -0.039 &
  0.008 &
  0.001 \\
 &
6.800 &
   0.069 &
   -0.052 &
  0.007 &
  -0.001 \\
 &
  6.990 &
  0.074 &
   -0.065 &
  0.005 &
  -0.023 \\
 &
  7.180 &
  0.079 &
  -0.080&
  0.004 &
  -0.009 \\
 &
   7.360 &
   0.083 &
  -0.095 &
  0.004 &
  -0.009 \\
\multirow{-31}{*}{LiH} &
    7.550 &
   0.087 &
   -0.111 &
  0.003 &
  -0.001 \\\hline
 &
  1.800 &
  0.718 &
 0.005 &
   0.002 &
  0.257 \\
 &
  1.900 &
  1.361 &
 -0.414  &
 0.022 &
0.058
\\
 &
  1.990 &
  1.987&
  -3.164 &
   0.001&
0.033
  \\
 &
  2.010 &
    1.909&
   -3.216 &
   0.001 &
  0.024\\
 \multirow{-3}{*}{H$_4$} 
 &
  2.100 &
  1.084 &
   -0.073 &
   0.016&
   0.072\\
 &
  2.200 &
  0.566 &
   -0.165 &
0.015 &
   0.195 
  \\\hline 
 & A &
 0.290 &
  0.109 &
  0.362 &
  0.081 \\
 &
  B &
  0.278 &
  0.091 &
  0.326 &
  0.039 \\
 &
  C &
  0.397 &
  0.112 &
  0.552 &
  -0.100 \\
 &
  D &
  0.670 &
   0.195 &
  0.168 &
  -0.024 \\
 &
  E &
  1.073 &
   0.349 &
  1.325 &
  -1.454 \\
 &
  F &
  104.843 &
   97.429 &
  1.544 &
  -0.104 \\
 &
  G &
  47.605 &
   47.240 &
  0.624 &
  -0.301 \\
 &
  H &
  20.721 &
  20.560 &
  0.495 &
  -0.296 \\
\multirow{-9}{*}{BeH$_2$} &
  I &
  -0.459 &
  -0.477 &
  0.495 &
  -0.303 \\
\end{longtable} 

\end{document}